\newenvironment{proofsketch}[1][Proof sketch]{%
  \proof[#1]%
}{\endproof}
\definecolor{red}{rgb}{0.83,0,0}
\definecolor{green}{rgb}{0.39,0.83,0}
\definecolor{blue}{rgb}{0,0,0.83}
\definecolor{yellow}{rgb}{0.83,0.67,0}
\definecolor{cyan}{rgb}{0,0.83,0.8}
\newif\iflong
\def\inlineheading#1{\textbf{{#1}. }}
\def\inlineheadingTwo#1{\textsc{{#1}. }}
\def\example#1{\textsc{Example: }\emph{#1}}
\def\exampleCont#1{\textsc{Example (cont.): }\emph{#1}}
\def\exampleSpec#1{\textsc{Example: }\emph{#1}}
\def\nothing {$\aleph$}
\def\something {$%
  \mathrel{\ooalign{\hss$\Diamond$\hss\cr%
  \kern-1pt\raise0ex\hbox{\scalebox{1}{$\not$}}}}$}
\def\challengeRows#1{{#1}_{CR_{\#cr}}}
\def\static#1{{#1}_{c^-}}
\def\corrStandard#1{{#1}_{c^e}}
\def\noCorr#1{{#1}_{c^0}}
\def\corrNoComm#1{{#1}_{c^\text{sr}}}
\def\corrOnlyPartnerSender#1{{#1}_{c^s}}
\def\corrOnlyPartnerReceiver#1{{#1}_{c^r}}
\def\manySess#1{{#1}_{s}}
\def\rgame#1{R\overline{O}\{{#1}\}}
\def\sgame#1{S\overline{O}\{{#1}\}}
\def\EveryButSender {E_{S}}
\def\EveryButRec {E_{R}}
\def\EveryButSenderMsg {E_{SM}}
\def\EveryButMsg {E_{M}}
\def\EveryButSenderRec {E_{SR}}
\def\EveryButReceiverMsg {E_{RM}}
\def\EqualOrNothing {E_{\Diamond}}
\def \bohliPrefixS#1{\sgame{#1}}
\def \bohliSA{\overline{O}}
\def \bohliRSUP{S\overline{O}\{R\overline{O}-|U'|\}}
\def \bohliRWUP {S\overline{O}\{R\overline{O}-H'\}}
\def \bohliRPS{S\overline{O}\{R\overline{O}-P'\}}
\def \bohliRSUU {S\overline{O}\{RF\overline{L}\}}
\def \bohliRWUU {S\overline{O}\{RF\overline{L}-H'\}}
\def \bohliRAN {S\overline{O}\{RF\overline{L}-P'\}}
\def \bohliRWU{S\overline{O}\{RM\overline{L}\}}
\def \bohliRWA {S\overline{O}\{RM\overline{L}-P'\}}
\def \bohliSSUP{R\overline{O}\{S\overline{O}-|U|\}}
\def \bohliSWUP{R\overline{O}\{S\overline{O}-H\}}
\def \bohliSPS{R\overline{O}\{S\overline{O}-P\}}
\def \bohliSSUU{R\overline{O}\{SF\overline{L}\}}
\def \bohliSWUU{R\overline{O}\{SF\overline{L}-H\}}
\def \bohliSAN{R\overline{O}\{SF\overline{L}-P\}}
\def \bohliSWU {R\overline{O}\{SM\overline{L}\}}
\def \bohliSWA{R\overline{O}\{SM\overline{L}-P\}}
\def \bohliRSAc{R\overline{O}}
\def \bohliRSUPc{R\overline{O}-|U'|}
\def \bohliRWUPc{R\overline{O}-H'}
\def \bohliRPSc{R\overline{O}-P'}
\def \bohliRSUUc{RF\overline{L}}
\def \bohliRWUUc{RF\overline{L}-H'}
\def \bohliRANc {RF\overline{L}-P'}
\def \bohliRWUc{RM\overline{L}}
\def \bohliRWAc{RM\overline{L}-P'}
\def \bohliSSAc{S\overline{O}}
\def \bohliSSUPc {S\overline{O}-|U|}
\def \bohliSWUPc{S\overline{O}-H}
\def \bohliSPSc {S\overline{O}-P}
\def \bohliSSUUc {SF\overline{L}}
\def \bohliSWUUc{SF\overline{L}-H}
\def \bohliSANc {SF\overline{L}-P}
\def \bohliSWUc{SM\overline{L}}
\def \bohliSWAc{SM\overline{L}-P}
\def\heviaUO{C\overline{O}}
\def\heviaUL{M\overline{O}[M\overline{L}]}
\def\anoaRA {R\overline{O}[M\overline{O}-|M|]}
\def\anoaRAWOL {R\overline{O}[M\overline{O}]}
\def\anoaREL{(SR)\overline{O}}
\def\anoaUL{(2S)\overline{L}}
\def\gelernterSA{{{R^{H, \tau}_{SA}}} }
\def\gelernterSUL{{{R^{H, \tau}_{S\overline{L}}}}}
\def\newCONF {M\overline{O}-|M|}
\def\newCONFWOL {M\overline{O}}
\def\newSAUO {(SM)\overline{O}}
\def\newRAUO {(RM)\overline{O}}
\def\newSA {(SM)\overline{L}}
\def\newRA {(RM)\overline{L}}
\def \loopixSUO{LS\overline{O}}
\def \loopixRUO{LR\overline{O}}
\def \loopixSUONe{S\overline{O}'}
\def \loopixRUONe{R\overline{O}'}
\def \loopixSRTPU{(SR)\overline{L}}
\def\newAnoaSA {S\overline{O}[M\overline{O}-|M|]}
\def\newAnoaSAWOL {S\overline{O}[M\overline{O}]}
\def\newAnoaUL{(2R)\overline{L}}
\def \bohliSALong{Unobservability}
\def \bohliRSUPLong{Sender/Message Unobservability with Receiver Unobservability leaking User Number}
\def \bohliRWUPLong{Sender/Message Unobservability with Receiver Unobservability leaking Histogram}
\def \bohliRPSLong{Sender/Message Unobservability with Receiver Unobservability leaking Pseudonym}
\def \bohliRSUULong{Sender/Message Unobservability with Receiver-Frequency Unlinkability}
\def \bohliRWUULong{Sender/Message Unobservability with Receiver-Frequency Unlinkability leaking Histogram}
\def \bohliRANLong{Sender/Message Unobservability with Receiver-Frequency Unlinkability leaking Pseudonym}
\def \bohliRWULong{Sender/Message Unobservability with Receiver-Message Unlinkability}
\def \bohliRWALong{Sender/Message Unobservability with Receiver-Message Unlinkability leaking Pseudonym}
\def \bohliSSAcLong{Sender Unobservability}
\def \bohliSSUPcLong{Sender Unobservability leaking User Number}
\def \bohliSWUPcLong{Sender Unobservability leaking Histogram}
\def \bohliSPScLong{Sender Unobservability leaking Pseudonym}
\def \bohliSSUUcLong{Sender-Frequency Unlinkability}
\def \bohliSWUUcLong{Sender-Frequency Unlinkability leaking Histogram}
\def \bohliSANcLong{Sender-Frequency Unlinkability leaking Pseudonym}
\def \bohliSWUcLong{Sender-Message Unlinkability}
\def \bohliSWAcLong{Sender-Message Unlinkability leaking Pseudonym}
\def\heviaUOLong{Communication Unobservability}
\def\heviaULLong{Message Unobservability with  Message Unlinkability}
\def\anoaRELLong{Sender-Receiver Unobservability}
\def\anoaULLong{Twice Sender Unlinkability}
\def\newCONFLong {Message Unobservability leaking Message Length}
\def\newCONFWOLLong {Message Unobservability}
\def\newSAUOLong {Sender-Message Pair Unobservability}
\def\newSALong {Sender-Message Pair Unlinkability}
\def \loopixSUONeLong{Restricted Sender Unobservability}
\def \loopixSRTPULong{Sender-Receiver Pair Unlinkability}
\def\newAnoaSALong {Sender Unobservability with Message Unobservability leaking Message Length}
\def\newAnoaSAWOLLong {Sender Unobservability with Message Unobservability}
\newtheorem{definition}{Definition}
\newtheorem{lemma}{Lemma}
\newtheorem{theorem}{Theorem}
\newtheorem{translation}{Attack Construction}
\newtheorem{construction}{Protocol Construction}
\newtheorem{claim}{Claim}
 \acrodef{ACN}[ACN]{anonymous communication network}
\begin{document}

  \author*[1]{Christiane Kuhn}

  \author[2]{Martin Beck}

  \author[3]{Stefan Schiffner}

  \author[4]{Eduard Jorswieck}
  
  \author[5]{Thorsten Strufe}

  \affil[1]{TU Dresden, E-mail: christiane.kuhn@tu-dresden.de}

  \affil[2]{TU Dresden, E-mail: martin.beck1@tu-dresden.de}

  \affil[3]{Université du Luxembourg, E-mail: stefan.schiffner@uni.lu}

  \affil[4]{TU Dresden, E-mail: eduard.jorswieck@tu-dresden.de}
  
  \affil[5]{TU Dresden, E-mail: thorsten.strufe@tu-dresden.de}

  \title{\huge On Privacy Notions in Anonymous Communication}
  \runningtitle{On Privacy Notions in Anonymous Communication}

    \begin{abstract}
{
\iflong
Many anonymous communication networks (ACNs) with different privacy goals have been developed.  However, there are no accepted formal definitions of privacy and  ACNs often define their goals and adversary models ad hoc.
However, for the understanding and comparison of different flavors of privacy, a common foundation is needed.
  In this paper, we introduce an analysis framework for ACNs that captures the notions and assumptions known from different analysis frameworks.
Therefore, we formalize privacy goals as notions and  identify their building blocks. For any pair of notions we prove whether one is strictly stronger, and, if so, which. Hence, we are able to present a complete hierarchy.
Further, we show how to add practical assumptions, e.g. regarding the protocol model or  user corruption as options to our notions. This way, we capture the notions and assumptions of, to the best of our knowledge, all existing analytical frameworks for ACNs and are able to  revise inconsistencies between them. Thus, our new framework builds a common ground and allows for sharper analysis, since new combinations of assumptions are possible and the relations between the notions are known.
\else
Many anonymous communication networks (ACNs) with different privacy goals have been developed.  Still, there are no accepted formal definitions of privacy goals,  and  ACNs often define their goals ad hoc.
However, the formal definition of privacy goals benefits the understanding and comparison of different flavors of privacy and, as a result, the improvement of ACNs.
  In this paper, we work towards defining and comparing privacy goals by formalizing them as privacy notions and identifying their building blocks.
  For any pair of notions we prove whether one is strictly stronger, and, if so, which. Hence, we are able to present a complete hierarchy.
  Using this rigorous comparison between notions, we revise inconsistencies between the existing works and improve the understanding of privacy goals. \iflong Further, we extend the unified definition of privacy notions with options that capture the different assumptions of, to the best of our knowledge all existing analytical frameworks based on indistinguishability games. This new framework allows for sharper analysis, since new combinations of assumptions are possible and the relations among notions are known.\fi  \fi }
\end{abstract}

  \keywords{Anonymity, Privacy notion, Anonymous Communication, Network Security}

  \journalname{Extended Version}

  \journalyear{..}
  \journalvolume{..}
  \journalissue{..}

\maketitle
\iflong
\emph{
This is the extended version to ``On Privacy Notions in Anonymous Communication'' published at PoPETs 2019.}
\fi



\section{Introduction}
\label{intro}

 
With our frequent internet usage of, e.g., social networks, instant messaging, and web browsing, we constantly reveal personal data. Content encryption can reduce the footprint, but metadata (e.g. correspondents' identities) still leaks. 
To protect metadata from state and industrial surveillance, a broad variety of \acp{ACN} has emerged; one of the most deployed is Tor \cite{dingledine04tor}, but also others, e.g. I2P \cite{zantout11i2p} or Freenet \cite{clarke01freenet}, are readily available. Additionally, many conceptual systems, like Mix-Nets~\cite{chaum81untraceable}, DC-Nets~\cite{chaum88dining}, Loopix \cite{piotrowska17loopix} and Crowds \cite{reiter98crowds} have been published. 


The published \acp{ACN} address a variety of privacy goals. However, many definitions of privacy goals are ad hoc and created for a particular use case. We believe that a solid foundation for future analysis is still missing. This hinders the understanding and comparison of different privacy goals and, as a result, comparison and improvement of \acp{ACN}.
In general, comparing privacy goals is difficult since their formalization is often incompatible and their naming confusing.
This has contributed to a situation where existing informal comparisons disagree: e.g., Sender Unlinkablity of Hevia and Micciancio's framework~\cite{hevia08indistinguishability} and Sender Anonymity of AnoA~\cite{backes17anoa} are both claimed to be equivalent to Sender Anonymity of Pfitzmann and Hansen's terminology~\cite{pfitzmann10terminology}, but significantly differ in the protection they actually provide. These naming issues further complicate understanding of privacy goals and hence  analysis of \acp{ACN}.

To allow rigorous analysis, i.e. provable privacy, of \acp{ACN}, their goals need to be unambiguously defined. Similar to the notions of semantic security (like CPA, CCA1, CCA2 \cite{goos_relations_1998}) for confidentiality, privacy goals can be formally defined as indistinguishability games. 
We call such formally defined privacy goals \emph{privacy notions}. Further, notions need to be compared according to their strength:
achieving the stronger notion implies the weaker one. Comparison of notions, and of the \acp{ACN} achieving them, is otherwise impossible. 
To understand the  ramifications of privacy goals, we aim at setting all notions into mutual relationships.  This means for every pair of notions it must be clear if one is stronger or weaker than the other, or if they have no direct relationship. Such a comparison has already been made for the notions of semantic security~\cite{goos_relations_1998}. 
\iflong
Further, all the assumptions of different existing analysis frameworks, e.g. regarding corruption or specific protocol parts like sessions, have to be unified in one framework to find a common basis for the comparison. 
\fi

\iflong
In this work, we introduce such a unified framework.
\else
In this work, we tackle the formal definition and comparison of privacy goals.
\fi
 To achieve this, we build on the foundations of existing analytical frameworks~\cite{backes17anoa,hevia08indistinguishability,gelernter13limits,bohli11relations}. 
With their preparatory work, we are able to present basic building blocks of privacy notions: observable properties of a communication, that (depending on the notion) must either be protected, i.e. kept private, by the protocol, or are permitted to be learned by the adversary. Defining our notions based on the idea of properties simplifies comparison.
Further, we map practitioners' intuitions to their underlying formal model, justify our choice of notions with exemple use cases for each, and make a sanity check to see that the privacy goals of a current \ac{ACN} (Loopix~\cite{piotrowska17loopix}) are covered. 
\iflong
As a next step, we include assumptions of existing analysis frameworks by defining them similarly as  building blocks that can be combined to any notion. Finally, we argue how the notions and assumptions of existing works map to ours.  
\else
Additionally,  for all formalized goals of existing analysis frameworks~\cite{backes17anoa,hevia08indistinguishability,gelernter13limits,bohli11relations} we reason to which notions they correspond if they are broken down to  the general observable properties and interpreted for \acp{ACN}. 
This means that we focus on general privacy goals and do not present aspects regarding the adversary model, infrequently-used observable information, or the quantification of privacy goals. However, those aspects are compatible with our formalization and have not been ignored; they are presented in the long version of this paper~\cite{longVersion}. 
\fi

We compare all identified privacy notions and present a complete proven hierarchy. As a consequence of our comparison, we are able to rectify mapping inconsistencies of previous work and show how privacy notions and data confidentiality interact.
Furthermore, the proofs for building the hierarchy include templates in order to compare and add new privacy notions to the established hierarchy, if necessary.
\iflong
As we added the assumptions, our resulting framework  captures all the assumptions and notions of the AnoA \cite{backes17anoa}, Hevia and Miccianchio's \cite{hevia08indistinguishability}, Gelernter and Herzberg's \cite{gelernter13limits} frameworks, captures most  and  adapts some of  Bohli and Pashalidis's framework \cite{bohli11relations} and adds missing ones. 
We capture the assumptions and notions of the other frameworks by demonstrating equivalences between their and our corresponding notion.
This removes the constraints of co-existing frameworks and allows to use all options when analyzing an \ac{ACN}. To make our work more accessible, we included a how-to-use section and intuitions, recommendations and limits of this work in the discussion.
\fi
 
\noindent
In summary, our main contributions are:
 
 \begin{itemize}
\iflong
 \item a holistic framework for analyzing \acp{ACN}, capturing more notions and assumptions than each existing framework,
 \fi
 \item the mapping of practitioners' intuitions to game-based proofs, 
 \item the definition of building blocks for privacy notions,
\item the selection and unified definition of notions, 
 \item  a complete hierarchy of privacy notions, which simplifies comparison of \acp{ACN}, \iflong  \else and \fi 
 \item the resolution of inconsistencies and revision of mistakes in previous (frame)works%
 \iflong  
  \item the definition of building blocks for assumptions  compatible to our notions and 
  \item a guide to use the framework and  an example of mapping the goals of an \ac{ACN} into our hierarchy.
  \else .
  \fi
 \end{itemize}

\iflong
\else
 \fi

\inlineheading{Outline}
Section \ref{background} contains an introductory example and gives an overview of our paper.
In Section \ref{sec:GeneralGame}, we introduce the underlying model and indistinguishability games.
In Section \ref{aspects}, we introduce the basic building blocks of privacy notions: properties.
In Section \ref{notions}, we define the privacy notions.
In Section \ref{sec:choiceNotions}, we argue our choice of notions.
\iflong
In Section \ref{sec:options}, we introduce further assumptions, that can be combined with our notions as options.
In Section \ref{sec:adversary}, we explain how results regarding restricted adversaries carry over to our work.
In Section \ref{sec:mappingpapers}, we state the relation of our notions to the other existing analytical frameworks.
\fi
 In Section \ref{sec:hierarchy}, we present the relations between the notions. 
 \iflong
 In Section \ref{howToUse}, we explain how to use the framework for analysis.
 \fi
 In Section \ref{discussion}, we discuss our results.
 In Section \ref{conclusion}, we conclude our paper and give an outlook.

\vspace{-0.3cm}
\section{Overview}
\label{background}
\vspace{-0.5cm}
We start with an example of a use case and the corresponding implicit privacy goal, to then introduce the idea of the related indistinguishability game.
We show how such a game works and what it means for a protocol to be secure according to this goal.
Furthermore, by adopting the game we sketch how privacy goals can be formalized as notions and provide an intuition for the relations between different goals.

\example{Alice is a citizen of a repressive regime and  engaged with a resistance group. Despite the regime's sanctions on distributing critical content, Alice wants to publish her latest critical findings.}
A vanilla encryption scheme would reduce Alice's potential audience and thus does not solve her problem. Hence, she needs to hide the link between the message and herself as the sender. We call this goal sender-message unlinkability.\footnote{Usually this is  called sender anonymity. However, since the term sender anonymity is overloaded and sometimes also used with a slightly different meaning, we refer to it as sender-message unlinkability, as the message should not be linkable to the sender.} 

\inlineheading{First attempt}
We start by presenting an easy game, that at first glance looks like the correct formalization for the goal of the example, but turns out to model an even stronger goal.

For Alice's safety, the regime should not suspect her of being the sender of a compromising message, otherwise she risks persecution. Thus, we need to show for the applied protection measure, that compared to any other sender of this message, it is not more probable that Alice is the sender. We analyze the worst case: in a group of users, let Charlie be a user for whom the probability of being the sender differs most from Alice's probability. If even these two are too close to distinguish, Alice is safe, since all other probabilities are closer. Hence, the regime cannot even exclude a single user from its suspects.

We abstract this idea into a game\footnote{Similar to indistinguishability games in cryptology  \cite{goldwasser84probabilistic}.}, where the adversary aims to distinguish two ``worlds'' or scenarios.
These may only differ in the properties the protocol is required to protect, but within these restrictions the adversary can choose freely, especially the worst case that is easiest for her to distinguish (e.g. in one scenario Alice sends the message, in the other Charlie).
Fig. \ref{fig:game1} shows such a game.

\begin{figure}[htbp]
  \centering
  \includegraphics[width=0.4\textwidth]{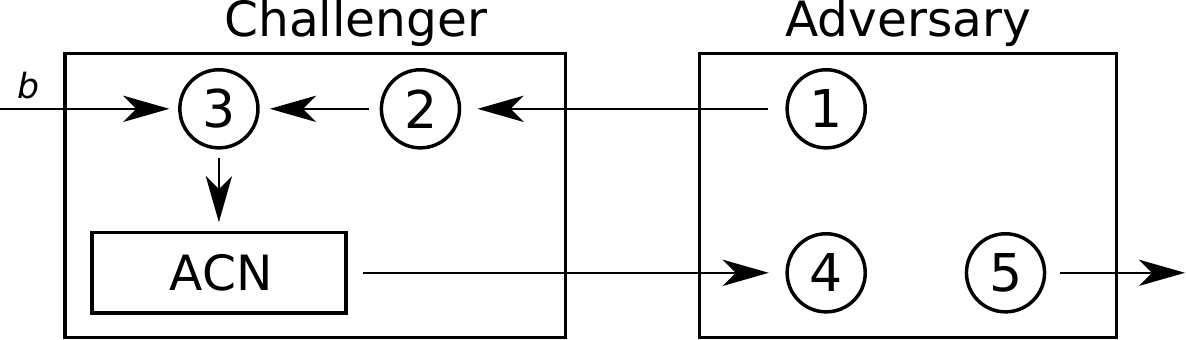}
  \caption{Steps of the sample game: \textbf{1)}~adversary picks two scenarios; \textbf{2)}~challenger checks if scenarios only differ in senders; \textbf{3)}~based on random bit $b$ the challenger inputs a scenario into the \ac{ACN}; \textbf{4)}~adversary observes execution; \textbf{5)}~adversary outputs `guess' as to which scenario was executed}
  \label{fig:game1}
\end{figure}

What the adversary can observe in step 4 depends on her capabilities and area of control. A weak adversary may only receive a message from somewhere, or discover it on a bulletin board. However, a stronger adversary could e.g. also observe the activity on the Internet uplinks of some parties.

The adversary wins the game if she guesses the correct scenario. If she can devise a strategy that allows her to win the game repeatedly with a probability higher than random guessing, she must have learned some information that is supposed to be protected, here the sender (e.g. that Alice is more probable the sender of the message than Charlie), since everything else was identical in both scenarios. Hence, we say that, if the adversary can find such a strategy, we do not consider the analyzed protocol secure regarding the respective privacy goal.
 
 \inlineheading{Why this is too strong}
As argued, a protocol achieving this goal would help Alice in her use case. However,  if an adversary learns who is sending any message with real information (i.e. no random bits/dummy traffic), she can distinguish both scenarios and wins the game.
As an example, consider the following two scenarios: (1) Alice and Bob send messages (2) Charlie and Dave send messages.
If the adversary can learn the active senders, she can distinguish the scenarios and win the game.
However, if she only learns the set of active senders, she may still not know who of the two active senders in the played scenario actually sent the regime-critical content.
Thus, a protocol hiding the information of who sent a message within a set of active senders is good enough for the given example.
Yet, it is considered insecure regarding the above game, since an adversary can learn the active senders.
Hence, the game defines a goal stronger than the required sender-message unlinkability.
As the \ac{ACN} in this case needs to hide the sending activity (the adversary does not know if a certain possible sender was active or not), we call the goal that is actually modeled sender unobservability.

\inlineheading{Correcting the formalization}
However, we can adjust the game of Fig. \ref{fig:game1} to model sender-message unlinkability. We desire that the only information about the communications that differs between the scenarios is who is sending which message. Thus, we allow the adversary to pick scenarios that differ in the senders, but not in the activity of the senders, i.e. the number of messages each active sender sends.
This means, we change what the adversary is allowed to submit in step 1 and what the challenger checks in step 2. So, if the adversary now wants to use Alice and Charlie, she has to use both in both scenarios, e.g. (1) Alice sends the critical message, Charlie a benign message and (2) Charlie sends the critical message, Alice the benign message. 
Hence, given an \ac{ACN} where this game cannot be won, the adversary is not able to distinguish whether Alice  or another active user sent the regime-critical message. The adversary might learn, e.g. that someone sent a regime-critical message and the identities of all active senders (here that Alice and Charlie are active senders). However, since none of this is sanctioned in the above example, Alice is safe, and we say such an \ac{ACN} provides sender-message unlinkability.

%

\inlineheading{Lessons learned}
Depending on the formalized privacy goal (e.g. sender unobservability) the scenarios are allowed to differ in certain properties of the communications (e.g. the active senders) as we have illustrated in two exemple games.
Following the standard in cryptology, we use the term \emph{privacy notion}, to describe such a formalized privacy goal that defines properties to be hidden from the adversary.

Further, the games used to prove the privacy notions only differ in how scenarios can be chosen by the adversary and hence what is checked by the challenger.
This also holds for all other privacy notions; they all define certain properties of the communication to be private and other properties that can leak to the adversary.
Therefore, their respective games are structurally identical and can be abstracted to define one general game, whose instantiations represent notions.
We explain and define this  general game in Section \ref{sec:GeneralGame}.
We then define the properties (e.g. that the set of active senders can change) in Section \ref{aspects} and build notions (e.g. for sender unobservability) upon them in Section \ref{notions}.

Additionally, we already presented the intuition that sender unobservability is stronger than sender-message unlinkability. This is not only true for this example, in fact we prove: every protocol achieving sender unobservability also achieves sender-message unlinkability. Intuitively, if whether Alice is an active sender or not is hidden, whether she sent a certain message or not is also hidden. We will prove relations between our privacy notions in Section \ref{sec:hierarchy} and show that the presented relations (depicted in Figure \ref{fig:hierarchyColored}) are complete. Before that, we argue our choice of notions in Section \ref{sec:choiceNotions}.

\section{Our Game model}
\label{sec:GeneralGame}
Our goal in formalizing the notions as a game is to analyze a given ACN protocol w.r.t. to a notion, i.e. the game is a tool to investigate if an adversary can distinguish two self-chosen, notion-compliant scenarios. Scenarios are sequences of communications.
 A \emph{communication} is described by its sender, receiver, message and auxiliary information (e.g. session identifiers) or the empty communication, signaling that nobody wants to communicate at this point. 
Some protocols might restrict the information flow to the adversary to only happen at specific points in the execution of the protocol,  e.g. because a component of the \ac{ACN} processes a batch of communications before it outputs statistics about them. 
Therefore, we introduce \emph{batches} as a sequence of communications, which is processed as a unit before the adversary observes anything\footnote{We use the word batch to designate a bunch of communications. Besides this similarity, it is not related to batch mixes.}.
When this is not needed, batches can always be replaced with single communications.

As explained in Section \ref{background}, we do not need to define a complete new game for every privacy goal, since notions only vary in the difference between the alternative scenarios chosen by the adversary. Hence, for a given ACN and notion, our general game is simply instantiated with a model of the \ac{ACN}, which we call the protocol model, and the notion. 
 The protocol model accepts  a sequence of communications as input. Similar to the real implementations the outputs of the protocol model are the observations the real  adversary can make. 
Note, the adversaries in the game and the real world have the same capabilities\footnote{A stronger game adversary also implies that the protocol is safer in the real world.}, but differ in their aims: while the real world adversary aims to find out something about the users of the system, the game adversary merely aims to distinguish the two scenarios she has constructed herself. 

In the simplest version of the game, the adversary constructs two scenarios, which are just two batches of communications and sends them to the challenger. The challenger checks that the batches are compliant with the notion. If so, the challenger tosses a fair coin to randomly decide which of the two batches it executes with the protocol model. The protocol model's output is returned to the game adversary. Based on this information, the game adversary  makes a guess about the outcome of the coin toss.

We extend this simple version of the game, to allow the game adversary to send multiple times two batches to the challenger. However, the challenger performs a single coin flip and sticks to this scenario for this game, i.e. it always selects the batches corresponding to the initial coin flip. This allows analyzing for adversaries, that are able to base their next actions in the attack on the observations they made previously. 

\iflong
Further, we allow for user (a sender or receiver) corruption, i.e. the adversary learns the user's momentary internal state, by sending corrupt queries to the challenger. Note that although the adversary decides on all the communications that happen in the alternative scenarios, she does not learn secret keys or randomness unless the user is corrupted. This allows to add several options for different corruption models to the privacy goals.
\fi

 
%

\iflong
To model all possible attacks,
\else
To unfetter our general game from the concrete adversary model,
\fi we allow the adversary to send protocol queries. This is only a theoretical formalization to reflect what information the adversary gets and what influence she can exercise. These protocol query messages are sent to the protocol model without any changes by the challenger. The protocol model covers the adversary to ensure that everything the real world adversary can do is possible in the game with some query message. For example, protocol query messages can be used to add or remove nodes from the \ac{ACN} by sending the appropriate message. 

As introduced in Section \ref{background}, we say that an adversary has an advantage in winning the game, if she guesses the challenger-selected scenario correctly with a higher probability than random guessing.
A protocol achieves a certain privacy goal, if an adversary has at most negligible advantages in winning the game. 



\subsection*{Formalization}
In this subsection, we formalize the game model to conform to the above explanation.

We use $\Pi$ to denote the analyzed \textit{\ac{ACN} protocol model}, $Ch$ for the challenger and $\mathcal{A}$ for the adversary, which is a probabilistic polynomial time algorithm.
Additionally, we use $X$ as a placeholder for the specific notion, e.g. sender unobservability, if we explain or define something for all the notions. 
 A \textit{communication} $r$  in $\Pi$ is represented by a tuple $(u,u',m,aux)$ with a sender $u$, a receiver $u'$, a message $m$,  and auxiliary information $aux$ (e.g. session identifiers).
Further, we use $\Diamond$ instead of the communication tuple $(u,u',m,aux)$ to represent that no communication occurs. 
%
Communications are clustered into \textit{batches} $\underline{r}_b=(r_{b_1},\dots, r_{b_l})$, with $r_{b_i}$ being the $i$-th communication of batch $\underline{r}_b$. Note that we use $\underline{r}$ (underlined) to identify batches and $r$ (no underline) for single communications. Batches in turn are clustered into \textit{scenarios}; the first scenario is $(\underline{r}_{0_1}, \dots,\underline{r}_{0_k} )$. A \textit{challenge}  is defined as the tuple of two scenarios $\left ( (\underline{r}_{0_1}, \dots,\underline{r}_{0_k} ), (\underline{r}_{1_1}, \dots,\underline{r}_{1_k} ) \right )$. 
All symbols used so far and those introduced later are summarized in \mbox{Tables~\ref{tab:allNotions}~--~\ref{tab:allSymbols}}  in Appendix~\ref{sec:allSymbols}.

 \inlineheading{Simple Game}
 \begin{enumerate}
 \item $Ch$ randomly picks challenge bit $b$.
\item $\mathcal{A}$ sends a batch query, containing $\underline{r}_{0}$ and $\underline{r}_{1}$, to $Ch$. 
\item $Ch$ checks if the query is valid,  i.e. both batches differ only in information that is supposed to be protected according to the analyzed notion $X$. 
\item If the query is valid, $Ch$ inputs the batch corresponding to $b$ to $\Pi$. 
\item $\Pi$'s output $\Pi(\underline{r}_{b})$ is handed to $\mathcal{A}$. 
\item After processing the information, $\mathcal{A}$ outputs her guess $g$ for $b$.
\end{enumerate}

\inlineheading{Extensions}
As explained above, there are useful extensions we make to the simple game:
\begin{description}
\item[\emph{Multiple Batches}] Steps 2-5 can be repeated.
\iflong
\item[\emph{User corruption}] Instead of Step 2, $\mathcal{A}$ can also decide to issue a corrupt query specifying a user $u$ and receive $u$'s internal state as output. This might change $\Pi$'s state, lead to different behavior of $\Pi$ in following queries and yield a higher advantage in guessing than before.
\fi
\item[\emph{Other parts of the adversary model}] Instead of Step 2, $\mathcal{A}$ can also decide to issue a protocol query, containing an input specific to $\Pi$ and receive $\Pi$'s output to it (e.g. the internal state of a router that is corrupted in this moment). This might change $\Pi$'s state. 
 \end{description}

\inlineheading{Achieving notion $X$}
Intuitively, a protocol \(\Pi\) achieves a notion \(X\) if any possible adversary has at most negligible advantage in winning the game.
To formalize the informal understanding of $\Pi$ achieving goal $X$, we need the following denotation. \iflong ${\text{Pr}[g= \langle \mathcal{A} \bigm| Ch(\Pi, X,c,b)\rangle ]}$ \else ${\text{Pr}[g= \langle \mathcal{A} \bigm| Ch(\Pi, X,b)\rangle ] }$ \fi describes the probability that $\mathcal{A}$ \iflong (with at most $c$ challenge rows, i.e. communications differing in the scenarios) \fi outputs $g$, when $Ch$ is instantiated with $\Pi$ and $X$ and the challenge bit was chosen to be $b$. With this probability, achieving a notion translates to Definition \ref{def:achieve}.


\begin{definition}[Achieving a notion $X$]\label{def:achieve}
An \ac{ACN} Protocol $\Pi$ achieves $X$, iff for all probabilistic polynomial time(PPT) algorithms $\mathcal{A}$ there exists a negligible $\delta$ such that
\small
\[ \bigm| \text{Pr}[0= \langle \mathcal{A} \mid Ch(\Pi, X,\iflong c, \fi 0)\rangle ]  -
\text{Pr}[0= \langle \mathcal{A} \mid Ch(\Pi, X,\iflong c, \fi  1)\rangle]\bigm|\leq \delta \text{.} \]
\end{definition}

\iflong
We use a variable \(\delta\), which is referred to as negligible, as an abbreviation when we actually mean a function \(\delta(\kappa)\) that is negligible in a security parameter \(\kappa\).
\fi
\iflong
\paragraph{Equivalence to Other Definitions}
Notice, that this definition is equivalent to 
\begin{align*}
 \text{(1) Pr}[0= \langle \mathcal{A} \bigm| Ch(\Pi, X,c,0)\rangle ]  &\leq \\
\text{Pr}[0= \langle \mathcal{A} \bigm| Ch(\Pi, X, c,1)\rangle]&+  \delta \text{.}
\end{align*}
and 
\begin{align*}
 \text{(2) Pr}[1= \langle \mathcal{A} \bigm| Ch(\Pi, X,c,1)\rangle ] &\leq \\
\text{Pr}[1= \langle \mathcal{A} \bigm| Ch(\Pi, X, c,0)\rangle]&+ \delta \text{.}
\end{align*}

(1): $|Pr[0\mid 0]- Pr[0 \mid 1]|\leq \delta $ for all $\mathcal{A}$ $\iff (Pr[0\mid 0]- Pr[0 \mid 1]\leq \delta $ for all $\mathcal{A}$) $\land $ $( Pr[0 \mid 1] - Pr[0\mid 0]\leq \delta $ for all $\mathcal{A}$). To every attack $\mathcal{A}$ with $Pr[0 \mid 1] - Pr[0\mid 0]> \delta$, we can construct $\mathcal{A'}$ with $Pr[0\mid 0]- Pr[0 \mid 1]> \delta$.  Since the definition requires the inequality to hold for all attacks, this is enough to prove that (1) implies the original, the other way is trivial. 
This is how we construct it: Given attack $\mathcal{A}$, we  construct $\mathcal{A'}$ by changing the batches of the first with the second scenario. Hence,  $\text{Pr}[0= \langle \mathcal{A} \bigm| Ch(\Pi, X,c,0)\rangle ] =\text{Pr}[0= \langle \mathcal{A'} \bigm| Ch(\Pi, X,c,1)\rangle ] $ and $\text{Pr}[0= \langle \mathcal{A} \bigm| Ch(\Pi, X,c,1)\rangle ] =\text{Pr}[0= \langle \mathcal{A'} \bigm| Ch(\Pi, X,c,0)\rangle ] $.

(2): To every attack $\mathcal{A}$ breaking (1), we can construct one with the same probabilities in (2). Given attacker $\mathcal{A}$, we construct $\mathcal{A'}$ as the one that changes the batches of the first with the second scenario and inverts the output of $\mathcal{A}$. Hence, $\text{Pr}[0= \langle \mathcal{A} \bigm| Ch(\Pi, X,c,0)\rangle ] =\text{Pr}[1= \langle \mathcal{A'} \bigm| Ch(\Pi, X,c,1)\rangle ] $ and $\text{Pr}[1= \langle \mathcal{A} \bigm| Ch(\Pi, X, c,0)\rangle]=\text{Pr}[0= \langle \mathcal{A'} \bigm| Ch(\Pi, X, c,1)\rangle]$. Since we can reverse this operations by applying them again, we can also translate in the other direction.

\paragraph{Differential Privacy based Definition}
For some use cases, e.g. if the court of your jurisdiction requires that the sender of a critical content can be identified with a minimal probability of a certain threshold e.g. 70\%, a non-negligible $\delta$ might be sufficient. Hence, we allow to specify the parameter of $\delta$ and extend it with  the allowed number of challenge rows $c$ to finally include the well-known concept of differential privacy as AnoA does in the following definition:

\begin{definition}[Achieving $(c,\epsilon, \delta)-X$]\label{def:achieveEpsilon} 
An \ac{ACN} protocol $\Pi$ is  $(c,\epsilon, \delta) -X$ with $c>0$, $\epsilon \geq 0$ and $0 \leq \delta \leq 1$, iff for all PPT algorithms $\mathcal{A}$:
\begin{align*}
 \text{Pr}[0= \langle \mathcal{A} \bigm| Ch(\Pi, X,c,0)\rangle ] &\leq \\
e^{\epsilon} \text{Pr}[0= \langle \mathcal{A} \bigm| Ch(\Pi, X, c,1)\rangle]&+ \delta\text{.}
\end{align*}
\end{definition}
Notice that $\epsilon$ describes how close the probabilities of guessing right and wrong have to be. This can be interpreted as the quality of privacy for this notion. While $\delta$ describes the probability with which the $\epsilon$-quality can be violated. Hence, every \ac{ACN} protocol will achieve $(0,1)-X$ for any notion $X$, but this result does not guarantee anything, since with probability $\delta =1$ the $\epsilon$-quality is not met. 

The first variant can be expressed in terms of the second as $\Pi$ achieves $X$, iff $\Pi$ is $(c,0,\delta)-X$ for a negligible $\delta$ and any $c\geq 0$.
\fi




\vspace{-0.5cm}
\section{Protected Properties}
\label{aspects}


We define properties to specify which information about the communication is allowed to be disclosed to the adversary, and which must be protected to achieve a privacy notion, as mentioned in Section \ref{background}.  
We distinguish between simple and complex properties. Simple properties can be defined with the basic game model already introduced, while complex properties require some extensions to the basic model. 

  \begin{table}
\center
\resizebox{0.48\textwidth}{!}{%
  \begin{tabular}{ c p{3.5cm} p{4.8cm} }

Symbol &Description&Translation to Game\\ \hline
$|M|$&Message Length& Messages in the two scenarios always have the same length.\\
$\EveryButSender$&Everything but Senders& Everything except the senders is identical in both scenarios.\\
$\EveryButRec/ \EveryButMsg$&Everything but Receivers/Messages& Analogous\\
$\EveryButSenderMsg$&Everything but Senders and Messages& Everything except the senders and messages is identical in both scenarios.\\
$\EveryButReceiverMsg/ \EveryButSenderRec$&Analogous& Analogous\\
\something &Something is sent& In every communication something must be sent ($\Diamond$ not allowed).\\
\nothing &Nothing&Nothing will be checked; always true.\\
$U/U'$& Active Senders/Receivers& Who sends/receives is equal for both scenarios.\\
$Q/Q'$& Sender/Receiver Frequencies&Which sender/receiver sends/receives how often  is equal for both scenarios.\\
$|U|/|U'|$& Number of Senders/ Receivers& How many senders/receivers communicate is equal for both scenarios.\\
$P/P'$& Message Partitioning per Sender/Receiver& Which messages are sent/received from the same sender/receiver  is equal for both scenarios.\\
$H/H'$& Sender/Receiver Frequency Histograms&How many senders/receivers send/receive how often is equal for both scenarios.\\
\end{tabular}}
 \caption{Simple properties; information about communications that may be required to remain private}
  \label{tab:information}
\end{table}

 \vspace{-0.3cm}
  \subsection{Simple Properties} 
  \vspace{-0.3cm}
  
 We summarize the informal meaning of all simple properties in Table \ref{tab:information} and introduce them in this section.
  
Assume an \ac{ACN}  aims to hide the sender but discloses message lengths to observers. For this case, we specify the property ($|M|$) that the message length must not differ between the two scenarios, as this information must not help the adversary to distinguish which scenario the challenger chose to play.

Next, we might want an \ac{ACN} to protect the identity of a sender, as well as any information about who sent a message, but deliberately disclose which messages are received by which receiver,  
who the receivers are, and potentially other auxiliary information. 
	We hence specify a property ($\EveryButSender$) where only the senders differ between the two scenarios\footnote{$E$ symbolizes that only this property may vary in the two submitted scenarios and everything else remains equal.}, to ensure that the adversary in our game can only win by identifying senders. 
In case the protection of the receiver identities or messages is required, the same can be defined  for receivers ($\EveryButRec$) or messages ($\EveryButMsg$).

Further, we might want the \ac{ACN} to protect senders and also the messages; leaving the receiver and auxiliary information to be disclosed to the adversary. 
This is achieved by specifying a property where only senders and messages differ between the two scenarios and everything else remains equal ($\EveryButSenderMsg$). 
Again, the same can be specified for receivers and messages ($\EveryButReceiverMsg$) or senders and receivers ($\EveryButSenderRec$).

Lastly, \ac{ACN}s might allow the adversary to learn whether a real message is sent or even how many messages are sent. We specify a property (\something) that requires real communications in both scenarios, i.e. it never happens that nothing is sent in one scenario but something is sent in the other. We ensure this by not allowing the empty communication ($\diamond$).

 However, a very ambitious privacy goal might even require that the adversary learns no information about the communication at all (\nothing). In this case, we allow any two scenarios and check nothing.
 
 \inlineheading{Formalizing those Simple Properties} 
 In the following definition all simple properties mentioned so far are formally defined. 
 Therefore, we use $\top$ as symbol for the statement that is always true.
 \begin{definition}[Properties $|M|$, $\EveryButSender$, $\EveryButSenderMsg$, \something, \nothing]\label{def:properties} 
 Let the checked batches be $\underline{r_0},\underline{r_1}$,  which include the communications \mbox{${r_0}_j \in \{ (u_{0_j},u'_{0_j},m_{0_j},aux_{0_j}), \diamond\}$} and \mbox{ ${r_1}_j \in \{ (u_{1_j},u'_{1_j},m_{1_j},aux_{1_j}), \diamond\}$} with $ j \in \{1, \dots l\} $. We say the following properties are met, iff for all $j \in \{1, \dots l\}$:
\begin{align*}
   |M|& : |{m_0}_j|=|{m_1}_j| \\[0.5em] 
  \EveryButSender&: {r_1}_j =(\mathbf{u_{1_j}},u'_{0_j},m_{0_j},aux_{0_j}) \\
   \EveryButRec&: {r_1}_j =(u_{0_j},\mathbf{u'_{1_j}},m_{0_j},aux_{0_j}) \\
    \EveryButMsg &: {r_1}_j =(u_{0_j},u'_{0_j},\mathbf{m_{1_j}},aux_{0_j}) \\[0.5 em]
     \EveryButSenderMsg&: {r_1}_j =(\mathbf{u_{1_j}},u'_{0_j},\mathbf{m_{1_j}},aux_{0_j}) \\
   \EveryButReceiverMsg&: {r_1}_j =(u_{0_j},\mathbf{u'_{1_j}},\mathbf{m_{1_j}},aux_{0_j}) \\
   \EveryButSenderRec &: {r_1}_j =(\mathbf{u_{1_j}},\mathbf{u'_{1_j}},m_{0_j},aux_{0_j}) \\[0.5 em]
   \text{\something} &: \Diamond \not \in \underline{r}_0 \land \Diamond \not \in \underline{r}_1\\
  \text{\nothing}&: \top
\end{align*}
\label{SimpleProp1}
 \end{definition} 
 
\inlineheading{More Simple Properties: Active Users, Frequencies}
The properties of Definition \ref{SimpleProp1} are important to formalize privacy, but are by themselves not sufficient.
Take the \ac{ACN} Tor as an example: While the set of active senders is trivially known to their ISPs and the guard nodes, we still require that the senders are unlinkable with the messages they are sending (and their receivers).
Similarly, the sending (receiving) frequency of a party may be important and is not formalized yet.
To formalize these properties, we use sets that capture which user sent which messages in a certain period, i.e. a batch of communications (and similarly sets to capture which user received which messages). Note that we use primes ($'$) for the corresponding sets and properties of the receivers.

\begin{definition}[Sender-Message Linking]\label{def:senderMessageSet}
 We define the sender-message linkings for scenario $b$ ($L'_{b_i}$ the receiver-message linkings are analogous) as:
 \begin{align*}
  L_{b_i}:= &\{(u,\{m_1,...,m_h\}) \bigm| u \text { sent messages }m_1, \dots , m_h\\
  &\text{ in batch }i\}\text{.}\\
  \end{align*} 
\end{definition}

The sets from Definition \ref{def:senderMessageSet} allow easy identification of who an active sender in this batch was and how often each sent something:

\begin{definition}[Active Sender Set, Frequency Set]\label{UbQb}
Let the current batch be the $k$-th one. 
For $ b \in \{0,1\}$ $U_b,Q_b$  ($U_b',Q_b'$ for $L_b'$) are defined as:
\begin{align*}
 	U_b &:= \{ u \bigm|  (u,M) \in L_{b_k}\}\hfill\\
        Q_b &:= \{ (u,n) \bigm|  (u,M) \in L_{b_k}\land |M|=n\}\\
\end{align*}
\end{definition}

Recall that we currently define properties for ACNs that allow the adversary to learn which senders are active at different times, or the number of messages they send during some periods, while hiding some other properties (e.g. which messages they have sent). 
Hence, with the respective sets for active users and user frequencies defined, we need only to request that they are equal in both scenarios:

\begin{definition}[Properties $U$, $Q$, $|U|$]
We say that the properties $U,Q,|U|$ ($U',Q', |U'|$ analogous) are met, iff:
\[U: U_0=U_1 \hspace{1,5em}  Q: Q_0=Q_1  \hspace{1,5em}  |U|: |U_0|=|U_1| \]
\end{definition}  

\inlineheading{More Simple Properties: Message Partitions, Histograms}
Other interesting properties are which messages came from a given sender and how many senders sent how many messages. If the adversary knows which messages are sent from the same sender, e.g. because of a pseudonym, she might be able to combine information from them all to identify the sender. If she knows how many senders sent how many messages, she knows the sender activity and hence can make conclusions about the nature of the senders. 

As before, we introduce auxiliary variables to formally define these two properties.
We use $M_{b,I}$ to denote the collection of messages that has been sent by the same sender (e.g. linked by a shared pseudonym) in a set of batches, and $M_{b,I,n}$ to denote the union of all these sets of cardinality $n$.
The equality of the properties in the two scenarios must pertain throughout all comparable batches in the scenarios. If this were not true, the inequality would help the adversary to distinguish the scenarios without learning the protected information e.g. identifying the sender.

\begin{definition}[Multi-Batch-Message Linkings]\label{def:messageSets} Let the current batch be the $k$-th, \mbox{$\mathcal{K}:=\{1, \dots, k\}$}, $\mathcal{P}(\mathcal{K})$ the power set of $\mathcal{K}$ and $\mathcal{U}$ the set of all possible senders ($\mathcal{U'}$ receivers).
 For $b \in \{0,1\}$ and $I \in \mathcal{P}(\mathcal{K})$: We define ($M'_{b,I}, M'_{b,I,n}$ for $L'_{b_i}$)
 \begin{itemize}
 \item the multi-batch-message-sender linking:\\ \mbox{$M_{b,I} := \cup_{u\in \mathcal{U}} \{ \cup_{i \in I}\{M| (u,M)\in L_{b_i}\}\}$} and
 \item the cardinality restricted multi-batch-message-sender linking:  $M_{b,I,n}:=\{ M \in M_{b,I} \bigm| |M|=n \}$.
 \end{itemize}
\end{definition}

As before, we define auxiliary variables capturing the information that we want to be equal in both scenarios: We define ordered sets specifying which messages are sent from the same user for any set of batches (Message Partition $P_b$) and how many users sent how many messages for any set of batches (Histogram $H_b$).  Therefore, we use a slightly unusual notation: For any set Z, we use $(Z_i)_{i\in\{1,\dots, k\}}$ to denote the sequence $(Z_1, Z_2, \dots, Z_k)$ and $\overrightarrow{\mathcal{P}}(Z)$ to denote a sorted sequence of the elements of the power set\footnote{For brevity we use $\in$ to iterate through a sequence.} of $Z$.


\begin{definition}[Message partitions, Histograms]\label{def:properties}
 
Consider the $k$-th batch, \mbox{$\mathcal{K}:=\{1, \dots, k\}$}. 
For $ b \in \{0,1\}$  $P_b, H_b$  ($P_b',H_b'$ analogous) are defined as:
 \allowdisplaybreaks
\begin{align*}
	P_b &:= (M_{b,I})_{ I \in \overrightarrow{\mathcal{P}}(\mathcal{K})}\\
	H_b &:= (\{(n, i) \bigm|   i= |M_{b,I,n}|\})_{ I \in \overrightarrow{\mathcal{P}}(\mathcal{K})}\\
\end{align*}

\vspace{-0.5cm}
Further, we say that properties $P,H$ ($P',H' $ analogous) are met, iff:
\begin{align*}
  P&: P_0=P_1 \hspace{3em}  H: H_0=H_1\\
\end{align*}
 \end{definition} 

\subsection{Complex Properties}
\label{sec:ComplexProperties}
 \vspace{-0.3cm}
 
So far, we have defined various properties to protect senders, messages, receivers, their activity, frequency and the grouping of messages. 
However, this is not sufficient to formalize several relevant privacy goals, and we must hence introduce complex properties.
 
 
 \inlineheading{Learning Sender and Receiver}
Consider that one aims to hide which sender is communicating with which receiver.
Early \acp{ACN} like classical Mix-Nets~\cite{chaum81untraceable}, and  also  Tor~\cite{dingledine04tor}, already used this goal. Therefore, we want the adversary to win the game only if she identifies both: sender and receiver of the same communication. 

An intuitive solution may be to model this goal by allowing the adversary to pick different senders and receivers ($\EveryButSenderRec$) in both scenarios (see Fig. \ref{fig:senderReceiver} (a) for an example). 
This, however, does not actually model the privacy goal: by identifying only the sender or only the receiver of the communication, the game adversary could tell which scenario was chosen by the challenger. 
We hence must extend the simple properties and introduce scenario \emph{instances} to model dependencies.

\inlineheadingTwo{Scenario instances}
 We now require the adversary to give alternative instances for both scenarios (Fig. \ref{fig:senderReceiver} (b)). The challenger chooses the scenario according to the challenge bit, which is picked randomly for every game, and the instance according to the instance bit, which is picked randomly for every challenge.
 
 Formally, we replace steps 2--5 of the game with the following steps:
 \begin{description}
\item[2.] $\mathcal{A}$ sends a batch query, containing $\underline{r}^0_{0}$,  $\underline{r}^1_{0}$, $\underline{r}^0_{1}$and $\underline{r}^1_{1}$ to $Ch$. 
\item[3.] $Ch$ checks if the query is valid according to the analyzed notion $X$. 
\item[4.] If the query is valid  and $Ch$ has not already picked an instance bit $a$ for this challenge, $Ch$ picks $a \in \{0,1\}$ randomly  and independent of $b$. Then it inputs the batch corresponding to $b$ and $a$ to $\Pi$. 
\item[5.] $\Pi$'s output $\Pi(\underline{r}^a_{b})$ is forwarded to $\mathcal{A}$. 
 \end{description}
 
This allows us to model the goal that the adversary is not allowed to learn the sender and receiver: We allow the adversary to pick two sender-receiver pairs, which she uses as instances for the first scenario. 
The mixed sender-receiver pairs must then be provided as instances for the second scenario (see Fig. \ref{fig:senderReceiver} (b)). 
We thus force the game adversary to provide alternative assignments for each scenario.
This way she cannot abuse the model to win the game by identifying only the sender or the receiver.
We call this property \textit{Random Sender Receiver} $R_{SR}$. 

This complex property is still not sufficient to model the situation in, for example, Tor:
The adversary can distinguish the scenarios without learning who sent to whom, just by learning which senders and which receivers are active. 
Hence, we further restrict the adversary picking instances where both senders and both receivers are active by defining the property \textit{Mix Sender Receiver} $M_{SR}$. 
Here, the adversary picks two instances for $b=0$ where her chosen sender-receiver pairs communicate, and two for $b=1$ where the mixed sender-receiver pairs communicate. 
The two instances simply swap the order in which the pairs communicate (Fig. \ref{fig:senderReceiver} (c)). 
This way, we force the adversary to provide alternative assignments  for each scenario where both suspected senders and both suspected receivers are active. 
This combination prevents the adversary from winning the game without learning the information that the real system is actually supposed to protect, i.e. the sender-receiver pair. 

\begin{figure}[htbp]
  \centering
  \includegraphics[width=0.48\textwidth]{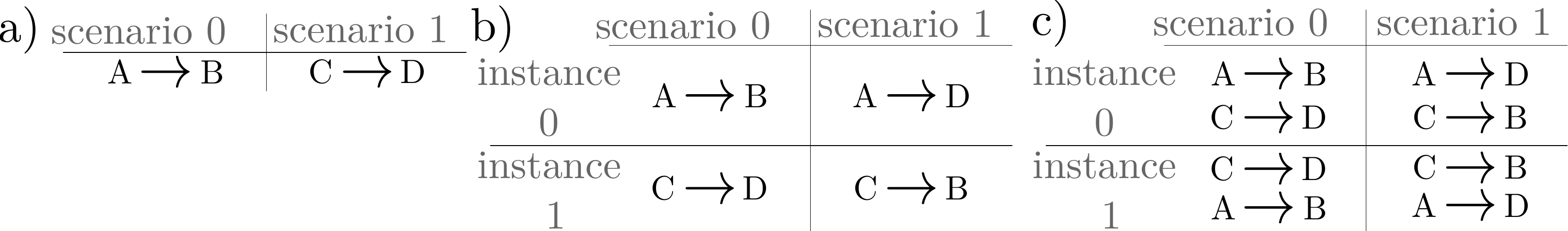}
  \caption{Examples showing the general structure of communications that differ in both scenarios: a) Naive, but incorrect b) Random Sender Receiver $R_{SR}$ c) Mixed Sender Receiver $M_{SR}$}
  \label{fig:senderReceiver}
\end{figure}
\inlineheading{Defining Complex Properties}
To simplify the formal definition of complex properties, we introduce \emph{challenge rows}. A challenge row is a pair of communications with the same index that differ in the two scenarios (e.g. ${r}_{0_j}, {r}_{1_j}$ with index $j$). 
For complex properties, the challenger only checks the differences of the challenge rows in the two scenarios.

\begin{definition}[Properties $R_{SR}$, $M_{SR}$]\label{def:complexProperties}
Let the given batches be $\underline{r}_b^a$ for instances $a \in \{0,1\}$ and scenarios $b \in \{0,1\}$, 
  $\mathsf{CR}$  the set of challenge row indexes, 
  $(u^a_0, {u'}^{\:a}_0)$ for both instances $a\in \{0,1\}$  be the sender-receiver-pairs of the first challenge row of the first scenario ($b=0$)\iflong in this challenge \fi.
Random Sender Receiver $R_{SR}$, Mixed Sender Receiver $M_{SR}$ ($R_{SM}, R_{RM}, M_{SM}, M_{RM}$ analogous) are met, iff:
\allowdisplaybreaks
\begin{align*}
   R_{SR}:\quad{r_{0}^a}_{cr}&=(\mathbf{u^{a}_0}, \mathbf{{u'}^{\:a}_0}, m^1_{0_{cr}}, aux^1_{0_{cr}})~\land \\
   {r_{1}^a}_{cr}&=(\mathbf{u^{a}_0}, \mathbf{ {u'}^{\:1-a}_0}, m^1_{0_{cr}},aux^1_{0_{cr}})\\[0.2em]
   &\forall cr \in \mathsf{CR} , a \in \{0,1\}&
\end{align*}
\begin{align*}
      M_{SR}:\quad&{r_0^a}_{cr}=(\mathbf{u^{a}_0}, \mathbf{ {u'}^{\:a}_0},m^1_{0_{cr}},aux^1_{0_{cr}})~\land \\
   &{r_0^a}_{cr+1}= (\mathbf{u^{1-a}_0}, \mathbf{ {u'}^{\:1-a}_0}, m^1_{0_{cr}},aux^1_{0_{cr}})~\land \\[0.2em]
   &{r_1^a}_{cr}=(\mathbf{u^{a}_0}, \mathbf{ {u'}^{\:1-a}_0},m^1_{0_{cr}},aux^1_{0_{cr}})~\land \\
   &{r_1^a}_{cr+1}=(\mathbf{u^{1-a}_0},\mathbf{ {u'}^{\:a}_0}, m^1_{0_{cr}},aux^1_{0_{cr}})\\[0.2em]
   &\text{for every second } cr \in \mathsf{CR} , a \in \{0,1\}& 
\end{align*}

\end{definition}
 
\inlineheading{Linking message senders}
A final common privacy goal that still cannot be covered is the unlinkability of senders over a pair of messages (\anoaULLong).
Assume a real world adversary that can determine that the sender of two messages is the same entity.
If subsequently she discovers the identity of the sender of one of the messages through a side channel, she can also link the second message to the same individual.

\inlineheadingTwo{Stages}
To model this goal, we need two scenarios  (1) both messages are sent by the same sender, and (2) each message is sent by a different sender. 
Further, the adversary picks the messages for which she wants to decide whether they are sent from the same individual, and which other messages are sent between those two messages. 
Therefore, we add the concept of \emph{stages} and ensure that only one sender sends in the challenge rows of stage 1, and in stage 2 either the same sender continues sending ($b=0$) or another sender sends those messages ($b=1$). This behavior is specified as the property \emph{Twice Sender} $T_S$. 



%
%


\begin{definition}[Property $T_S$]\label{def:complexProperties}
Let the given batches be $\underline{r}_b^a$ for instances $a \in \{0,1\}$ and scenarios $b \in \{0,1\}$,  $x$ the current stage, 
  $\mathsf{CR}$  the set of challenge row indexes, 
  $(u^a_0, {u'}^a_0)$ for both instances $a\in \{0,1\}$  be the sender-receiver-pairs of the first challenge row of the first scenario ($b=0$) \iflong in this challenge \fi in stage 1 and $(\tilde{u}^a_0,\tilde{u}'^a_0)$ the same pairs in stage 2.
Twice Sender $T_S$ is met, iff ($T_R$ analogous):
\begin{align*}
  T_{S}:&\quad x=stage1~\land\\
   &\qquad{r_0^a}_{cr}=(\mathbf{u^a_0},  {u'}^{\:0}_0,m^1_{0_{cr}},aux^1_{0_{cr}})~\land \\
   &\qquad{r_1^a}_{cr}=(\mathbf{u^a_0},  {u'}^{\:0}_0,m^1_{0_{cr}},aux^1_{0_{cr}})\\[0.2em]
   \mathbf{\lor} &\quad x=stage2~\land\\
   &\qquad{r_0^a}_{cr}= (\mathbf{u^{a}_0}, \tilde{u}'^{\:0}_0, m^1_{0_{cr}},aux^1_{0_{cr}})~\land \\
   &\qquad{r_1^a}_{cr}= (\mathbf{u^{1-a}_0}, \tilde{u}'^{\:0}_0, m^1_{0_{cr}},aux^1_{0_{cr}})\\
     &\quad\forall cr \in \mathsf{CR} , a \in \{0,1\}&
\end{align*}

\end{definition}
 Hence, we need to facilitate distinct stages for notions with the complex properties $T_S$ or $T_R$. Precisely, in step 2 of the game, the adversary is additionally allowed to switch the stages.
 
 Note that the above definition can easily be extended to having more stages and hence, more than two messages for which the adversary needs to decide whether they have originated at the same sender.
 
This set of properties allows us to specify all privacy goals that have been suggested in literature as privacy notions and additionally all that we consider important. 
It is of course difficult to claim completeness, as future \acp{ACN} may define diverging privacy goals and novel observable properties (or side-channels) may be discovered.

\section{Privacy Notions}
\label{notions}

Given the properties above, we can now set out to express intuitive privacy goals as formal privacy notions.
We start by specifying sender unobservability as an example leading to  a general definition of our privacy notions.

Recall the first game we defined in Section \ref{background}, which corresponds to sender unobservability ($\bohliSSAc$ = S(ender) $\lnot$ O(bservability)). There, in both scenarios something has to be sent, i.e. we need to specify that sending nothing is not allowed: \something. Further, both scenarios can only differ in the senders, i.e.  we also need the property that everything but the senders is equal: $\EveryButSender$. Hence, we define sender unobservability as $\bohliSSAc :=$\something$ \land \EveryButSender$.
\footnote{Technically $\EveryButSender$ already includes \something. However, to make the differences to other notions more clear, we decide to mention both in the definition.}

We define all other notions in the same way:

\begin{definition}[Notions]
Privacy notions are defined as a boolean expression of the properties 
 according to Table~\ref{NotionsDefinition}.
\end{definition}

 \begin{table}[htb]
\center
\resizebox{0.4\textwidth}{!}{%
  \begin{tabular}{ l l }

Notion&Properties  \\ \hline
$\loopixSRTPU$ &\something $ \land \EveryButSenderRec \land M_{SR}$  \\
$\anoaREL$ &\something$\land \EveryButSenderRec \land R_{SR}$ \\
$\newCONFWOL$ &\something $ \land \EveryButMsg$ \\
$\newCONF$ &\something $ \land \EveryButMsg \land |M| $ \\
$\heviaUL$&  \something$\land Q \land Q'$\\
$\bohliSA$& \something \\
$\heviaUO$& \nothing\\ \hline
$\bohliSSAc$& \something $\land \EveryButSender$\\
$\bohliSSUPc $& \something$\land \EveryButSender \land |U|$\\
$\bohliSWUPc$ &\something$\land \EveryButSender \land H$\\
$\bohliSPSc $&\something $\land \EveryButSender \land P$\\
$\bohliSSUUc$ &\something$\land \EveryButSender \land U$ \\
$\bohliSWUUc$ &\something $\land \EveryButSender \land U \land H$\\
$\bohliSANc $&\something $\land \EveryButSender \land U \land P$\\
$\bohliSWUc$ &\something $\land \EveryButSender \land Q$\\
$\bohliSWAc$ &\something $\land \EveryButSender \land Q \land P$\\
$\anoaUL$ &\something$\land \EveryButSender \land T_S $ \\
$\bohliRSAc$ \ etc.&analogous\\ \hline
\mbox{$\newAnoaSAWOL$}&\something $ \land \EveryButSenderMsg $ \\
\mbox{$\newAnoaSA$}&\something $ \land \EveryButSenderMsg \land |M| $ \\
$\newSAUO$ &\something $  \land \EveryButSenderMsg \land R_{SM}$ \\
$\newSA$&\something $  \land \EveryButSenderMsg \land M_{SM} $ \\
\mbox{$\anoaRA$} \ etc.&analogous \\ \hline
$\sgame{X'}$& Properties of $X'$, remove $\EveryButRec$ \\
\multicolumn{2}{c}{for $X' \in \{\bohliRSAc,$ $ \bohliRSUPc,$ $\bohliRWUPc,$ $\bohliRPSc,$ $\bohliRSUUc,$} \\
\multicolumn{2}{c}{$\bohliRWUUc,$ $\bohliRANc,$ $\bohliRWUc,$ $\bohliRWAc\}$}\\
$\rgame{X}$ &analogous \\ 
\end{tabular}}
 \caption{Definition of the notions. A description of simple properties was given in Table~\ref{tab:information}. }
  \label{NotionsDefinition}
\end{table} 

\iflong
Modeling the notions as a game, the respective challenger verifies all properties (and the later introduced options) of the adversary's queries.
A complete description of the challenger can be found in Appendix \ref{sec:challenger}. 
Further, an example of how the definitions can be represented by using a challenge specific state, which the challenger maintains, is shown in Algorithms \ref{Challenger} and \ref{calcNewState}  in Appendix \ref{pseudocode}.
\else
Modeling the notions as a game, the respective challenger will check all aspects of the adversary's queries.
A complete description of the challenger can be found in Appendix \ref{sec:challenger}. 
\fi

\section{On the Choice of Notions}
\label{sec:choiceNotions}

The space of possible combinations of properties, and hence of conceivable privacy notions, is naturally large.
Due to this, we verify our selection of privacy goals by finding exemple use cases.
Additionally, we demonstrate the choice and the applicability of our definition  by analyzing the privacy goals of Loopix, an \ac{ACN} that was recently published.
We additionally verify that our privacy notions include those of previous publications that suggest frameworks based on indistinguishability games, and provide a complete mapping in Section \ref{sec:ComparingFrameworks}.


\subsection{Example Use Cases for the Notions}
\label{sec:notionExamples}
We illustrate  our notions by continuing the example of an activist group trying to communicate in a repressive regime, although our notions are generally applicable. 

Recall the  general idea of an indistinguishability game from the examples in Section \ref{background}:
To prove that an \ac{ACN} hides certain properties, whatever is allowed to be learned in the actual \ac{ACN} must not help a game adversary to win. This way, she is forced to win the game solely based on those properties that are required to remain hidden.
Therefore, the information allowed to be disclosed cannot be used in the game and hence must be kept identical in both scenarios.

Before giving examples, we need to order the notions. We chose to group them semantically.
Our resulting clusters are shown as gray boxes in Figure \ref{fig:hierarchyColored}. 
Horizontally, we categorize notions that focus on receiver or sender protection (Receiver Privacy Notions or Sender Privacy Notions, respectively) or treat both with the same level of importance (Impartial Notions).
Inside those categories, we use clusters concerning the general leakage type: Both-side Unobservability means that neither senders, nor receivers or messages should be leaked.
Both-side Message Unlinkability means that it should be possible to link neither  senders nor receivers to messages. In Sender Observability, the sender of every communication can be known, but not the message she sends or to whom she sends (Receiver and Message Observability analogous). In Sender-Message Linkability, who sends which message can be known to the adversary (Receiver-Message and Sender-Receiver Linkability analogous).
\iflong
 \begin{table} [thb]
\center
\resizebox{0.45\textwidth}{!}{%
  \begin{tabular}{ c p{6cm}  }

Usage&Explanation\\ \hline
$D \in \{S,R,M\}$& Dimension $\in \{$Sender, Receiver, Message$\}$\\
Dimension $D$ not mentioned& Dimension can leak \\ 
Dimension $D$ mentioned &Protection focused on this dimension exists\\ \hline
$D \overline{O}$& not even the participating items regarding D leak,(e.g. $S\overline{O}$: not even $U$ leaks)\\
$DF \overline{L}$& participating items regarding D can leak, but not which exists how often (e.g. $SF\overline{L}$: $U$ leaks, but not $Q$)\\
$DM \overline{L}$& participating items regarding D and how often they exist can leak ( e.g. $SM\overline{L}$: $U,Q$ leaks)\\ \hline
$X -Prop, $& like X but additionally Prop can leak\\
$Prop \in \{|U|,H,P,|U'|, H',P', |M| \}$&\\  \hline
$(D_1 D_2)\overline{O}$& uses $R_{D_1 D_2}$; participating items regarding $D_1,D_2$ do not leak, (e.g. $(SR)\overline{O}$: $R_{SR}$)\\ 
$(D_1 D_2)\overline{L}$& uses $M_{D_1 D_2}$; participating items regarding $D_1,D_2$ can leak, (e.g. $(SR)\overline{L}$: $M_{SR}$)\\ 
$(2D)\overline{L}$& uses $T_{D}$; it can leak whether two participating item regarding $D$ are the same,  (e.g. $\anoaUL$: $T_{S}$)\\ \hline
$\overline{O}$&short for $S \overline{O} R\overline{O} M\overline{O} $\\
$\heviaUL$& short for $ M\overline{O}(SM\overline{L}, RM\overline{L})$\\
$S\overline{O}\{X\}$& short for $S\overline{O} M\overline{O} X$\\
$D_1 X_1[ D_2 X_2]$& $D_1$ is dominating dimension, usually $D_1$ has more freedom, i.e. $X_2$ is a weaker restriction than $X_1$ \\ \hline
$C\overline{O}$& nothing can leak (not even the existence of any communication)\\
\end{tabular}}
 \caption{Naming Scheme}
  \label{Tab:NamingScheme1}
\end{table}

We also want to explain our naming scheme, which we  summarize in Table \ref{Tab:NamingScheme1}. Our notions consider three dimensions: senders, messages and receivers.
 Each notion restricts the amount of leakage on each of those dimensions. However, only dimensions that are to be protected are part of the notion name. 
We use $\overline{O}$, short for unobservability, whenever the set of such existing items of this dimension cannot be leaked to the adversary.
 E.g. $S\overline{O}$ cannot be achieved if the set of senders $U$ is leaked. 
 Notions carrying $\overline{L}$, short for unlinkability, can leak $U$ (for sender related notions), but not some other property related to the item. 
 E.g. we use $SF\overline{L}$ if the frequency $Q$ cannot be leaked and $SM\overline{L}$, if $Q$ can be leaked, but not the sender-message relation. 
  With a ``$-Prop$'' we signal that the property $Prop$ can additionally leak to the adversary. We distinguish those properties from $U$ and $Q$ used before as they give another leakage dimension (as illustrated later in the hierarchy).
 Further, we use parentheses as in $(SR)\overline{O}$ to symbolize that if not only one set, but both sets of senders and receivers ($U$ and $U'$) are learned the notion is broken. Analogously, in $(SR)\overline{L}$ both sets can be learned but the linking between sender and receiver cannot.
For the last missing complex property, we use $(2S)\overline{L}$ to symbolize that two senders have to be linked to be the same identity to break this notion.

For readability we add some abbreviations: 
We use  $\overline{O}= S \overline{O} R\overline{O} M\overline{O} $ to symbolize unobservability on all three types and we summarize the remaining types in $ M\overline{O}(SM\overline{L}, RM\overline{L})$ to $\heviaUL$. $C\overline{O}$ symbolizes the notion in which nothing is allowed to leak.
Further, we use curly brackets to symbolize that the message cannot be leaked $ S\overline{O}\{X\}= S\overline{O} M\overline{O} X$ and we put the (in our understanding) non dominating part of the notion in brackets $S\overline{O} M\overline{O}= S\overline{O}[M\overline{O}]$.

\else
Table \ref{Tab:NamingScheme} of Appendix \ref{app:summaryNamingScheme} summarizes our naming scheme.
\fi


\vspace{-0.3cm}
\subsubsection{Impartial Privacy Notions}
\vspace{-0.2cm}
These notions treat senders and receivers equally.

\inlineheading{Message Observability}
The content of messages can be learned in notions of this group, as messages are not considered confidential. Because the real world adversary can learn the content, we must prevent her from winning the game trivially by choosing different content. 
Hence, such notions use the property that the scenarios are identical except for the senders and receivers ($\EveryButSenderRec$) to ensure that the messages are equal in both scenarios. 

\example{An activist of the group is already well-known and communication with that person leads to persecution of Alice.}

Alice needs a protocol that hides whether a certain sender and receiver  communicate with each other; cf. Section \ref{sec:ComplexProperties} motivation of the complex property $M_{SR}$.
The resulting notion is \emph{\loopixSRTPULong \ ($\loopixSRTPU$)}.

\exampleCont{Only few people participate in the protocol. Then, just using the protocol  to receive (send) something, when the well known activist is acting as sender (receiver) threatens persecution. }

Alice needs a protocol that hides whether a certain sender and receiver actively participate at the same time or not; cf. Section \ref{sec:ComplexProperties} motivation of the complex property $R_{SR}$.
The resulting notion is \emph{\anoaRELLong\  ($\anoaREL$)}.

\inlineheading{Sender-Receiver Linkability (Message Confidentiality)}
Senders and receivers can be learned  in notions of this group, because they are not considered private.
Hence, such notions include the property that the scenarios are identical, except for the messages ($\EveryButMsg$) to ensure that the sender-receiver pairs are equal in both scenarios.

\example{Alice wants to announce her next demonstration. (1) Alice does not want the regime to learn the content of her message and block this event. (2) Further,  she is afraid that the length of her messages could put her under suspicion, e.g. because activists tend to send messages of a characteristic length.}

In (1) Alice needs a protocol that  hides the content of the messages. However, the adversary is allowed to learn all other attributes,
in particular the length of the message.
Modeling this situation, the scenarios may differ solely in the message
content; all other attributes must be identical in both scenarios, as
they may not help the adversary distinguish between them.
Beyond the above-described $\EveryButMsg$, we must thus also request that the
length of the messages $|M|$ is identical in both scenarios.
The resulting notion is  \emph{\newCONFLong\ ($\newCONF$)}\footnote{We stick to our naming scheme here, although we would commonly call this confidentiality.}.

In the second case (2), the protocol is required to hide the length of the message.
The length of the messages thus may differ in the two scenarios, as the
protocol will need to hide this attribute. Hence, we remove the restriction that the message length $|M|$ has to be equal in both scenarios from the above notion and end up with \emph{\newCONFWOLLong\ $\newCONFWOL$}.

\iflong
\inlineheading{Both-Side Message Unlinkability}
Notions of this group are broken if the sender-message or receiver-message relation is revealed.

\example{The activists know  that their sending and receiving frequencies are similar to regime supporters' and that using an \ac{ACN} is in general not forbidden, but nothing else. Even if the content and length  of the message ($\newCONFWOL$) and the sender-receiver relationship ($\loopixSRTPU$)  is hidden, the regime might be able to distinguish uncritical from critical communications, e.g. whether two activists communicate  ``Today'' or  innocent users an innocent message.  In this case, the regime might learn that currently many critical communications take place and improves its measures against the activists.}

In this case, the activists want a protocol that hides the communications, i.e. relations of sender, message and receiver. However, as using the protocol is not forbidden and their sending frequencies are ordinary, the adversary can learn which users are active senders or receivers and how often they sent and receive. Modeling this, the users need to have the same sending and receiving frequencies in both scenarios $Q,Q'$, since it can be learned. However, everything else needs to be protected and hence, can be chosen by the adversary. This corresponds to the notion  \emph{\heviaULLong \ ($\heviaUL$)}.
\fi

\inlineheading{Both-Side Unobservability}
Even the activity of a certain sender or receiver is hidden in notions of this group.

\exampleCont{It is a risk for the activists, if the regime can distinguish between two leading activists exchanging the message ``today'' and two loyal regime supporters exchanging the message ``tomorrow''.}

 In this case, Alice wants to disclose nothing about senders, receivers, messages or their combination. However, the adversary can learn the total number of communications happening in the \ac{ACN}. Modeling this, we need to assure that for every communication in the first scenario, there exists one in the second. We achieve this by prohibiting the use of the empty communication with property \something. This results in the notion \emph{\bohliSALong \ ($\overline{O}$)}.

\example{The regime knows that a demonstration is close, if the total number of communications transmitted over this protocol increases. It then prepares to block the upcoming event.}

To circumvent this, Alice needs a protocol  that additionally hides the total number of communications. Modeling this, we need to allow the adversary to pick any two scenarios. Particularly, use of the empty communication  $\Diamond$ is allowed. This is represented in the property that nothing needs to be equal in the two scenarios, \nothing\ ,  and results in the notion \emph{\heviaUOLong \ ($\heviaUO$)}.
Note that this is the only notion where the existence of a communication is hidden. All other notions include \something\ and hence do not allow for the use of the empty communication.


\subsubsection{Sender (and Receiver) Privacy Notions}
\vspace{-0.2cm}
These notions allow a greater freedom in picking the senders (or
receivers: analogous notions are defined for receivers.).

\inlineheading{Receiver-Message Linkability}
The receiver-message relation can be disclosed in notions of this group.
Hence, such notions include the property that the scenarios are identical except for the senders ($\EveryButSender$) to ensure the receiver-message relations are equal in both scenarios.

In \emph{\bohliSWUcLong \ ($\bohliSWUc$)}  the total number of communications and how often each user sends can be additionally learned.
However,  who sends which message is hidden.
In \emph{\bohliSSUUcLong } \emph{($\bohliSSUUc$)}  the set of users and the total number of communications can be additionally disclosed.
However,  how often a certain user sends is hidden, since it can vary between the two scenarios.
In \emph{\bohliSSAcLong \ ($\bohliSSAc$)}, the total number of communications can additionally be disclosed.
 However, especially the set of active senders $U_b$ is hidden.

If a notion further includes the following abbreviations, the following information can  be disclosed as well:
\begin{itemize}
 \item \emph{with User Number Leak}  ($-|U|$): the number of senders that send something in the scenario
\item \emph{with Histogram Leak}  ($-H$):  the histogram of how many senders send how often 
\item \emph{with Pseudonym Leak}  ($-P$): which messages are sent from the same user 
\end{itemize} \vspace*{-\baselineskip}

\exampleSpec{Alice is only persecuted when the regime can link a message with compromising content to her {\normalfont -- she needs a protocol that at least provides $\bohliSWAc$.} 
However, since such a protocol does not hide the message content, the combination of all the messages she sent might lead to her identification. {\normalfont Opting for a protocol that additionally hides the message combination ($P$), i.e. provides $\bohliSWUc$,  can protect her from this threat.} \\
Further, assuming most users send compromising content, and Alice's message volume is high, the regime might easily suspect her to be the origin of some compromising messages even if she is careful that the combination of her messages does not reidentify her {\normalfont -- she  needs a protocol that does not disclose her sending frequencies ($Q$) although the combination of her messages ($P$) might be learned, i.e. achieving $\bohliSANc$.}
However, Alice might fear disclosing the combination of her messages {\normalfont - then she needs a protocol achieving at least $\bohliSWUUc$, which hides the frequencies ($Q$) and the message combination ($P$), but discloses the sending histogram, i.e. how many people sent how many messages ($H$).}
However, if  multiple activist groups use the \ac{ACN} actively at different time periods, disclosing the sending histogram $H$ might identify how many activist groups exist and to which events they respond by more active communication {\normalfont -- to prevent this she needs a protocol that hides the frequencies $Q$ and the histogram $H$, i.e. provides $\bohliSSUUc$.} \\
Further, not only sending a certain content, but also being an active sender (i.e. being in $U$) is prosecuted {\normalfont she might want to pick a protocol with at least $\bohliSPSc$. 
Again if she is afraid that leaking $P$ or $H$ together with the expected external knowledge of the regime would lead to her identification, she picks the corresponding stronger notion.}
If the regime knows that senders in the \ac{ACN} are activists and learns that the  number of active senders is high, it blocks the \ac{ACN}. 
{\normalfont In this case at least $\bohliSSAc$ should be picked to hide the number of senders ($|U|$).}}

\example{ For the next protest, Alice sends two messages: (1) a location, and (2) a time. If the regime learns that both messages are from the same sender, they will block the place at this time even if they do not know who sent the messages.} 
Alice then needs a protocol that hides whether two communications have the same sender or not. We already explained how to model this with complex property $T_{S}$ in  Section \ref{sec:ComplexProperties}.
The resulting notion is \anoaULLong ($\anoaUL$).

\iflong

\inlineheading{Receiver Observability}
In notions of this group the receiver of each communication can be learned.
Hence, such notions include the property that the scenarios are equal except for the senders and messages ($\EveryButSenderMsg$) to ensure that they are equal in both scenarios.

\example{Consider not only sending real messages is persecuted, but also the message content or any combination of senders and message contents is exploited by the regime. If the regime e.g. can distinguish activist Alice sending ``today'' from regime supporter Charlie sending ``see u'', it might have learned an information the activists would rather keep from the regime.  Further, either (1) the activists know that many messages of a certain length are sent or (2) they are not sure that  many messages of a certain length are sent.}

In case (1), Alice needs a \ac{ACN}, that hides the sender activity, the message content and their combination. However, the adversary can especially learn the message length. Modeling this, beyond the above described $\EveryButSenderMsg$, the message lengths have to be equal $|M|$. This results in the notion  \emph{\newAnoaSALong}  ($\newAnoaSA$).
Note that in $\newAnoaSA$ the properties of $\newCONF$ are included and further the senders are allowed to differ in the two scenarios.
The second case (2) requires a protocol that additionally hides the message length. Hence, in modeling it we remove the property that the message lengths are equal $|M|$ from the above notion. This results in \emph{\newAnoaSAWOLLong}  ($\newAnoaSAWOL$).

\example{ Alice's demonstration is only at risk if the regime can link a message with a certain content to her as a sender with a non negligible probability.} 
Then at least \emph{\newSALong\  ($\newSA$)}, which is defined analogous to $\loopixSRTPU$ is needed. 

\exampleCont{However, $\newSA$ only allows Alice to claim that not she, but Charlie sent a critical message $m_a$ and the regime cannot know or guess better. Now assume that Dave is also communicating, then the regime might be able to distinguish Alice sending $m_a$, Charlie $m_c$ and Dave $m_d$ from Alice sending $m_d$, Charlie $m_a$ and Dave $m_c$. In this case, it might not even matter that Alice can claim that Charlie possibly sent her message. The fact that when comparing all three communications that possibly happened, Alice is more likely to have sent the critical message $m_a$ means a risk for her.}

 To circumvent this problem Alice needs a protocol that not only hides the difference between single pairs of users, but any number of users. Modeling this, instead of the complex property $M_{SM}$, we need to restrict that the active senders' sending frequencies are equal, i.e. $\bohliSWUc$.

\example{In another situation our activists already are prosecuted for being a sender while a message with critical content is sent. }

In this case at least  \emph{\newSAUOLong\  ($\newSAUO$)}, which is defined analogous to $\anoaREL$ is needed.

Analogous notions are defined for receivers.

\inlineheading{Sender Privacy Notions: Both-Side Message Unlinkability}
As explained with the example before in the case that Alice does not want any information about senders, receivers and messages or their combination to leak, she would use $\overline{O}$. However, the privacy in this example can be tuned down, if she assumes that the regime does not have certain external knowledge or that the users are accordingly careful. As explained for the Sender Notions with Receiver-Message Linkability before, in this case we might decide to allow $U', |U'|,Q',H',P'$ to leak.

If a notion $X \in \{\bohliRSAc, \bohliRSUPc, \bohliRWUPc, \bohliRPSc,\bohliRSUUc,\bohliRWUUc,\bohliRANc,\bohliRWUc,\bohliRWAc\}$ is extended to \emph{Sender Unobservability by X} \emph{($ \sgame{X}$)}, the leaking of the sender-message relation is removed.
This is done by removing $\EveryButRec$.
 Since the attacker now has a greater degree of freedom in choosing the senders and is (if at all) only restricted in how she chooses the receivers and messages, this is a special strong kind of Sender Unobservability.
 Analogous notions are defined for receivers.\footnote{Note that $\sgame{\bohliRSAc}=\rgame{\bohliSSAc}=\bohliSA$.}
\else
Due to page limits the examples for the remaining notions can be found in Appendix \ref{app:examples}.
\fi


\vspace{-0.5cm}
\subsection{Analyzing Loopix's Privacy Goals}
\label{application}
\label{sec:loopix}
\vspace{-0.2cm}

 To check if we include currently-used privacy goals, we decide on a current \ac{ACN} that has defined its goals based on an existing analytical framework  and which has already been analyzed: the Loopix anonymity system \cite{piotrowska17loopix}. In this section, we show that the privacy goals of  Loopix map to notions we have defined (although the naming differs).
Loopix aims for Sender-Receiver Third-Party Unlinkability, Sender online Unobservability and Receiver Unobservability.  

\inlineheading{Sender-Receiver Third-Party Unlinkability}
Sender-Receiver Third-Party Unlinkability means that an adversary cannot distinguish scenarios where two receivers are switched:
\begin{quote}
``The senders and receivers should be unlinkable by any unauthorized party. Thus, we consider an adversary that wants to infer whether two users are communicating. We define \emph{sender-receiver third party unlinkability} as the inability of the adversary to distinguish whether $\{S_1\rightarrow R_1, S_2 \rightarrow R_2\}$ or $\{S_1 \rightarrow R_2, S_2 \rightarrow R_1\}$ for any concurrently online honest senders $S_1,S_2$ and honest receivers $R_1,R_2$ of the adversary's choice.''  \cite{piotrowska17loopix}
\end{quote}
The definition in Loopix allows the two scenarios to be distinguished by learning the first receiver.
We interpret the notion  such that it is only broken if the adversary learns a sender-receiver-pair, which we assume is what is meant in \cite{piotrowska17loopix}. This means that the sender and receiver of a communication must be learned and is exactly the goal that motivated our introduction of complex properties: $\loopixSRTPU$.

\inlineheading{Unobservability}
In sender online unobservability the adversary cannot distinguish whether an adversary-chosen sender communicates ($\{S\rightarrow\}$) or not \mbox{($\{S \centernot \rightarrow\}$)}:
\begin{quote}
``Whether or not senders are communicating should be hidden from an unauthorized third party. We define \emph{sender online unobservability} as the inability of an adversary to decide whether a specific sender $S$ is communicating with any receiver $\{S \rightarrow\}$ or not $\{S \centernot \rightarrow\}$, for any concurrently online honest sender $S$ of the adversary's choice.''  \cite{piotrowska17loopix}
\end{quote}
Receiver unobservability is defined analogously.

\iflong
 \begin{table} 
\center
\resizebox{0.3\textwidth}{!}{%
  \begin{tabular}{ c c c }

Notion&Name&Aspects\\ \hline
$\loopixSUO$ &Loopix's Sender Unobservavility& $ \EqualOrNothing$\\ 
$\loopixRUO$ & Loopix's Receiver Unobservability&$\EqualOrNothing$\\
$\loopixSUONe$ &Restricted Sender Unobservability& $\centernot \rightarrow \land \EveryButSender$  \\
$\loopixRUONe$ &Restricted Receiver Unobservability& $\centernot \rightarrow' \land \EveryButRec$ \\
\end{tabular}}
 \caption{Definition of the Loopix notions}
  \label{NotionsLoopix}
\end{table}
\fi

Those definitions are open to interpretation.
On the one hand, $\{S \centernot \rightarrow\}$ can mean that there is no corresponding communication in the other scenario.
\iflong
This corresponds to our $\Diamond$ and the definition of $\loopixSUO$  and $\loopixRUO$ according to Table \ref{NotionsLoopix}.
\else
This corresponds to our $\Diamond$ and the definition of $\loopixSUO$  and $\loopixRUO$ in  Appendix \ref{Loopix1Sketch}.
\fi
When a sender is not sending in one of the two scenarios, this means that there will be a receiver receiving in the other, but not in this scenario.
Hence, $\loopixSUO$  can be broken by learning about receivers and the two notions are equal.
\iflong
These notions are equivalent to $\heviaUO$:
\begin{theorem}
\label{loopixUO}
It holds that
\[(c, \epsilon, \delta)-\heviaUO \Rightarrow (c, \epsilon, \delta)- \loopixSUO_{CR_1}\text{.}\]
\[(c, \epsilon, \delta)-\heviaUO \Leftarrow (2c, \epsilon, \delta)- \loopixSUO_{CR_1}\text{.}\]
\end{theorem}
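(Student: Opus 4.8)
The theorem asserts an equivalence (up to a factor of 2 in the number of challenge rows) between the impartial unobservability notion $\heviaUO$ (instantiated with $\nothing$, i.e. nothing is checked) and Loopix's sender online unobservability $\loopixSUO$ restricted to a single challenge row. The strategy is the standard reduction-in-both-directions argument: from an adversary against one game, build an adversary against the other that wins with the same advantage (possibly spending more challenge rows). Concretely, I would first unfold both definitions. For $\heviaUO$, a valid batch query is \emph{any} pair of batches $\underline{r}_0,\underline{r}_1$, since $\nothing$ imposes $\top$. For $\loopixSUO$ with the $\EqualOrNothing$ property, each challenge row must be of the form ``some sender $S$ communicates'' versus ``$\Diamond$'' (nothing), with everything else equal across the two scenarios — i.e. the only freedom is toggling a single communication on/off.

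\textbf{Direction $\heviaUO \Rightarrow \loopixSUO_{CR_1}$ (this is the ``$\Leftarrow$'' line read contrapositively; the factor stays at $c$).} Given a $\loopixSUO_{CR_1}$-adversary $\mathcal{A}$, I would build a $\heviaUO$-adversary $\mathcal{B}$ that simply relays: whenever $\mathcal{A}$ issues a $\loopixSUO$ batch query (one challenge row toggling a communication $r$ on/off, rest equal), $\mathcal{B}$ forwards the same pair of batches to its own challenger — this is a legal $\heviaUO$ query because $\nothing$ checks nothing — receives $\Pi$'s output, hands it back to $\mathcal{A}$, and finally echoes $\mathcal{A}$'s guess. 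The distributions are identical, so the advantage carries over verbatim with no blow-up in challenge rows, which is why one $c$ suffices on the right in the stated $(2c,\epsilon,\delta)$ line — wait, more precisely this direction gives the clean $c\mapsto c$ inclusion and matches the first displayed implication.

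\textbf{Direction $\loopixSUO_{CR_1} \Rightarrow \heviaUO$ (the second displayed line, with the factor $2$).} This is the interesting one. Given a $\heviaUO$-adversary $\mathcal{A}$ that distinguishes two arbitrary scenarios $\underline{r}_0$ versus $\underline{r}_1$, I would decompose the difference between the two scenarios into a hybrid sequence: scenario $\underline{r}_0$ and scenario $\underline{r}_1$ are both sequences of communications, and I want to morph one into the other one challenge row at a time, where each single-row step is expressible as a legal $\loopixSUO_{CR_1}$ query. The natural route: morph $\underline{r}_0$ into the all-$\Diamond$ batch, then morph the all-$\Diamond$ batch into $\underline{r}_1$. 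Turning a single communication into $\Diamond$ (or vice versa) while keeping all other rows fixed is exactly an $\EqualOrNothing$ challenge row — but note $\loopixSUO$ is a \emph{sender} unobservability notion with a single challenge row, so each hybrid step costs one challenge row, and a communication being present means the corresponding receiver is active too, so toggling it off legitimately removes that receiver, consistent with the discussion preceding the theorem that $\loopixSUO$ can be broken via receiver information. Each of the at most $c$ rows in which $\underline{r}_0,\underline{r}_1$ differ needs to be routed through the intermediate $\Diamond$, giving two toggles per differing row, hence the $2c$ bound. A standard hybrid argument then shows that if $\mathcal{A}$ has advantage $\alpha$ in distinguishing the endpoints, some consecutive pair of hybrids is distinguished with advantage $\ge \alpha/(2c)$ — but since we're in the $(c,\epsilon,\delta)$-differential-privacy style formulation rather than a pure advantage bound, I'd instead argue multiplicatively: chaining $2c$ applications of the $(1,\epsilon,\delta)$-style inequality (Definition~\ref{def:achieveEpsilon}) for $\loopixSUO$ telescopes to an $e^{\epsilon}$ factor and an additive $\delta$ term for $\heviaUO$ with $c$ challenge rows, using that $e^{\epsilon}\cdot e^{\epsilon'}=e^{\epsilon+\epsilon'}$ is handled by the per-hybrid accounting, and the $\delta$'s add up across hybrids with the $e^\epsilon$ prefactors bounded — this is the AnoA-style composition bookkeeping.

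\textbf{Main obstacle.} The delicate part is the hybrid bookkeeping in the second direction: making the telescoping of the $(\epsilon,\delta)$ inequalities precise so that $2c$ single-row $\loopixSUO$ steps compose into a single $\heviaUO$ statement with the \emph{same} $\epsilon$ and $\delta$ (not $2c\cdot\epsilon$ or $2c\cdot\delta$) — this only works because the $\heviaUO$ game itself is allowed $c$ challenge rows, so each of its rows is ``charged'' against two $\loopixSUO$ rows and the per-row $\epsilon,\delta$ budget is reused, not summed, across the reduction; one must also check that intermediate hybrids where a prefix of rows is already switched to scenario $1$ and a suffix is still scenario $0$ are themselves realizable as the challenger's scenarios (they are, since $\heviaUO$/$\nothing$ permits anything), and that the instance/coin-flip mechanics of the game compose correctly. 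A secondary subtlety is confirming that toggling a communication off is genuinely a valid $\EqualOrNothing$ challenge row when that communication's receiver then disappears — which the paragraph immediately before the theorem already argues, so I would cite that observation rather than re-derive it.
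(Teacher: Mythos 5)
Your first direction is fine and matches the paper: every $\EqualOrNothing$-valid query is trivially valid for the \nothing\ property, so a $\loopixSUO_{CR_1}$ attack relays unchanged into a $\heviaUO$ attack. The gap is in the second direction. You propose a hybrid chain of $2c$ single-row toggles from $\underline{r}_0$ through the all-$\Diamond$ batch to $\underline{r}_1$, and claim the $(\epsilon,\delta)$ inequalities ``telescope'' to a single $e^{\epsilon}$ factor and a single additive $\delta$ because the per-row budget is ``reused, not summed.'' That is not how Definition~\ref{def:achieveEpsilon} works: the guarantee is one global inequality between the output distributions at the two endpoints, not a per-row budget. Chaining $2c$ applications of $\Pr[0\mid H_i]\le e^{\epsilon}\Pr[0\mid H_{i+1}]+\delta$ yields at best $\Pr[0\mid H_0]\le e^{2c\epsilon}\Pr[0\mid H_{2c}]+\delta\sum_{j=0}^{2c-1}e^{j\epsilon}$, i.e.\ a $(2c\epsilon,\,\delta\sum_j e^{j\epsilon})$-type bound (or a loss of a factor $2c$ in the advantage formulation), not $(\epsilon,\delta)$. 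So your route cannot deliver the theorem as stated.

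The missing idea is that no hybrid is needed. Given a successful $\heviaUO$ attack, replace each of its challenge rows $(r_0,r_1)$ by the \emph{two} rows $(r_0,\Diamond)$ and $(\Diamond,r_1)$ inside a \emph{single} pair of scenarios. Each new row satisfies $\EqualOrNothing$ individually (one side is always $\Diamond$), so the query is $\loopixSUO$-valid with one challenge row per challenge; scenario $0$ of the new game executes exactly the communications of the original scenario $0$ (interleaved with no-ops), and likewise for scenario $1$. The adversary's view, and hence its success probability, is preserved exactly --- same $\epsilon$, same $\delta$ --- at the cost of doubling the number of challenge rows, which is precisely where the $2c$ comes from. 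This direct scenario rewriting is the paper's construction and is what makes the parameters go through unchanged.
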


\begin{proofsketch}
$\heviaUO \implies \loopixSUO$  by definition.
For $\loopixSUO \implies \heviaUO$  we use the following argumentation: Given an attack on $\heviaUO$, we can construct an attack on $\loopixSUO$ with the same success. Assume a protocol has $\loopixSUO$, but not $\heviaUO$. Because it does not achieve $\heviaUO$, there exists a successful attack on $\heviaUO$. However, this implies that there exists a successful attack on $\loopixSUO$ (we even know how to construct it). This contradicts that the protocol has $\loopixSUO$.
We construct an successful attack on $\loopixSUO$ by creating two new batches  $(r_0, \Diamond  )$ and $( \Diamond , r_1)$  for every challenge row $(r_0,r_1)$ in the successful attack on $\heviaUO$.
\end{proofsketch}
\else
These notions are equivalent to $\heviaUO$ (see Appendix \ref{Loopix1Sketch}).
\fi

On the other hand, $\{S \centernot \rightarrow\}$ can mean that sender $u$ does not send anything in this challenge.
In this case, the receivers can experience the same behavior in both scenarios and the notions differ.
\iflong
We  formulate these notions as $\loopixSUONe$ and $\loopixRUONe$ according to Table \ref{NotionsLoopix}.
Therefore, we need a new property that some sender/receiver is not participating in any communication in the second scenario:

\begin{definition}[Property $\centernot\rightarrow$]
Let $u$ be the sender of the first scenario in the first challenge row of this challenge.
We say that $\centernot \rightarrow$ is fulfilled iff  for all $j:$ $  u_{1_j}\neq u$.
(Property $\centernot \rightarrow'$ is defined analogously for receivers.) 
\end{definition}

\begin{theorem}
\label{loopixUO2}
It holds that
\begin{align*} (c, \epsilon, \delta)-\bohliSSAc &\Rightarrow (c, \epsilon, \delta)- \loopixSUONe \text{ and}\\
 (c, 0, 2\delta)-\bohliSSAc &\Leftarrow (c, 0, \delta)- \loopixSUONe \text{.}
 \end{align*}
 
\end{theorem}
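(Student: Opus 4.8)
The plan is to prove the two implications separately, in both cases by a contrapositive/reduction argument: from an adversary that breaks the right-hand notion we construct an adversary that breaks the left-hand notion (with the stated change in parameters), and vice versa. Since $\loopixSUONe$ uses the property $\centernot\rightarrow \land \EveryButSender$ while $\bohliSSAc$ uses $\something \land \EveryButSender$, the only structural difference between the two games is \emph{which} pairs of batches the challenger accepts as valid; the protocol model $\Pi$, the interaction pattern, and the winning condition are identical. So both reductions are essentially ``relay with a translation of the batch queries'', exactly in the style of the sketch of Theorem \ref{loopixUO}.

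For the forward direction $(c,\epsilon,\delta)\text{-}\bohliSSAc \Rightarrow (c,\epsilon,\delta)\text{-}\loopixSUONe$, first I would observe that every query that is valid for $\loopixSUONe$ is also valid for $\bohliSSAc$: if the batches satisfy $\centernot\rightarrow \land \EveryButSender$ then in particular they differ only in senders and neither contains $\Diamond$ (the distinguished sender $u$ must actually send in scenario $0$, and $\EveryButSender$ forces the same communication slots to be nonempty in scenario $1$), so $\something \land \EveryButSender$ holds. Hence any adversary $\mathcal{A}$ against $\loopixSUONe$ is, verbatim, a legal adversary against $\bohliSSAc$ with the same number of challenge rows $c$, the same $\Pi$, and identical output distributions in both worlds; the advantage is therefore preserved, and the $(c,\epsilon,\delta)$ bound for $\bohliSSAc$ transfers directly. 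No parameter blow-up is needed here, which is why the statement has $(c,\epsilon,\delta)$ on both sides for this direction.

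For the reverse direction $(c,0,2\delta)\text{-}\bohliSSAc \Leftarrow (c,0,\delta)\text{-}\loopixSUONe$, the main work is to simulate a general $\EveryButSender$-challenge (which may change the sender in \emph{any} challenge row arbitrarily, including to a sender already active elsewhere) using only $\loopixSUONe$-challenges, where the second scenario must entirely omit one fixed sender $u$. The natural approach is a hybrid argument over the challenge rows: given an adversary $\mathcal{A}$ breaking $\bohliSSAc$, consider intermediate scenarios interpolating between $\underline{r}_0$ and $\underline{r}_1$ one challenge row at a time; consecutive hybrids differ in a single sender, so each neighbouring pair \emph{can} be phrased (after the usual swap of roles to make the differing sender the ``distinguished'' $u$ that is absent on one side) as a legal $\loopixSUONe$ query — using $\Diamond$ on the missing side is \emph{not} allowed here, so instead one routes the communication through the replacement sender on both sides of the $\loopixSUONe$ challenge and isolates the single-sender change. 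Summing advantages over the hybrids and invoking $(c,0,\delta)\text{-}\loopixSUONe$ at each step gives a bound; the factor $2\delta$ (rather than, say, $c\delta$) suggests the intended proof is not a row-by-row hybrid but a cleverer two-step construction: split an $\EveryButSender$ challenge into two $\loopixSUONe$-type sub-challenges, one that ``removes'' the senders of scenario $0$ and one that ``inserts'' the senders of scenario $1$, much as $\heviaUO$ was split into $\loopixSUO$ in Theorem \ref{loopixUO}. Each sub-challenge costs $\delta$, giving $2\delta$ total, and $\epsilon$ must be set to $0$ because composing two $e^\epsilon$-type bounds would otherwise yield $e^{2\epsilon}$, which is not of the required form.

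The step I expect to be the real obstacle is exactly this: showing that an arbitrary $\EveryButSender$ difference between two scenarios can be decomposed into (at most) two differences each expressible as a legal $\loopixSUONe$ query, given the asymmetric constraint that the second scenario of a $\loopixSUONe$ query must completely avoid one designated sender while still forbidding $\Diamond$. One has to argue that an appropriate intermediate ``all-senders-replaced-by-a-fixed-dummy'' scenario is reachable by valid queries from both endpoints, and that the challenger's checks (which fix the distinguished sender-receiver pair from the \emph{first} challenge row of the \emph{first} scenario) are satisfied in each sub-challenge; the bookkeeping of which sender plays the role of $u$ in each sub-challenge, and ensuring message/receiver/aux components stay equal throughout, is where the care is needed. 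Once the decomposition is in place, the probability accounting is routine: a triangle inequality over the two sub-challenges, each bounded by $\delta$ via $(c,0,\delta)\text{-}\loopixSUONe$.
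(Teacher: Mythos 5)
Your plan matches the paper's proof: the forward direction is the same observation that every valid $\loopixSUONe$ attack is verbatim a valid $\bohliSSAc$ attack, and for the reverse direction the paper does exactly the two-step construction you converge on --- it builds an intermediate scenario in which the distinguished senders of the first challenge row are replaced by a fresh sender not used there, so that each original scenario paired with the intermediate one is a legal $\loopixSUONe$ query, and then a triangle-inequality/averaging argument over the two sub-challenges yields the factor $2\delta$. The only cosmetic difference is that the paper replaces only occurrences of the two first-challenge-row senders rather than all senders by a dummy, which is precisely the bookkeeping you flagged as needing care.
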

\begin{proofsketch}
Analogously to Theorem \ref{loopixUO}.
$\Rightarrow$: Every attack on $\loopixSUONe$  is by definition a valid attack on $\bohliSSAc$.

$\Leftarrow$: Given an attack $\mathcal{A}$ on $(c, 0, 2\delta)-\bohliSSAc$, where both scenarios of a challenge use all users (otherwise it would be a valid attack on $\loopixSUONe$).
Let $(r_{2_1}, \dots, r_{2_l})$ be the same batch as the second of  $\mathcal{A}$ except that whenever one of the two senders of the first challenge row  from the original scenarios is used, it is replaced with an arbitrary other sender (that was not used in the first challenge row of the original scenarios).
Let 
$P(0|2)$ be the probability that $\mathcal{A}$ outputs 0, when the new batches are run; $P(0|0)$ when the first scenario of $\mathcal{A}$ is run and $P(0|1)$ when the second is run.
In the worst case for the attacker $P(0|2)=\frac{P(0|0)+P(0|1)}{2}$ (otherwise we would replace the scenario $b$ where $|P(0|2)-P(0|b)|$ is minimal with the new one and get better parameters in the following calculation).
Since $\mathcal{A}$ is an attack on $(c, 0, 2\delta)-\bohliSSAc$, $P(0|0)>  P(0|1) +  2\delta$.
Transposing and inserting the worst case for $P(0|2)$ leads to: $ P(0|0) > 2P(0|2)- P(0|0)+ 2 \delta \iff  P(0|0)> P(0|2) +  \delta $.
Hence, using $\mathcal{A}$ with the adapted scenario is an attack on  $(c, 0, \delta)- \loopixSUONe$\footnote{An analogous argumentation works for $(c, \epsilon- \ln(0.5), \delta)-\bohliSSAc \Leftarrow (c, \epsilon, \delta)- \loopixSUONe$.}.
\end{proofsketch}
\else
We formulate this notion and argue its equivalence  to $\bohliSSAc$ (with a change in parameters) in Appendix \ref{Loopix2Sketch}.
\fi
This is equivalent to AnoA's sender anonymity $\alpha_{SA}$. Analogously, Loopix's corresponding receiver notion is equivalent to  $\bohliRSAc$, which is even weaker than AnoA's receiver anonymity.


\inlineheading{Remark}
We do not claim that the Loopix system achieves or does not achieve any of these notions, 
since we based our analysis on the definitions of their goals, which were not sufficient to unambiguously derive the corresponding notions.

\iflong
\section{Options for Notions}
\label{sec:options}

Additionally to the properties, we define options. Options can be added to any notion and allow for a more precise mapping of real world protocols, aspects of the adversary model, or easier analysis by quantification.

\vspace{-0.3cm}
\subsection{Protocol-dependent: Sessions}\label{sec:sessions}
 \vspace{-0.3cm}
Some \ac{ACN} protocols, like e.g. Tor, use sessions. Sessions encapsulate sequences of communications from the same sender to the same receiver by using the same session identifier for them.  In reality, the adversary might be able to observe the session identifiers but (in most cases) not to link them to a specific user.

To model sessions, we therefore set the auxiliary information of a communication to the session ID ($sess$): $aux=sess$.
 However, as the adversary can choose this auxiliary information, we need to ensure that the scenarios cannot be distinguished just because the session identifier is observed.
Hence, we definine $sess$ to be a number in most communications. Only for the session notions, we require special session IDs that correspond to the current challenge $\Psi$ and stage $x$ in all challenge rows: $(x,Ch\Psi)$. In this way, they have to be the same in both scenarios and a concrete $sess$ is only used in one stage of one challenge.

 The session identifier that is handed to the \ac{ACN} protocol model is a random number that is generated  by the challenger when a new $sess$ is seen. Hence, neither leaking (it is a random number) nor linking session identifiers (it will be picked new and statistically independent for every challenge and stage) will give the attacker an advantage.

We formalize this in the following definition, where we also use `$\_$’ to declare that this part of a tuple can be any value.\footnote{E.g.
$\exists(u,m,\_) \in r$ will be true iff $\exists u': \exists(u,m,u') \in r$.}

\begin{definition}[Sessions]\label{def:sessions}
Let $x$ be the stage and $u^a_0,u^a_1,u'^a_0,u'^a_1$ be the senders and receivers of the first challenge row of this challenge $\Psi$ and stage in instance $a \in \{0,1\}$.
Property $sess$ is met, iff for all $a\in \{0,1\}$:
{\footnotesize
\begin{align*}
  sess&: \forall (r^a_0,r^a_1)\in \mathsf{CR}(\underline{r}^a_0,\underline{r}^a_1):(r^a_0,r^a_1)=\\
  &\quad (u^a_0,u'^a_0,\_,(x, Ch\Psi) ),(u^a_1,u'^a_1,\_,(x,Ch\Psi)) 
\end{align*}
}
\end{definition}
\vspace{-0.3cm}
 As not all protocols use sessions, we allow to add sessions as an option to the notion $X$ abbreviated by $\manySess{X}$.

\subsection{Adversary Model: Corruption} \label{sec:advCap}
Some adversary capabilities like user corruption imply additional checks our challenger has to do. As all properties are independent from corruption, we add corruption as an option, that can be more or less restricted as shown in Table \ref{tab:corruption options}.
 The different corruption options have implications on the challenger, when a corrupt query or a batch query arrives.
 
   \begin{table} [t]
\center
\resizebox{0.35\textwidth}{!}{%
  \begin{tabular}{ l p{7cm}}

Symbol &Description\\ \hline

$X$&Adaptive corruption is allowed.\\
$\static{X}$ & Only static corruption of users is allowed.\\
$\noCorr{X}$ & No corruption of users is allowed.\\ \hline
$X$&Corrupted users not restricted.\\
$\corrNoComm{X}$&Corrupted users are not allowed to be chosen as senders or receivers.\\
$\corrOnlyPartnerSender{X}$&Corrupted users are not  allowed to be senders.\\ 
$\corrOnlyPartnerReceiver{X}$&Corrupted users are not  allowed to be receivers.\\
$\corrStandard{X}$&Corrupted users send/receive identical messages in both scenarios.\\ 
\end{tabular}}
 \caption{Options for corruption and for corrupted communication}
  \label{tab:corruption options}
\end{table}

 \inlineheadingTwo{Check on corrupt queries}
 This check depends on whether the user corruption is adaptive, static, or not allowed at all.
  The default case for notion $X$ is adaptive corruption, i.e.
the adversary can corrupt honest users at any time. 
With static corruption $\static{X}$, the adversary has to corrupt a set of users before she sends her first batch. 
The third option, $\noCorr{X}$, is that no user corruption is allowed.
We denote the set of corrupted users as $\hat{U}$.

\begin{definition}[Corruption: Check on Corrupt Query]\label{def:corruptionCorrupt}
Let $\hat{U}$ be the set of already corrupted users, $u$  the user in the corrupt query and the bit $\mathrm{subsequent}$ be true iff at least one batch query happened. 
The following properties are met, iff:
{\footnotesize
\begin{align*}
  {corr}_{static}&: \mathrm{subsequent} \implies u \in \hat{U} \\
  {corr}_{no}&: \perp  \quad \quad {corr}_{adaptive}: \top\\
\end{align*}
}
\end{definition}
\vspace{-2em}
\inlineheadingTwo{Check on batch queries}
In reality for most ACNs the privacy goal can be broken for corrupted users, e.g. a corrupted sender has no unobservability.  
Therefore, we need to assure that the adversary cannot distinguish the scenario because the behavior of corrupted users differs.  This is done by assuring equal behavior $\corrStandard{corr}$ or banning such users from communicating $\corrNoComm{corr},\corrOnlyPartnerSender{corr},\corrOnlyPartnerReceiver{corr}$.

\begin{definition}[Corruption: Check on Batch Query]\label{def:corruption}
The following properties are met, iff for all $a \in \{0,1\}$:
{\footnotesize
\begin{align*}
  \corrNoComm{corr}&: \forall (u,u',m,aux) \in \underline{r}^a_0\cup \underline{r}^a_1: u \not \in \hat{U} \land u' \not \in \hat{U} \\
  \corrOnlyPartnerSender{corr}&: \forall (u,u',m,aux) \in \underline{r}^a_0\cup \underline{r}^a_1: u \not \in \hat{U} \\
  \corrOnlyPartnerReceiver{corr}&: \forall (u,u',m,aux) \in \underline{r}^a_0\cup \underline{r}^a_1: u' \not \in \hat{U}\\
  \corrStandard{corr}&: \forall \hat{u} \in \hat{U}:r^a_{0_i}=(\hat{u},\_,m,\_) \implies r^a_{1_i}=(\hat{u},\_,m,\_)\\
   &\quad \quad \quad \quad \land r^a_{0_i}=(\_,\hat{u},m,\_) \implies r^a_{1_i}=(\_,\hat{u},m,\_) \\
\end{align*}
}
\end{definition}
\vspace{-0.5cm}

Of course user corruption is not the only important part of an adversary model. Other adversarial capabilities can be adjusted with other parts of our framework (like the corruption of other parts of the \ac{ACN} with protocol queries).

\subsection{Easier Analysis: Quantification}\label{sec:challenges}
For an easier analysis, we allow the quantification of notions in the options. This way a reduced number of challenge rows (challenge complexity) or of challenges (challenge cardinality) can be required. The next section includes information on how results with  low challenge cardinality imply results for higher challenge cardinalities.

\inlineheading{Challenge Complexity}
\example{Consider Alice using a protocol, that achieves $\bohliSSAc$ for one challenge row ($\bohliSSAc_{\mathsf{CR}_{1}}$), but not for two ($\bohliSSAc_{\mathsf{CR}_{2}}$). This means in the case that Alice only communicates once, the adversary is not able to distinguish Alice from any other potential sender Charlie. However, if Alice communicates twice the regime might distinguish her existence from the existence of some other user, e.g. by using an intersection attack.}

To quantify how different the scenarios can be, we add the concept of \emph{challenge complexity}. Challenge complexity is measured in \textit{Challenge rows}, the pairs of communications that differ in the two scenarios as defined earlier. $c$ is the maximal allowed number of challenge rows  in the game. Additionally, we add the maximal allowed numbers of  challenge rows  per challenge $\#cr$  as option to a notion $X$ with $X_{\mathsf{CR}_{\#cr}}$.

\begin{definition}[Challenge Complexity]\label{def:challengeComplexity}
Let $\#\mathsf{CR}$ be the number of challenge rows in this challenge so far, 
We say that the following property is met, iff:
 \allowdisplaybreaks
\begin{align*}
\mathsf{CR}_{\#cr}:& \#\mathsf{CR} \leq \#cr
\end{align*}
\end{definition}
 
  \begin{table}[htb]
\center
\resizebox{0.35\textwidth}{!}{%
  \begin{tabular}{ l l }

Notion including option& Definition \\ \hline
$\manySess{X}$&Properties of $X \land sess$\\
$\corrStandard{X}$, $\corrNoComm{X}$ \ etc.& $\land \corrStandard{corr}$, $\land  \corrNoComm{corr}$ \ etc.\\
$\challengeRows{X}$& Properties of  $ X \land \mathsf{CR}_{\#cr}$\\
\end{tabular}}
 \caption{Definition of notions including the options; for all notions $X$ }
  \label{NotionsDefinition}
\end{table} 

 \inlineheading{Challenges Cardinality}
So far, our definitions focused on one challenge.  We now bound the number of challenges to $n$, as the adversary potentially gains more information the more challenges are played. While challenge complexity defines a bound on the total number of differing rows within a single challenge, cardinality bounds the total number of challenges.
Communications belonging to a challenge are identified by the challenge number $\Psi$, which has to be between 1 and $n$ to be valid. The challenge number is a part of the auxiliary information of the communication  and is only used by the challenger, not by the protocol model.

This dimension of quantification can be useful for analysis, since for certain assumptions the privacy of the $n$-challenge-case can be bounded in the privacy of the single-challenge-case as we will discuss in the next section.

\section{Capturing Different Adversaries}
\label{sec:adversary}
The adversary model assumed in the protocol model can be further restricted by adding adversary classes, that filter the information from the adversary to the challenger and vice versa.  Potentially many such adversary classes can be defined. 

\inlineheadingTwo{Adversary Classes}
AnoA introduces adversary classes, i.e.
a PPT algorithm that can modify and filter all in- and outputs from/to the adversary. Adversary classes $\mathcal{C}$ can be included into our framework in exactly the same way: to wrap the adversary $\mathcal{A}$. Instead of sending the queries to  $Ch$, $\mathcal{A}$ sends the queries to  $\mathcal{C}$, which can modify and filter them before sending them to   $Ch$. Similarly, the answers from  $Ch$ are sent to  $\mathcal{C}$ and possibly modified and filtered before sent further to  $\mathcal{A}$. Adversary classes that fulfill reliability, alpha-renaming and simulatability (see  \cite{backes17anoa}  for definitions) are called single-challenge reducible. For such adversary classes it holds that every protocol $\Pi$ that is $(c, \epsilon,\delta)$-X for $\mathcal{C}(\mathcal{A})$, is also  $(n \cdot c, n\cdot \epsilon,n\cdot \epsilon^{n \epsilon}\delta)$-X for $\mathcal{C}(\mathcal{A})$. Even though our framework extends AnoA's in multiple ways, the proof for multi-challenge generalization of AnoA is independent from those extensions and still applies to our framework.

\begin{proofsketch}
The proof is analogous to the one in Appendix A.1 of \cite{backes17anoa}\footnote{Note, although they include the challenge number $n$ in the definition of achieving a notion, this is not used in the theorem.}: 
we only argue that our added concepts
(adaptive corruption, arbitrary sessions and grouping of challenge and input queries to batches) do not change the indistinguishability of the introduced games.

\emph{Adaptive Corruption}
In Games $G_0$ till $G_2$,  $G_3$ till $G_6$ and $G_9$ to $G_{10}$ communications that reach the protocol model are identical. Hence, also adaptive corruption queries between the batch queries will return the same results (if adaptive corruption is used probabilistic: the probability distribution for the results is equal in all these games). $G_2$ to $G_3$ and $G_7$ to $G_8$ by induction hypothesis. $G_6$ to $G_7$ and $G_8$ to $G_9$ adaptive corruption is independent from the used challenge numbers (called challenge tags in \cite{backes17anoa}).

The argumentation for sessions and batches in analogous. Notice that by using the batch concept, in some games a part of the communications of a batch might be simulated, while another part is not. 
\end{proofsketch}

Note that since the adversary class $\mathcal{C}$ is only a PPT algorithm, $\mathcal{C}(\mathcal{A})$ still is a PPT algorithm and hence, a possible adversary against X when analyzed without an adversary class. So, while adversary classes can help to restrict the capabilities of the adversary, results shown in the case without an adversary class carry over.

\inlineheadingTwo{Using UC-Realizability}
AnoA shows that, if a protocol $\Pi$ UC-realizes an ideal functionality $\mathcal{F}$, which achieves $(c,\epsilon, \delta)-X$, $\Pi$ also achieves ($c,\epsilon, \delta+\delta'$)-X for a negligible $\delta'$. As the proof is based on the $(\epsilon, \delta)$- differential privacy definition of achieving a notion and independent from our extensions to the AnoA framework, this result still holds.

\begin{proofsketch}
AnoA's proof assumes that $\Pi$ does not achieve ($\epsilon, \delta+\delta'$)-X. Hence, there must be an attack $\mathcal{A}$ distinguishing the scenarios. With this, they build a PPT distinguisher $\mathcal{D}$ that uses $\mathcal{A}$ to distinguish $\Pi$ from $\mathcal{F}$. Since, even with our extensions $\mathcal{A}$ still is a PPT algorithm, that can be used to build distinguisher $\mathcal{D}$ and the inequalities that have to be true are the same (since the definition of achieving ($\epsilon, \delta$)-X is the same as being $(\epsilon, \delta)$-differentially private. The combination of $\Pi$ not being ($\epsilon, \delta+\delta'$)-X and $\mathcal{F}$ being $(\epsilon, \delta)-X$ results in the same contradiction as in AnoA's proof.
\end{proofsketch}

\fi

\iflong
\section{Relations to Prior Work}
\label{sec:mappingpapers}
In this Section, we introduce existing frameworks and point out to which of our notions their notions corresponds.
\else
\subsection{Relation to Existing Analysis Frameworks}
\label{sec:mappingpapers}
In this section, we briefly introduce the existing frameworks based on indistinguishability games. 
\fi
\label{sec:ComparingFrameworks}
\iflong
We argue that our framework includes all their assumptions and notions  relevant for \acp{ACN} and thus provides a combined basis for an analysis of \acp{ACN}. 
\else
We argue that our summary of notions includes all their notions\footnote{Where necessary, we have interpreted them for \acp{ACN} and broken them down to the general observable information.}  and therefore allows a comparison along this dimension.
\fi
\label{advClass}
\iflong
For each framework, we first quickly give an idea why the properties and options match the notions of it and focus on how the concepts (like batches) relate later on. The resulting mapping is shown in Table \ref{mapping} and reasoned below.
\else
The resulting mapping is shown in Table \ref{mapping} of Appendix~\ref{app:NotionMapping}.  Since the mapping of our properties to the notions of the other frameworks is obvious in most cases, we reason the remaining cases and concepts here and refer to the long version of this paper \cite{longVersion} for the complete verification.
\fi

\iflong
 \begin{table} [b!]
\center
\resizebox{0.4\textwidth}{!}{%
 \begin{tabular}{ c  c c }

Framework & Notion & Equivalent to \\ \hline
AnoA& $\alpha_{SA}$&$ {\manySess{\static{\corrStandard{\bohliSSAc}}}}_{CR_1}$\\
&$\alpha_{RA}$&${\manySess{\static{\corrStandard{\anoaRA}}}}_{CR_1}$\\
&$ \alpha_{REL}$&${\manySess{\static{\corrStandard{\anoaREL}}}}_{CR_2}$\\
&$ \alpha_{UL}$&${\manySess{\static{\corrStandard{\anoaUL}}}}_{CR_2}$\\
& $\alpha_{sSA}$&$\manySess{\static{\corrStandard{\bohliSSAc}}}$\\
&$\alpha_{sRA}$&$\manySess{\static{\corrStandard{\anoaRA}}}$\\
&$ \alpha_{sREL}$\footnotemark&$\manySess{\static{\corrStandard{\anoaREL}}}$\\
&$ \alpha_{sUL}$\footnotemark&$\manySess{\static{\corrStandard{\anoaUL}}}$\\ \hline
Bohli's&$S/SA=R/SA$&$\bohliSA$\\
&$ R/SUP$& $\bohliRSUP$\\
&$ R/WUP$& $\bohliRWUP$\\
& $ R/PS$&$\bohliRPS$\\
&$ R/SUU$& $\bohliRSUU$\\
&$ R/WUU$&$\bohliRWUU$\\
&$ R/AN$&$\bohliRAN$\\
&$ R/WU$&$\bohliRWU$\\
& $ R/WA$&$\bohliRWA$\\
&$ S/SA^\circ$& $\bohliSSAc$\\
&$ S/SUP^\circ$&$\bohliSSUPc$\\
&$ S/WUP^\circ$&$\bohliSWUPc$\\
&$S/PS^\circ$&$\bohliSPSc$\\
&$S/SUU^\circ$&$\bohliSSUUc$\\
& $S/WUU^\circ$&$\bohliSWUUc$\\
&$S/AN^\circ$&$\bohliSANc$\\
&$S/WU^\circ$&$\bohliSWUc$\\
&$S/WA^\circ$&$\bohliSWAc$\\
&$S/X, R/X^\circ$&analogous\\ 
&$X^+$&$\corrStandard{\langle X\rangle}$\\
& $X^*$&$\corrStandard{\langle X^\circ\rangle}$\\ \hline
Hevia's&$UO$&$\noCorr{\heviaUO}$, $k=1$\\
&$SRA$&$\noCorr{\bohliSA}$, $k=1$\\
&$SA^*$&$\noCorr{\bohliRWU}$, $k=1$\\
&$SA$&$\noCorr{\bohliSSAc}$, $k=1$\\
&$UL$&$\noCorr{\heviaUL}$, $k=1$\\
&$SUL$&$\noCorr{\bohliSWUc}$, $k=1$\\
&$RA^*, RUL, RA$&analogous\\ \hline
Gelernter's&$R^{H, \tau}_{SA}$&$\noCorr{\gelernterSA} \iff \noCorr{\bohliSPSc}$, $k=1$\\
&$R^{H, \tau}_{SUL}$&$\noCorr{\gelernterSUL} \iff \noCorr{\bohliSWAc}$, $k=1$\\
&$R_X$& analogous Hevia: $\langle X\rangle$ \\
&$R^H_X$& analogous Hevia: $\corrNoComm{\langle X\rangle}$\\
&$\hat{R}^H_X$&  analogous Hevia $\corrOnlyPartnerSender{\langle X\rangle}$\\
\end{tabular}}
 \caption{Equivalences, $\langle X \rangle$ equivalence of $X$ used}
  \label{mapping}
\end{table}

\addtocounter{footnote}{-1}
\footnotetext{Under the assumption that in all cases $m_0$ is communicated like in $\alpha_{REL}$ of   \cite{backes17anoa} and in $\alpha_{SREL}$ of one older AnoA version \cite{backes14anoa}.}
\stepcounter{footnote}
\footnotetext{Under the assumption that the receiver in stage 2 can be another than in stage 1 like in $\alpha_{UL}$ of   \cite{backes17anoa}.}
\fi

\paragraph{AnoA Framework}
AnoA \cite{backes17anoa} builds its privacy notions on $(\epsilon, \delta)$~differential privacy and compares them to their interpretation of the terminology paper of Pfitzmann and Hansen \cite{pfitzmann10terminology}.

\iflong
AnoA's $\alpha_{SA}$ allows only one sender to change, the same is achieved with the combination of $\EveryButSender$ and $CR_1$. In AnoA's $\alpha_{RA}$ also the messages can differ, but have to have the same length, which we account for with using $\EveryButReceiverMsg$ and $|M|$. AnoA's $\alpha_{REL}$ will either end in one of the given sender-receiver combinations been chosen ($b=0$) or one of the mixed cases ($b=1$). This is exact the same result as $R_{SR}$ generates. For AnoA's $\alpha_{UL}$ either the same sender is used in both stages or each of the senders is used in one of the stages. This behavior is achieved by our property $T_S$. Although AnoA checks that the message length of the communication of both scenarios is equal, only the first message is used in any possible return result of $\alpha_{UL}$. Hence, not checking the length and requiring the messages to be the same as we do in $T_S$ is neither weaker nor stronger.
\else
\fi

 \iflong
Our model differs from AnoA's model in the batch queries, the adaptive corruption, the arbitrary  sessions and the use of notions instead of anonymity functions.
\else
Conceptually our model differs from AnoA's model in the definition of achieving a notion, batch queries, and the use of notions instead of anonymity functions.
\fi
\iflong
Instead of  \emph{batch queries} AnoA distinguishes between input, i.e.
communications that are equal for both scenarios, and challenge queries,  i.e.
challenge rows.
Input queries are always valid in AnoA. 
They are also valid in our model, because all the privacy aspects used for our notions equivalent to AnoA's hold true for identical batches without $\Diamond$ and $\Diamond$ is not allowed in the equivalent notions. 
\else
AnoA's definition of achieving a notion can be easily included (see Appendix \ref{app:epsilonDef}), if needed.
\fi
\iflong
In AnoA's single-message anonymity functions only a limited number of challenge queries, i.e. challenge rows, is allowed per challenge.
We ensure this restriction with $CR_{\#cr}$.
\fi
In AnoA, the adversary gets information after every communication.
 This is equivalent to multiple batches of size one in our case.
\iflong
We assume that for the analyzed protocol a protocol model can be created, which reveals the same or less information when it is invoked on a sequence of communications at once instead of being invoked for every single communication.
Our notions, which match the AnoA notions, allow for batches of size one.
 So, our batch concept neither strengthens nor weakens the adversary.
\fi

\iflong
AnoA's \emph{corruption} is static, does not protect corrupted users\footnote{Although AnoA does not explicitly state this, we understand the analysis and notions of AnoA this way, as scenarios differing in the messages corrupted users send/receive could be trivially distinguished.} and AnoA includes restrictions on \emph{sessions}.
Hence, AnoA's notions translate to ours with the static corruption  $\static{X}$, the corrupted communication have to be equal in both scenarios $\corrStandard{X}$ and the session option of our model $\manySess{X}$.
\fi

AnoA's challenger does not only check properties, but modifies the batches with the \emph{anonymity functions}. However, the modification results in one  of at most four batches. We require  those four batches (as combination of scenario and instances) as input from the adversary, because it is more intuitive that all possible scenarios stem from the adversary. This neither increases nor reduces the information the adversary learns, since she knows the challenger algorithm.

\vspace{-0.5cm}
\paragraph{Bohli's Framework}
Bohli and Pashalidis \cite{bohli11relations} build a hierarchy of application-independent privacy notions based on what they define as ``interesting properties'',  that the adversary is or is not allowed to learn.
Additionally, they compare their notions to Hevia's, which we introduce next,  and find equivalences.

\iflong
It is easy to see, that our definitions of $U,Q, H$ ($P$ is not easy and hence, explained more detailed below) match the ones of Bohli's properties (who sent, how often any sender sends and how many senders sent how often) although we do not use a function that links every output message with the sender(/receiver), but the sender-messages-sets(/receiver-messages-sets). 
Bohli and Pashalidis additionally define the restriction of picking their communications equal except for the user (depending on the current notion sender or receiver) $\circ$. This is the same as allowing only the senders  resp. receivers to differ ($\EveryButSender$ resp. $\EveryButRec$).

Conceptually, our model differs from Bohli's model in the concept of challenges, the advantage definition, the order of outputs, and the allowed behavior of corrupted users.
\else
\fi

\iflong
Bohli's notions can be understood as one \emph{challenge} ($n=1$) with arbitrarily many challenge rows (any $c$).
Further, it does not use a multiplicative term in its \emph{advantage} ($\epsilon=0$).
Then $\delta$ equals the advantage, which has to be 0 to unconditionally provide a privacy notion or negligible to computationally provide this notion.
\else
\fi

\iflong
\else
To achieve the mapping, we need to interpret one property of Bohli's framework for \acp{ACN}. Our message partitionings ($P,P'$) group the messages by their sender/receiver. However, Bohli's corresponding linking relation groups the indexes of the outputs of the analyzed system. Since messages are usually the interesting output elements, the adversary tries to link in \acp{ACN}; we consider this as a suitable mapping when analyzing \acp{ACN}.
\fi
\iflong
 Bohli's framework assumes that the protocol outputs information as an information vector, where each entry belongs exactly to one communication.
The adversary's goal in Bohli's framework is to link the index number of the output vector with the sender or receiver of the corresponding communication.

All except one of their properties can be determined given the batches of both scenarios.
However, the linking relation property that partitions the index numbers of the output vector by user (sender or receiver depending on the notion), can only be calculated once the output order is known.
Since our notions shall be independent from the analyzed protocol, the challenger cannot know the protocol and the way the output order is determined.
Running the protocol on both scenarios might falsely result in differing output orders for non-deterministic protocols.

 Thus, we adapt the  linking relation for \acp{ACN}  to be computable based on the batches.
The interesting output elements the adversary tries to link in \acp{ACN} are messages.
Hence, here the linking relation partitions the set of all messages into the sets of messages sent/received by the same user, which can be calculated based on the batches.
This adaption is more restrictive for an adversary, since the partition of output numbers can be equal for both scenarios even though the sent messages are not.
However, if the adversary is able to link the output number to the message, she can calculate our new linking relations based on Bohli's.
\fi

\iflong
Further, Bohli's framework allows for notions, where the \emph{behavior of corrupted users} differs in the two scenarios.
This means privacy of corrupted users is provided, i.e.
the adversary wins if she can observe the behavior of corrupted users. 
Those notions are the ones without the option $\corrStandard{X}$.

To match our batch query, Bohli's input queries, which include communications of both scenarios, have to be combined with a nextBatch query, which signals to hand all previous inputs to the protocol.
\fi


\vspace{-0.5cm}
\paragraph{Hevia's Framework}
Hevia and Micciancio \cite{hevia08indistinguishability} define scenarios based on message matrices.
Those message matrices specify who sends what message to whom. 
Notions restrict different communication properties like the number or set of sent/received messages per fixed user, or the number of total messages.
Further, they construct a hierarchy of their notions and give optimal \ac{ACN} protocol transformations that, when applied, lead from weaker to stronger notions.

\iflong
Mapping of the properties follows mainly from Bohli's and the equivalences between Bohli and Hevia (including the one we correct in the following paragraph). Besides this,  only Hevia's Unobservability ($UO$), where the matrices  can be picked arbitrary, is new. However, this corresponds to our \nothing \ property, that always returns TRUE and allows any arbitrary scenarios.

Our model differs from Hevia's, since ours considers the order of communications,  allows adaptive attacks and corruption.

Our game allows to consider the \emph{order of communications}.
\else
In contrast, our model considers the order of communications. 
\fi
Analyzing protocol models that ignore the order will lead to identical results. However, protocol models that consider the order do not achieve a notion -- although they would in Hevia's framework, if an attack based on the order exists.
\iflong \footnote{
Creating an adapted version left a degree of freedom.
Our choice of adaptation  corresponds with the interpretation of Hevia's framework that was used, but not made explicit in Bohli's framework.}\fi

\iflong
Most of Hevia's notions are already shown to match Bohli's with only one batch ($k=1$) and no corruption ($\noCorr{X}$)  \cite{bohli11relations}.
\else
Most of Hevia's notions are already shown to match Bohli's with only one batch \cite{bohli11relations}.
\fi
However, we have to correct two mappings:
in \cite{bohli11relations} Hevia's strong sender anonymity (${SA^*}$), which requires  the number of messages a receiver receives  to be the same in both scenarios was mistakenly matched to  Bohli's  sender weak unlinkability ($S/WU^+$), in which every sender sends the same number of messages in both scenarios.
\iflong 
The needed restriction is realized in Bohli's $R/WU^+$ instead.
The proof is analogous to Lemma 4.3 in \cite{bohli11relations}.
\else
 Hence, the sender and receiver restrictions become confused and it needs to be mapped to Bohli's receiver weak unlinkability ($R/WU^+$) instead.
\fi
The same reasoning leads to Bohli's  sender weak unlinkability ($S/WU^+$) as the mapping for Hevia's strong receiver anonymity (${RA^*}$).

\paragraph{Gelernter's Framework}
Gelernter and Herzberg \cite{gelernter13limits} extend Hevia's framework to include corrupted participants.
Additionally, they show that under this strong adversary  an \ac{ACN} protocol achieving the strongest notions exists.
However, they prove that any \ac{ACN} protocol with this strength has to be inefficient, i.e.
the message overhead is at least linear in the number of honest senders.
Further, they introduce  relaxed privacy notions that can be efficiently achieved.

\iflong
The notions of Gelernter's framework build on Hevia's and add corruption, which is covered in our corruption options.
Only the relaxed notions $R^{H, \tau}_{SA}$ and $R^{H, \tau}_{SUL}$ are not solely a corruption restriction.
\else
The notions of Gelernter's framework build on Hevia's and add corruption, which we do not discuss in  this work, but include in the long version of this paper \cite{longVersion}. However,  the relaxed notions are not solely an extension regarding corruption.
\fi
\iflong
We define new notions  as $\gelernterSA$=\something $\land G$ and  $\gelernterSUL$= \something $\land Q \land G$ that are equivalent to some of the already introduced notions to make the mapping to the Gelernter's notions obvious.
 They use a new property $G$, in which scenarios are only allowed to differ in the sender names.
 \begin{definition}[Property $G$]
 Let $\mathcal{U}$ be the set of all senders,  $s_{b_i}= \{(u,\{m_1, \dots ,m_h\})\bigm| u $ send message $m_1, \dots , m_h$ in batch $i\}$  the sender-messages sets for scenario $b\in \{0,1\}$.
We say that $G$ is met, iff a permutation $perm$ on $\mathcal{U}$ exists such that for all $ (u,M)\in s_{0_k}: (\text{perm}(u),M) \in s_{1_k}$.
 \end{definition}
Note that  Gelernter's relaxed notions (indistinguishability between permuted scenarios) is described by our property $G$, the need for the existence of such a permutation.

\begin{theorem}It holds that
\begin{align*}
(c, \epsilon, \delta)-\gelernterSA &\iff (c, \epsilon, \delta)-\bohliSPSc,\\
(c, \epsilon, \delta)-\gelernterSUL &\iff (c, \epsilon, \delta)-\bohliSWAc\text{.}
\end{align*}
\end{theorem}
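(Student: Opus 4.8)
The plan is to reason at the level of which batch queries the challenger accepts. For each of the four notions the underlying game is the single general game of Section~\ref{sec:GeneralGame}, run with the same protocol model $\Pi$, the same adversary $\mathcal{A}$ and the same challenge complexity; the notions differ \emph{only} in the validity predicate the challenger applies to a batch query, i.e.\ in the conjunction of properties used in their definition. Hence, if two such predicates accept exactly the same batch queries, then for every $\mathcal{A}$ the two games are step-by-step identical, the two distinguishing probabilities appearing in Definition~\ref{def:achieveEpsilon} are unchanged, and the $(c,\epsilon,\delta)$-inequality holds for one notion iff it holds for the other, for \emph{all} $c,\epsilon,\delta$. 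Moreover $\gelernterSUL$ is $\gelernterSA$ with the extra conjunct $Q$, and $\bohliSWAc$ is $\bohliSPSc$ with the extra conjunct $Q$; so once the first equivalence has been established at the level of predicates, conjoining $Q$ to both sides yields the second one with no further work. Everything therefore reduces to the single claim: \emph{a batch query satisfies \something $\land G$ iff it satisfies \something $\land \EveryButSender \land P$.}

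For the direction \something $\land \EveryButSender \land P \Rightarrow G$: by $\EveryButSender$ the receiver, message and auxiliary data of every communication are the same in both scenarios, so the only freedom is the assignment of senders to the positionally fixed communications, and by \something both batches contain the same number of real communications. Property $P$ says that the family of sets ``messages emitted by a common sender'' coincides for the two scenarios; combined with the positional pinning, this lets one read off a bijection between the active senders of scenario $0$ and those of scenario $1$ that preserves each sender's message set, and extend it to a permutation of $\mathcal{U}$ — exactly the witness required by $G$. The subtle point is to convert $P_0=P_1$ into a genuine bijection of senders rather than mere equality of the (possibly collapsed) set of message sets; here one uses \something, together with the convention that distinct communications carry distinct messages, to recover the multiplicities.

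For the converse, \something $\land G \Rightarrow \something \land \EveryButSender \land P$: given a witnessing permutation, $P$ is immediate, since a bijective relabelling of senders maps the by-sender family of message sets onto itself. For $\EveryButSender$ one has to make the intent of $G$ precise: in the message-matrix semantics of Hevia's and Gelernter's frameworks that $G$ is designed to capture, ``the scenarios differ only in the sender names'' means the second scenario arises by permuting senders (the rows of the matrix) while every message keeps its receiver, content and position, so receiver, message and auxiliary data are unchanged. I expect this pinning-down of $G$ — together with fixing the setting so that $G$ and the cumulative property $P$ line up (Gelernter's and Hevia's scenarios are presented atomically, i.e.\ effectively a single batch) and with the message-multiplicity care above — to be the only real obstacle; the lift back to the $(c,\epsilon,\delta)$-statements and the $Q$-conjunct reduction for $\gelernterSUL$/$\bohliSWAc$ are then just bookkeeping.
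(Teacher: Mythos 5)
Your proposal is correct and takes essentially the same route as the paper: the paper's proof sketch likewise reduces the equivalence to showing that every valid attack against one notion is a valid attack against the other (deriving the permutation of $G$ from $P\land\EveryButSender$ and conversely deriving $P$ from the permutation), and disposes of the $\gelernterSUL\iff\bohliSWAc$ case by noting that $Q$ appears as an identical extra conjunct on both sides. If anything, you are more explicit than the paper about the two delicate points — recovering a genuine sender bijection from $P_0=P_1$ and reading $\EveryButSender$ into the intended semantics of $G$ — which the paper's sketch passes over silently.
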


\begin{proofsketch}
Analogous to Theorem \ref{loopixUO}.

$\gelernterSA \Rightarrow \bohliSPSc$: Every attack on $\bohliSPSc$ is valid against $\gelernterSA$: Since $P$ is fulfilled, for every sender  $u_0$ in the first scenario, there exists a sender $\tilde{u}_0$ in the second scenario sending the same messages.
Hence, the permutation between senders of the first and second scenario exists.

	$\gelernterSA \Leftarrow \bohliSPSc$: Every attack on $\gelernterSA$ is valid against $\bohliSPSc$: Since there exists a permutation between the senders of the first and second scenario sending the same messages, the partitions of messages sent by the same sender are equal in both scenarios, i.e.
$P$ is fulfilled.

$\gelernterSUL \iff \bohliSWAc:$
	 $Q$ is required in both notions by definition.
Arguing that $P$ resp.
$G$ is fulfilled given the other attack is analogous to $\gelernterSA \iff \bohliSPSc$.

\end{proofsketch}

\else
In Appendix \ref{OtherFrameworksProof} we formalize them and shown to be equivalent to two previously defined notions.
\fi

\vspace{-2em}
\section{Hierarchy}
\label{sec:hierarchy}
\vspace{-1em}

\iflong
\begin{figure*}[h!]
\center
\includegraphics[width=0.95\textwidth]{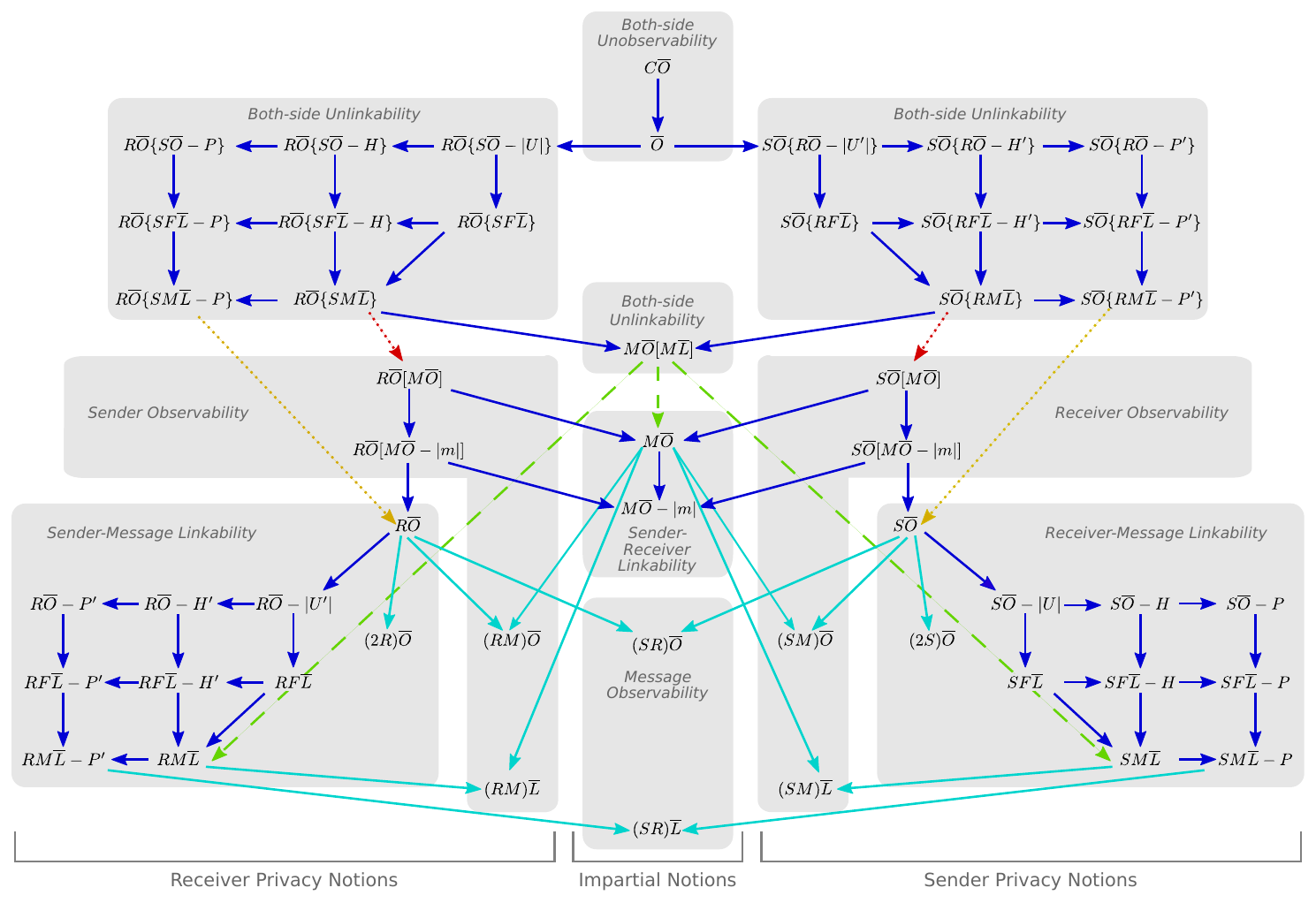}
\caption{Our new hierarchy of privacy notions divided into sender, receiver and impartial notions and clustered by leakage type}\label{fig:hierarchyColored}
\end{figure*}
\else
\begin{figure*}[h!]
\center
\includegraphics[width=0.95\textwidth]{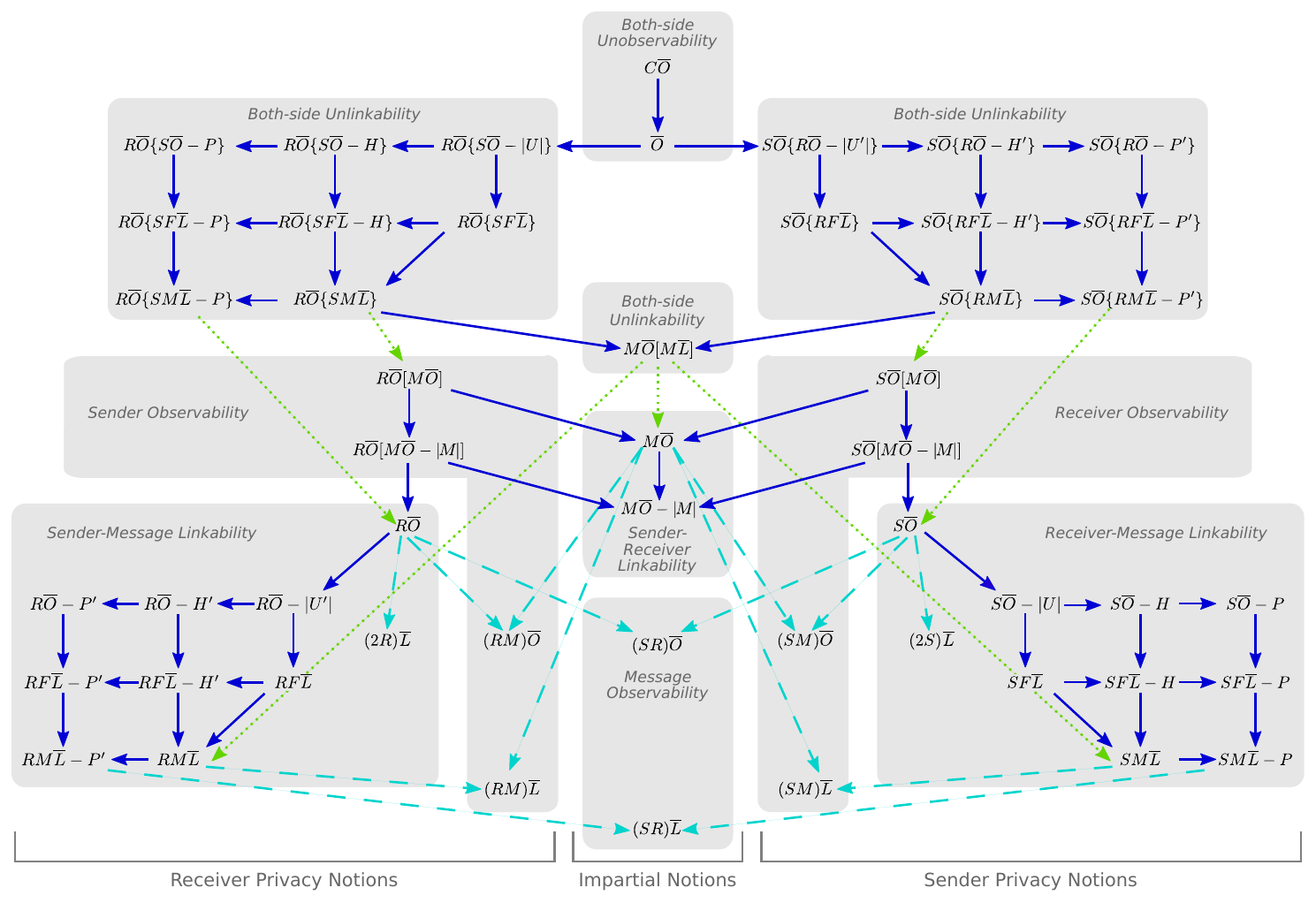}
\caption{Our hierarchy of privacy notions divided into sender, receiver and impartial notions and clustered by leakage type. Table~\ref{NotionsDefinition} provides definitions for the presented notions based on properties. Table~\ref{tab:allNotions} gives an overview on all properties. For a summary of the naming scheme, see Table \ref{Tab:NamingScheme} of Appendix \ref{app:summaryNamingScheme}. } \label{fig:hierarchyColored}
\end{figure*}
\fi

\iflong
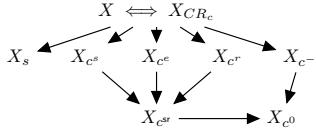
\begin{figure}[thb]
\begin{center}
\resizebox{0.25\textwidth}{!}{%
\begin{tikzpicture}[>=triangle 45,font=\sffamily]
        \node (X) at (0,0) {$\corrStandard{X}$};
    \node (XSpecial) [above =0.3cm of X] {$X \iff X_{CR_c}$};
    \node (corrNoComm)[below  = 0.6cm of X] {$\corrNoComm{X}$};
    \node (invis)[below  = 0.1cm of X] {};
    \node(corrSender)[left = 0.5cm of X] {$\corrOnlyPartnerSender{X}$};
    \node(corrRec)[right = 0.5cm of X] {$\corrOnlyPartnerReceiver{X}$};
    \node (static)[right = 0.5cm of corrRec] {$\static{X}$};
    \node (noCorr)[right = 1.5cm of corrNoComm] {$\noCorr{X}$};
    \node (manySess) [ left=0.5cm of corrSender] {$\manySess{X}$};
    
    \draw [semithick,->] (XSpecial) -- (static);
        \draw [semithick,->] (XSpecial) -- (X);
    \draw [semithick,->] (XSpecial) -- (corrSender);
    \draw [semithick,->] (XSpecial) -- (corrRec);
        \draw [semithick,->] (corrSender)  -- (corrNoComm);
    \draw [semithick,->] (corrRec) -- (corrNoComm);
     \draw [semithick,->] (X) -- (corrNoComm);
    \draw [semithick,->] (static) -- (noCorr);
        \draw [semithick,->] (corrNoComm) -- (noCorr);
    \draw [semithick,->] (XSpecial) -- (manySess);
 
\end{tikzpicture}}
\caption{Additional implications for corruption and sessions} \label{HierarchyExtended}
\end{center}
\end{figure}
\fi
Next, we want to compare all notions and establish their hierarchy. To do this, for any pair of notions we analyze which one is stronger than, i.e. implies, the other. This means, any \ac{ACN} achieving the stronger notion also achieves the weaker (implied) one. Our result is shown in Figure~\ref{fig:hierarchyColored}, where all arrow types represent implications, and is proven as Theorem \ref{the:impl} below. 
Further, obvious implications between every notion $\sgame{X}$, $\rgame{X}$ and $X$ exist, since $\sgame{X}$ only adds more possibilities to distinguish the scenarios. However, to avoid clutter we do not show them in Figure  \ref{fig:hierarchyColored}. 
\iflong
To ease understanding the hierarchy for the first time, we added Appendix  \ref{sec:HierarchyAndTables} where it is plotted together with the most important symbol tables. Further, the same hierarchy exists between  notions with the same session, corruption and quantification options. 

Further, we add a small hierarchy for the options that holds by definition in Figure \ref{HierarchyExtended}.
\fi


\begin{theorem}
\label{the:impl}
The implications shown in Figure \ref{fig:hierarchyColored} hold.
\end{theorem}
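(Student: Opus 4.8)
The theorem asserts that every arrow in the big hierarchy figure is a valid implication between privacy notions, where each notion is a conjunction of properties (Table~\ref{NotionsDefinition}). Since each notion $X$ is defined by the set of challenge-query constraints its challenger enforces, the core observation is purely logical: if notion $Y$'s constraint set is \emph{logically implied by} notion $X$'s constraint set — i.e. every batch query valid under $X$ is also valid under $Y$ — then any adversary attacking $Y$ is \emph{already} a legal adversary attacking $X$ with the same advantage, so $X$-security implies $Y$-security. Thus the plan is to reduce Theorem~\ref{the:impl} to a collection of property-level entailments of the form ``property $A$ (together with the common properties like \something) implies property $B$'', one for each edge of Figure~\ref{fig:hierarchyColored}, and then verify each such entailment directly from the definitions in Section~\ref{aspects}.

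**Key steps.** First I would state and prove a general reduction lemma: for notions $X = \bigwedge_i A_i$ and $Y = \bigwedge_j B_j$ built from properties, if the conjunction $\bigwedge_i A_i$ entails $\bigwedge_j B_j$ on all (pairs of) batches, then $\Pi$ achieving $X$ implies $\Pi$ achieves $Y$; the proof is the one-line observation that the $Y$-challenger accepts a superset of the queries the $X$-challenger accepts, so $\langle\mathcal{A}\mid Ch(\Pi,Y,b)\rangle$ and $\langle\mathcal{A}\mid Ch(\Pi,X,b)\rangle$ induce the same distribution for any $\mathcal{A}$ whose queries are $X$-valid, while a $Y$-attacker can always be truncated/modified into an $X$-attacker of no smaller advantage — but more cleanly, every $X$-valid attack \emph{is} a $Y$-valid attack, hence $\mathrm{negl}$ advantage against $X$ for all adversaries forces $\mathrm{negl}$ advantage against $Y$ only if we go the other direction; so one must be careful about the direction and I would phrase it as: the \emph{stronger} notion is the one with the \emph{more restrictive} validity check, and restricting the adversary can only lower advantage, so ``more restrictive check'' $\Rightarrow$ ``harder to break'' $\Rightarrow$ implication points from the restrictive notion to the permissive one. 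Second, I would organize the edges into groups mirroring the figure's clusters (impartial, sender-side, receiver-side) and within each the ``leaking'' chains ($-|U|$, $-H$, $-P$), and discharge each edge by a short entailment argument: e.g. $U \Rightarrow |U|$ trivially since $U_0 = U_1 \Rightarrow |U_0| = |U_1|$; $Q \Rightarrow U$ since the active-sender set is the projection of the frequency set; $H$ and $P$ are each implied by $Q$ together with... — actually $P \Rightarrow H$ and $Q \Rightarrow H$ need the multi-batch definitions, so I would unfold Definitions~\ref{def:messageSets}–\ref{def:properties} for those. Third, for the complex-property edges ($R_{SR}, M_{SR}, T_S$, etc.) I would use the instance/stage machinery: e.g. $M_{SR}$-validity implies $R_{SR}$-style indistinguishability because a $M_{SR}$ challenge restricted to a single challenge row reduces to the $R_{SR}$ shape, and an $R_{SR}$-attacker can be embedded; similarly $\newSA$ ($M_{SM}$) $\Rightarrow \newSAUO$ ($R_{SM}$). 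Fourth, the edges from $\sgame{X}$/$\rgame{X}$ down to $X$ are immediate since dropping $\EveryButRec$ only enlarges the set of valid queries (already noted in the text). Finally, the edges involving \nothing\ ($\heviaUO$ at the top) and \something\ use that \nothing\ is the weakest check ($\top$) so everything implies... no — $\heviaUO$ uses \nothing\ which accepts \emph{all} queries, so $\heviaUO$ is the \emph{strongest} notion and implies $\overline{O}$ (which adds \something); that matches the figure.

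**Main obstacle.** The conceptual content is light — each edge is a small syntactic entailment — but the real work, and the place errors creep in, is (a) getting the direction of every implication right (strongest = most restrictive challenger = sits at the top of the figure), and (b) the handful of edges where the entailment is \emph{not} immediate and requires unfolding the multi-batch histogram/partition definitions (Definitions~\ref{def:messageSets}, \ref{def:properties}) or the instance-bit semantics for $R_{SR}, M_{SR}, T_S$; in particular showing $H \Rightarrow |U|$-type facts and that $P$ and $H$ are incomparable except through the stronger frequency property $Q$ requires care, and showing $M_{SR} \Rightarrow R_{SR}$ and $M_{SM} \Rightarrow R_{SM}$ needs the argument that a mixed-pair challenge with the order-swap instances ``contains'' a random-pair challenge. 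I expect completeness of the figure (no missing edges) to be argued separately via counterexample protocols — Theorem~\ref{the:impl} only claims the drawn arrows hold, so I would confine this proof to verifying each drawn arrow and defer the ``no other implications'' direction to the separate completeness argument the paper promises.
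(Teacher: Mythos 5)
Your overall strategy -- reduce each arrow to a property-level entailment and observe that every attack valid against the weaker notion is already a valid attack against the stronger one -- is exactly the paper's approach for the bulk of the arrows, and your treatment of the simple-property chains ($U \Rightarrow |U|$, $Q \Rightarrow U$, $P \Rightarrow H$, dropping conjuncts, $\EveryButSender$ forcing the receiver-side properties) would go through. However, your statement of the central reduction lemma has the direction inverted: you write that ``the stronger notion is the one with the more restrictive validity check'' and that ``the implication points from the restrictive notion to the permissive one.'' The opposite is true, and you in fact use the opposite (correctly) in your closing remark about $\heviaUO$: the stronger notion is the one whose challenger accepts \emph{more} queries, because the adversary then has more freedom; the implication $X_1 \Rightarrow X_2$ holds when $X_2$'s constraints entail $X_1$'s, so that every $X_2$-valid attack is $X_1$-valid. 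As written, your lemma would orient every edge of the hierarchy the wrong way, so you need to resolve this internal contradiction before any edge is actually discharged.

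The second, more substantive gap concerns the complex-property arrows. The edges in the figure involving $R_{SR}$, $R_{SM}$, $M_{SM}$, $T_S$ run from simple-property notions \emph{into} the complex-property ones (e.g.\ $\bohliRSAc \Rightarrow \anoaREL$, $\bohliSWAc \Rightarrow \loopixSRTPU$, $\bohliSWUc \Rightarrow \newSA$), and their proof is not a syntactic entailment: an attack on, say, $\anoaREL$ submits four batches (two instances per scenario) and the challenger draws a random instance bit, so it is not literally a valid attack on $\bohliRSAc$. The paper's argument fixes the instance bit in both possible ways, obtaining two candidate attacks on the simple notion, and uses an averaging argument to show at least one of them retains at least half the original advantage (whence the $\delta \mapsto 2\delta$ loss in those edges). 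Your sketch instead proposes implications \emph{between} the complex notions themselves, $M_{SR} \Rightarrow R_{SR}$ and $M_{SM} \Rightarrow R_{SM}$ (i.e.\ $\loopixSRTPU \Rightarrow \anoaREL$ and $\newSA \Rightarrow \newSAUO$); these arrows are not in the hierarchy, and the completeness theorem exhibits explicit counterexample protocols separating these pairs in both directions. So this part of your plan both omits the needed instance-fixing reduction and asserts implications that are false.
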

\iflong
\begin{proof}

Analogous to Theorem \ref{loopixUO}: We prove every implication $X_1 \Rightarrow X_2$ by an indirect proof of the following outline: Given an attack on $X_2$, we can construct an attack on $X_1$ with the same success. Assume a protocol has $X_1$, but not $X_2$. Because it does not achieve $X_2$, there exists a successful attack on $X_2$. However, this implies that there exists a successful attack on $X_1$ (we even know how to construct it). This contradicts that the protocol has $X_1$.\footnote{In AnoA, Bohli's and Hevia's framework some of these implications are proved for their notions in the same way.} Due to this construction in the proof the implications are transitive.

We use different arrow styles in Figure \ref{fig:hierarchyColored} to partition the implications into those with analogous proofs.

 \begin{tikzpicture}[>=triangle 45,font=\sffamily]
    \node (A) at (0,0) {};
    \node (B)[right = 1cm of A] {};
    \draw [semithick,line width=0.5mm,->,green,  dashed] (A) -- (B);
\end{tikzpicture}
The dashed, green implications 
hold, because of the following and analogous proofs:

\begin{claim}($\heviaUL \implies \bohliSWUc$) \label{claimGreen}
If protocol $\Pi$ achieves $(c, \epsilon, \delta)- \heviaUL$, it achieves $(c, \epsilon, \delta)-\bohliSWUc$. 
\end{claim}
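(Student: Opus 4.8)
The plan is to follow the indirect-proof template already used for Theorem~\ref{loopixUO}: show that every batch query accepted by the challenger for the weaker notion $\bohliSWUc$ is also accepted for the stronger notion $\heviaUL$, so that a successful adversary against $\bohliSWUc$ is, verbatim, a successful adversary against $\heviaUL$ with the same number of challenge rows and the same advantage.

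First I would unfold the two definitions: $\heviaUL =$ \something $\land\ Q \land Q'$ and $\bohliSWUc =$ \something $\land\ \EveryButSender \land\ Q$. Both are built only from simple properties, so both are instantiations of the (multi-batch) simple game; in particular the challenger merely checks validity of the submitted batches and then forwards the $b$-th batch unchanged to $\Pi$, so the distribution of $\Pi$'s outputs -- and hence the adversary's advantage -- depends only on which batch is forwarded, not on which of the two notions is being tested.

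The core step is to verify the implication at the level of a single valid query. Fix any query $(\underline{r}_0,\underline{r}_1)$ that passes the $\bohliSWUc$ check. Then \something holds (it is literally part of $\heviaUL$ as well), and $Q$ holds (it is part of $\bohliSWUc$). It remains to establish $Q'$. By the definition of $\EveryButSender$ we have $r_{1_j} = (u_{1_j}, u'_{0_j}, m_{0_j}, aux_{0_j})$ for every index $j$, i.e. the two batches agree on every receiver, message and auxiliary entry and differ only in senders. Since the receiver--message linkings $L'_{b_i}$ (Definition~\ref{def:senderMessageSet}) and therefore the receiver-frequency sets $Q'_b$ (Definition~\ref{UbQb}) are functions of receivers and messages only, changing the sender coordinate cannot change them: $L'_{0_i} = L'_{1_i}$ for every batch $i$, hence $Q'_0 = Q'_1$, so $Q'$ holds. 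Thus the query is $\heviaUL$-valid.

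Finally I would assemble the contradiction. Suppose $\Pi$ achieves $(c,\epsilon,\delta)-\heviaUL$ but not $(c,\epsilon,\delta)-\bohliSWUc$, and let $\mathcal{A}$ be a PPT adversary witnessing the latter. Running $\mathcal{A}$ unchanged against the $\heviaUL$-challenger, every batch query it issues is accepted by the previous paragraph, $\mathcal{A}$ still uses at most $c$ challenge rows, and it observes exactly the same outputs, so it wins the $\heviaUL$-game with the same non-negligible advantage, contradicting that $\Pi$ achieves $(c,\epsilon,\delta)-\heviaUL$. I do not anticipate a genuine obstacle here; the only point needing care is the small argument that $\EveryButSender$ forces $Q'_0 = Q'_1$ (receiver frequencies are insensitive to the sender coordinate), and noting that, because the reduction is the identity on the adversary, the parameters $c,\epsilon,\delta$ are preserved and the implication composes transitively with the others in Figure~\ref{fig:hierarchyColored}.
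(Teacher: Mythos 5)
Your proposal is correct and follows essentially the same route as the paper's proof: show that any $\bohliSWUc$-valid query is $\heviaUL$-valid (with $Q'$ following from $\EveryButSender$ fixing receivers and messages), then run the adversary unchanged and derive the contradiction. Your tracing of $Q'$ through the receiver--message linkings is just a slightly more explicit version of the paper's one-line argument.
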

\begin{proof}
Given a valid attack $\mathcal{A}$ on $\bohliSWUc$. We show that $\mathcal{A}$ is a valid attack on $\heviaUL$:
Because of $\bohliSWUc$,  $Q$ is fulfilled. Because of $\EveryButSender$, the receivers of the communications input to the protocol are the same in both scenarios. Hence, every receiver receives the same number of messages, i.e. $Q'$ is fulfilled. 
So, every attack against $(c, \epsilon, \delta)-\bohliSWUc$ is valid against $(c, \epsilon, \delta)- \heviaUL$. 

Now, assume a protocol $\Pi$ that achieves  $(c, \epsilon, \delta)- \heviaUL$, but not $(c, \epsilon, \delta)-\bohliSWUc$. Because it does not achieve  $(c, \epsilon, \delta)-\bohliSWUc$, there has to exist an successful attack $\mathcal{A}$ on $(c, \epsilon, \delta)-\bohliSWUc$, i.e.\[
\text{Pr}[0= \langle \mathcal{A} \bigm| Ch(\Pi, \bohliSWUc,c,0)\rangle ] >\]\[e^{\epsilon} \cdot \text{Pr}[0= \langle \mathcal{A} \bigm| Ch(\Pi, \bohliSWUc, c,1)\rangle]+ \delta \text{.}\] 
We know $\mathcal{A}$ is also valid against $(c, \epsilon, \delta)- \heviaUL$. Thus, it exists a attack, with \[\text{Pr}[0= \langle \mathcal{A} \bigm| Ch(\Pi, \heviaUL,c,0)\rangle ] >\]\[e^{\epsilon} \cdot \text{Pr}[0= \langle \mathcal{A} \bigm| Ch(\Pi, \heviaUL, c,1)\rangle]+ \delta \text{,} \] which contradicts the assumption that $\Pi$ achieves $(c, \epsilon, \delta)- \heviaUL$.
\end{proof}

 \begin{tikzpicture}[>=triangle 45,font=\sffamily]
    \node (A) at (0,0) {};
    \node (B)[right = 1cm of A] {};
    \draw [semithick,line width=0.5mm,->,yellow, dotted] (A) -- (B);
\end{tikzpicture}
The dotted, yellow implications hold, because of the following and analogous proofs:

\begin{claim}($\bohliRWA \implies \bohliSSAc$)
If protocol $\Pi$ achieves $(c, \epsilon, \delta)-\bohliRWA$, it achieves $(c, \epsilon, \delta)-\bohliSSAc$. 
\end{claim}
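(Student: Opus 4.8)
The plan is to reuse verbatim the reduction template of Theorem~\ref{loopixUO} and Claim~\ref{claimGreen}: I would show that every batch query an adversary may legitimately submit against $\bohliSSAc$ is also legitimate against $\bohliRWA$, so that any successful attack on $\bohliSSAc$ is already a successful attack on $\bohliRWA$, and hence $\Pi$ achieving $\bohliRWA$ forces $\Pi$ to achieve $\bohliSSAc$. Because the reduction is the identity on the adversary, on the submitted batches, and on the challenger's interaction with $\Pi$, the parameters $c,\epsilon,\delta$ carry over unchanged, so no quantitative bookkeeping is needed beyond restating the two $(c,\epsilon,\delta)$-inequalities exactly as in Claim~\ref{claimGreen}.

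First I would unfold both notions into their properties. By definition $\bohliSSAc$ checks the property \something together with $\EveryButSender$, while $\bohliRWA=\sgame{\bohliRWAc}$ is $\bohliRWAc=$ \something $\land \EveryButRec \land Q' \land P'$ with the restriction $\EveryButRec$ removed; hence the $\bohliRWA$-challenger checks only \something, $Q'$ and $P'$. It therefore suffices to prove: any pair of batches $\underline{r}_0,\underline{r}_1$ that the $\bohliSSAc$-challenger accepts already satisfies \something, $Q'$ and $P'$. The property \something is the literally identical check, so it transfers for free. For $Q'$ and $P'$ I would invoke that $\EveryButSender$ forces ${r_1}_j=(\mathbf{u_{1_j}},u'_{0_j},m_{0_j},aux_{0_j})$ at every index $j$, i.e. the receiver, message and auxiliary data of every communication agree in the two scenarios and only senders may differ. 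Consequently the receiver--message linkings of Definition~\ref{def:senderMessageSet} coincide scenario-wise in every batch, which immediately gives $Q'_0=Q'_1$ (the receiver-frequency sets of Definition~\ref{UbQb}) and, since the message partition of a scenario is fully determined by those linkings, $P'_0=P'_1$. Thus the query is valid against $\bohliRWA$.

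Then I would close exactly as in Claim~\ref{claimGreen}: supposing some $\Pi$ achieved $(c,\epsilon,\delta)$-$\bohliRWA$ but not $(c,\epsilon,\delta)$-$\bohliSSAc$, there is an adversary whose advantage against $\bohliSSAc$ violates the bound; by the above this adversary is a valid adversary against $\bohliRWA$ and, since the challenger hands $\Pi$ the same batches in both games, it produces the same output distribution and hence the same violating advantage against $\bohliRWA$ --- a contradiction. The only care needed is routine: applying the receiver/message-equality argument row-by-row and batch-by-batch so that multi-batch and multi-challenge-row attacks are covered, and noting that because $\bohliRWAc$ is built from simple properties only, $\bohliRWA$ carries no instance/stage machinery, so ``a valid query'' really is just the three property checks. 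I do not expect a genuine obstacle here --- the whole content is the one-line observation that $\EveryButSender$ implies $Q'\land P'$; the remaining dotted/yellow implications follow by the same proof with $U/U'$, $Q/Q'$, $P/P'$, $H/H'$ and the sender/receiver roles swapped as appropriate.
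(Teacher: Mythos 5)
Your proposal is correct and follows essentially the same route as the paper: it shows every valid attack on $\bohliSSAc$ is a valid attack on $\bohliRWA$ because $\EveryButSender$ fixes the receiver--message pairs across the two scenarios and hence forces $Q'$ and $P'$, and then closes with the standard contradiction template of Claim~\ref{claimGreen}. The only difference is that you spell out the unfolding of $\bohliRWA=\sgame{\bohliRWAc}$ and the linking-set bookkeeping in more detail than the paper does, which does not change the argument.
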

\begin{proof}
Given a valid attack $\mathcal{A}$ on $\bohliSSAc$. We show that $\mathcal{A}$ is a valid attack on $\bohliRWA$:
Because of $\EveryButSender$  of $\bohliSSAc$ the receiver-message pairs of the communications input to the protocol are the same in both scenarios. Hence, every receiver receives the same messages, i.e.  $Q'$ and $P'$ are  fulfilled. 
So, every attack against $(c, \epsilon, \delta)-\bohliSSAc$ is valid against $(c, \epsilon, \delta)- \bohliRWA$. 

Now, the proof by contradiction is done analogous to the proof of Claim \ref{claimGreen}.
\end{proof}

 \begin{tikzpicture}[>=triangle 45,font=\sffamily]
    \node (A) at (0,0) {};
    \node (B)[right = 1cm of A] {};
    \draw [semithick,line width=0.5mm,->,blue] (A) -- (B);
\end{tikzpicture}
All dark blue implications 
 $(c, \epsilon, \delta)-X_1 \Rightarrow (c, \epsilon, \delta)-X_2$ follow from the definition of the notions: Every valid attack against $X_2$ is valid against $X_1$. This holds because $U \Rightarrow |U|$, $H \Rightarrow |U|$, $Q \Rightarrow U$, $P \Rightarrow H$ , $Q \Rightarrow H$, \something \ $ \Rightarrow $ \nothing\ and obviously for any properties $A$ and $B$: $A \land B \Rightarrow A$ resp.  $A \land B \Rightarrow B$. 


 \begin{tikzpicture}[>=triangle 45,font=\sffamily]
    \node (A) at (0,0) {};
    \node (B)[right = 1cm of A] {};
    \draw [semithick,line width=0.5mm,->,red, dotted] (A) -- (B);
\end{tikzpicture}
The dotted, red implications $(c, \epsilon, \delta)-X_1 \Rightarrow (c, \epsilon, \delta)- X_2$ hold, because of the following and analogous proofs:

\begin{claim}($\bohliSWU \implies \anoaRAWOL$)
If protocol $\Pi$ achieves $(c, \epsilon, \delta)-\bohliSWU$, it achieves $(c, \epsilon, \delta)-\anoaRAWOL$. 
\end{claim}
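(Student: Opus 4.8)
The plan is to reuse the reduction template established for Theorem~\ref{loopixUO} and Claim~\ref{claimGreen}, which is the scheme behind all dotted red implications: to prove $\bohliSWU \Rightarrow \anoaRAWOL$ it suffices to show that every batch query that is valid for $\anoaRAWOL$ is also valid for $\bohliSWU$, and then to argue indirectly. Concretely, if $\Pi$ achieved $(c, \epsilon, \delta)-\bohliSWU$ but not $(c, \epsilon, \delta)-\anoaRAWOL$, then a successful attack on $\anoaRAWOL$ would, query for query, also be a successful attack on $\bohliSWU$ with the very same outputs and the same $(c,\epsilon,\delta)$, contradicting the assumption; transitivity of the resulting hierarchy comes for free from this style of argument, as already noted in the proof of Theorem~\ref{the:impl}.

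The substance of the proof is the property inclusion, which I would carry out by unfolding Table~\ref{NotionsDefinition}. The notion $\anoaRAWOL$ checks \something $\land \EveryButReceiverMsg$, whereas $\bohliSWU = \rgame{\bohliSWUc}$ checks \something $\land Q$ (the $\rgame{\cdot}$ variant of $\bohliSWUc$ simply drops the $\EveryButSender$ conjunct and hence leaves receivers and messages completely unconstrained). So I would fix an arbitrary $\anoaRAWOL$-valid challenge $(\underline{r}_0,\underline{r}_1)$ and verify the two conjuncts of $\bohliSWU$ in turn: first, $\EveryButReceiverMsg$ forces the $j$-th communication on both sides to be the genuine tuple with sender $u_{0_j}$, never $\Diamond$, so \something holds; second, since the two scenarios then share the same sequence of senders in each batch, every sender emits the same number of messages on both sides, i.e. $Q_0 = Q_1$, so $Q$ holds. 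Hence the challenge is $\bohliSWU$-valid and the reduction goes through. I would then close exactly as in Claim~\ref{claimGreen}, and add one sentence stating that the remaining dotted red implications are obtained the same way, after substituting the relevant pair of notions and re-checking the (equally short) property inclusion.

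I do not anticipate a genuine obstacle; the one place that needs care is the direction of the embedding --- one maps queries of the \emph{more constrained} notion into the \emph{less constrained} one, and here $\anoaRAWOL$ is the more constrained side, since it pins the senders down while $\bohliSWU$ only pins their frequencies. The secondary point worth spelling out is that, because $\rgame{\bohliSWUc}$ has had $\EveryButSender$ removed, nothing an $\anoaRAWOL$-attack does to receivers or messages can ever render the challenge invalid for $\bohliSWU$; this is immediate from the definitions, but it is precisely the step where an inattentive reading could go wrong.
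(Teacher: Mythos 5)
Your proposal is correct and follows essentially the same route as the paper: unfold $\bohliSWU$ to \something $\land\, Q$ and $\anoaRAWOL$ to \something $\land\, \EveryButReceiverMsg$, observe that $\EveryButReceiverMsg$ fixes the senders so every sender sends equally often in both scenarios (hence $Q$ holds), conclude that any $\anoaRAWOL$-valid attack is $\bohliSWU$-valid, and close with the indirect argument template of Claim~\ref{claimGreen}. The paper reads \something{} directly off the definition of $\anoaRAWOL$ rather than deriving it from $\EveryButReceiverMsg$, but this is an immaterial difference.
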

\begin{proof}
Given attack $\mathcal{A}_1$ on $(c,\epsilon,\delta)\anoaRAWOL$. We show that $\mathcal{A}_1$  is valid against $(c, \epsilon, \delta)-\bohliSWU$:
Sending nothing is not allowed in $\anoaRAWOL$ and hence, will not happen (\something) and because of $\EveryButReceiverMsg$, every sender sends equally often in both scenarios, i.e. Q is fulfilled.

Now, the proof by contradiction is done analogous to the proof of Claim \ref{claimGreen}.
\end{proof}

 \begin{tikzpicture}[>=triangle 45,font=\sffamily]
    \node (A) at (0,0) {};
    \node (B)[right = 1cm of A] {};
    \draw [semithick,line width=0.5mm,->,cyan] (A) -- (B);
\end{tikzpicture}
The cyan implications $(c, \epsilon, \delta)-X_1 \Rightarrow (c, \epsilon, 2\delta)- X_2$ hold, because of the following and analogous proofs \footnote{For $\bohliSSAc \Rightarrow\anoaREL $ (resp. $\bohliSSAc  \Rightarrow \newSAUO$) pick challenge rows differently; for $b=0: a=a'$  and for $b=1:a=1-a'$ to ensure that receivers (resp. messages) are equal.\\
For $ (2c, \epsilon,  \delta)-\bohliSWUc \Rightarrow (c,\epsilon, \delta)- \newSA$, \mbox{(resp. $\bohliRWUc \Rightarrow \newRA$, $\bohliSWAc \Rightarrow \loopixSRTPU$)} replace the challenge row with the corresponding two rows.}:

\begin{claim}($\bohliRSAc \implies \anoaREL$)
If protocol $\Pi$ achieves $(c, \epsilon, \delta)-\bohliRSAc$, it achieves $(c, \epsilon, 2\delta)-\anoaREL$.
\end{claim}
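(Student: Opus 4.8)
The plan is to reuse the indirect-reduction template of Theorem~\ref{loopixUO} and Claim~\ref{claimGreen}: assuming, for contradiction, that $\Pi$ achieves $(c,\epsilon,\delta)$-$\bohliRSAc$ but not $(c,\epsilon,2\delta)$-$\anoaREL$, I would fix a successful attacker $\mathcal{A}$ on $(c,\epsilon,2\delta)$-$\anoaREL$ and build from it an attacker $\mathcal{B}$ on $(c,\epsilon,\delta)$-$\bohliRSAc$, which contradicts the assumption. Since the transformation is explicit, the resulting implication is transitive, as the hierarchy requires.

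The first step is the structural observation that makes $\mathcal{A}$ reusable. In the $\anoaREL$ game the challenger draws a challenge bit $b$ and, per challenge, an independent instance bit $a$; by $R_{SR}$ the challenge rows of instance $a$ read $(u^a_0,{u'}^a_0,m,aux)$ when $b=0$ and $(u^a_0,{u'}^{1-a}_0,m,aux)$ when $b=1$, while $\EveryButSenderRec$ keeps messages and auxiliary data fixed and all non-challenge rows are shared between the two scenarios. Hence, for a \emph{fixed} instance $a$, the batches $\underline{r}^a_0$ and $\underline{r}^a_1$ differ only in the receiver of the challenge rows --- the sender is $u^a_0$ on both sides --- so $(\underline{r}^a_0,\underline{r}^a_1)$ is a legal $\bohliRSAc$ batch query: it inherits \something\ from $\anoaREL$ and satisfies $\EveryButRec$.

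Given this, I would let $\mathcal{B}$ run $\mathcal{A}$ internally, answering each four-batch $\anoaREL$ query of $\mathcal{A}$ by drawing a fresh instance bit $a^*\in\{0,1\}$ for that challenge, forwarding only $(\underline{r}^{a^*}_0,\underline{r}^{a^*}_1)$ to its own $\bohliRSAc$ challenger, relaying $\Pi(\underline{r}^{a^*}_b)$ back to $\mathcal{A}$, passing all protocol queries through unchanged, and finally echoing $\mathcal{A}$'s guess. From $\mathcal{A}$'s point of view this is exactly the $\anoaREL$ game, so with $p^a_b := \text{Pr}[\mathcal{A}\text{ outputs }0\bigm|\Pi(\underline{r}^a_b)]$ one obtains $\text{Pr}[0=\langle\mathcal{B}\bigm| Ch(\Pi,\bohliRSAc,c,b)\rangle] = \frac{1}{2}(p^0_b+p^1_b) = \text{Pr}[0=\langle\mathcal{A}\bigm| Ch(\Pi,\anoaREL,c,b)\rangle]$, and feeding the $(c,\epsilon,\delta)$-$\bohliRSAc$ guarantee for $\mathcal{B}$ into the supposed $(c,\epsilon,2\delta)$-$\anoaREL$ success of $\mathcal{A}$ forces $2\delta<\delta$. (Equivalently, one could take the two deterministic choices $a^*=0$ and $a^*=1$, add the inequalities $p^a_0\le e^\epsilon p^a_1+\delta$, and get the same bound; the factor $2$ in the statement is harmless slack that keeps this claim uniform with the other cyan implications such as $\bohliRSAc\Rightarrow\newSAUO$, which uses the footnoted alternative choice of challenge rows.) I do not expect a genuine obstacle here; the only points to check carefully are the $\EveryButRec$-validity of the forwarded query on the non-challenge rows --- which holds because those rows are common to all four submitted batches --- and that the receiver-, message-, and combined-coordinate analogues ($\bohliSSAc\Rightarrow\anoaREL$, $\bohliSWUc\Rightarrow\newSA$, $\bohliRWUc\Rightarrow\newRA$, $\bohliSWAc\Rightarrow\loopixSRTPU$) go through after relabelling which tuple entry may vary and, in the message cases, replacing a challenge row by the two rows prescribed in the footnote.
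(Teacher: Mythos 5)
Your reduction is correct and rests on the same key observation as the paper's proof: once the instance bit is fixed, the two $\anoaREL$ scenarios share the sender $u^a_0$ and the message and auxiliary data of every challenge row, so the projected pair of batches is a legal $\EveryButRec$ query against $\bohliRSAc$. Where you genuinely differ is in how the instance bit is handled. The paper fixes it deterministically, building two attacks $\mathcal{A}_1'$ (with $a=0$) and $\mathcal{A}_1''$ (with $a=1$) and arguing by averaging that at least one of them retains enough of $\mathcal{A}$'s advantage --- which is where the factor $2$ in $\delta$ enters --- and for multiple challenges this forces an enumeration of all $2^c$ combinations of instance bits, together with a footnote explaining why that does not violate the PPT requirement. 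Your single randomized adversary $\mathcal{B}$, which samples $a^*$ itself once per challenge and forwards only $(\underline{r}^{a^*}_0,\underline{r}^{a^*}_1)$, simulates the $\anoaREL$ game perfectly and therefore carries over $\mathcal{A}$'s advantage exactly; it handles any number of challenges uniformly, needs no pigeonhole step, and --- as you observe --- shows the factor $2$ is slack, since the implication then holds with $\delta$ in place of $2\delta$. The only bookkeeping point worth adding is that $\mathcal{B}$ should itself perform the $\anoaREL$ validity check on $\mathcal{A}$'s four-batch query before forwarding, so that the simulation also aborts exactly when the real $\anoaREL$ challenger would. Both arguments are sound; yours is the tighter and more economical one, while the paper's has the minor virtue of exhibiting deterministic distinguishers.
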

\begin{proof}
We first argue the case of one challenge and later on extend it to multiple challenges.
Given attack $\mathcal{A}_2$ on $(1, \epsilon, 2\delta)-\anoaREL$. We construct two attacks $\mathcal{A}_1'$ and $\mathcal{A}_1''$ against $\bohliRSAc$ and show that one of those has at least the desired success.

We construct attacks  $\mathcal{A}_1'$ and $\mathcal{A}_1''$.  We therefore pick $a'=0$ and $a''=1$.  Those shall replace $a$, which would be picked randomly by the challenger in $\anoaREL$ to determine the batch instance.
In $\mathcal{A}_1'$ we use the communications of $\mathcal{A}_2$ corresponding to $a'=0$ (for $b=0$ and $b=1$) as challenge row, whenever a batch in $\mathcal{A}_2$ includes a challenge row. In $\mathcal{A}_1''$ we analogously use the communications corresponding to $a''=1$.

We show that both $\mathcal{A}_1'$ and $\mathcal{A}_1''$ are valid against $(1, \epsilon, \delta)-\bohliRSAc$:
Sending nothing is also not allowed in $\anoaREL$ and hence, will not happen (\something) and because of the fixed $a=a'$ or $a=a''$,  the senders of challenge rows are the same in both scenarios. Since also messages are equal  in  $\anoaREL$, the sender-message pairs are fixed ($\EveryButRec$). Hence, $\mathcal{A}_1'$ and  $\mathcal{A}_1''$ are valid against $\bohliRSAc$. 

Since $\mathcal{A}_2$ is an successful attack on $(1, \epsilon, 2\delta)-\anoaREL$ and $\mathcal{A}_1'$ and $\mathcal{A}_1''$ against $\bohliRSAc$ only fix the otherwise randomly picked $a$:
\begin{align*}
0.5&\text{Pr}[0= \langle \mathcal{A}_1' \bigm| Ch(\Pi, \bohliRSAc,c,0)\rangle ]+\\
 0.5 &\text{Pr}[0= \langle \mathcal{A}_1'' \bigm| Ch(\Pi, \bohliRSAc,c,0)\rangle ] \\
 >&e^{\epsilon} \cdot(0.5\text{Pr}[0= \langle \mathcal{A}_1' \bigm| Ch(\Pi, \bohliRSAc,c,1)\rangle ]+\\
  0.5 &\text{Pr}[0= \langle \mathcal{A}_1'' \bigm| Ch(\Pi, \bohliRSAc,c,1)\rangle ]) +2 \delta\text{.}
\end{align*}

So,
\begin{align*}
 0.5&\text{Pr}[0= \langle \mathcal{A}_1' \bigm| Ch(\Pi, \bohliRSAc,c,0)\rangle ]\\
 >&e^{\epsilon} \cdot(0.5\text{Pr}[0= \langle \mathcal{A}_1' \bigm| Ch(\Pi, \bohliRSAc,c,1)\rangle ]+  \delta\\
 &\text{or}\\ 
 0.5&\text{Pr}[0= \langle \mathcal{A}_1'' \bigm| Ch(\Pi, \bohliRSAc,c,0)\rangle ] \\
 >&e^{\epsilon} \cdot(0.5\text{Pr}[0= \langle \mathcal{A}_1'' \bigm| Ch(\Pi, \bohliRSAc,c,1)\rangle ]+ \delta\\
 \end{align*}
 must hold true (otherwise we get a contradiction with the above inequality). Hence, $\mathcal{A}_1'$ or $\mathcal{A}_1''$ has to successfully break $(1, \epsilon, \delta)-\bohliRSAc$.

In case of multiple challenges: the instance bit is picked randomly for every challenge. Hence, we need to construct one attack for every possible combination of instance bit picks, i.e. $2^c$ attacks in total\footnote{Note that this does not contradict the PPT requirement of our definition as only finding the right attack is theoretically exponential in the number of challenges allowed. However, the attack itself is still PPT (and might be even easier to find for a concrete protocol).} from which each is a PPT algorithm and at least one is at least as successful as the attack on $(c, \epsilon, 2\delta)-\anoaREL$.

Now, the proof by contradiction is done analogous to the proof of Claim \ref{claimGreen}.
\end{proof}

\inlineheading{Remark}
$c$ can be any value in the proofs (esp. $c=1$ or $c>1$) the proposed constructions apply changes for each challenge row.

Additionally, the corrupt queries are not changed by the proposed constructions. Hence, the implications hold true between those notions as long as they have the same corruption options. Analogously sessions are not modified by the constructions and the same implications hold true between notions with equal session options. 
 
\end{proof}

\iflong
Additional implications based on corruption and sessions are shown in Figure \ref{HierarchyExtended}. Most of them hold by definition. Only the equivalence with and without challenge row restriction  per challenge is not so easy to see and proven below.
\fi 
\begin{theorem}
\label{CRTheorem}
For all $X$: 
\begin{enumerate}
\item $(c,\epsilon, \delta)-X \Rightarrow (c,\epsilon,\delta)-X_{CR_{c'}}$. 
\item $(c,\epsilon, \delta)-X \Leftarrow (c,\epsilon,\delta)-X_{CR_{c'}}$, if number of challenges not restricted. 
\end{enumerate}
\end{theorem}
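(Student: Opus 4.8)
\textbf{Proof plan for Theorem \ref{CRTheorem}.}
The plan is to prove both directions by the same indirect-attack-transfer technique that underlies Theorem \ref{loopixUO} and Theorem \ref{the:impl}: to show $(c,\epsilon,\delta)$-$A \Rightarrow (c,\epsilon,\delta)$-$B$ it suffices to show that every valid adversary against $B$ is also a valid adversary against $A$ with the same success probabilities; then a successful attack on $B$ would contradict $\Pi$ achieving $A$. The notion $X_{CR_{c'}}$ is, by Definition \ref{def:challengeComplexity}, just $X$ together with the extra check $\#\mathsf{CR}\le c'$ per challenge, so the challenger of $X_{CR_{c'}}$ accepts a strict subset of the queries accepted by the challenger of $X$ (it performs all of $X$'s checks plus one more), while still enforcing the overall bound $c$ on total challenge rows.

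For part (1), the inclusion is immediate: any adversary $\mathcal{A}$ that is valid against $X_{CR_{c'}}$ issues only queries that pass $X$'s property checks (since $X_{CR_{c'}}$ checks those too) and respects the global bound $c$; hence $\mathcal{A}$ is verbatim a valid adversary against $(c,\epsilon,\delta)$-$X$, with identical interaction transcripts and therefore identical winning probability. So a successful $\mathcal{A}$ against $X_{CR_{c'}}$ immediately yields a successful attack against $X$, giving the contrapositive. No modification of the adversary is needed here.

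For part (2) — the interesting direction — I would take an arbitrary adversary $\mathcal{A}$ that is valid against $(c,\epsilon,\delta)$-$X$ and build from it an adversary $\mathcal{A}'$ that is valid against $X_{CR_{c'}}$ with the same advantage, exploiting that the number of challenges is unbounded. The idea is that $\mathcal{A}$ may, within a single challenge $\Psi$, submit up to $c$ challenge rows, which may exceed $c'$; $\mathcal{A}'$ simulates $\mathcal{A}$ but whenever $\mathcal{A}$'s current challenge would accumulate more than $c'$ challenge rows, $\mathcal{A}'$ opens a fresh challenge (incrementing $\Psi$) and continues placing $\mathcal{A}$'s communications there. Since each new challenge starts its per-challenge counter $\#\mathsf{CR}$ at zero, and since the number of challenges is not restricted, $\mathcal{A}'$ can always spread $\mathcal{A}$'s (at most $c$) challenge rows across $\lceil c/c'\rceil$ challenges without ever violating $\#\mathsf{CR}\le c'$; the global bound $c$ is still respected because we never add rows. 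The remaining point is that splitting one challenge into several does not change what the protocol model $\Pi$ sees or outputs: the challenge number $\Psi$ is part of the auxiliary information used only by the challenger, not by $\Pi$ (as stated when challenge cardinality is introduced), and the single coin flip $b$ is reused across all challenges in a game, so the scenario actually executed and the observations returned to the adversary are exactly as in $\mathcal{A}$'s original run. Hence $\mathcal{A}'$ wins $X_{CR_{c'}}$ with exactly $\mathcal{A}$'s probability, and a successful $\mathcal{A}$ against $X$ yields a successful $\mathcal{A}'$ against $X_{CR_{c'}}$, contradicting $\Pi$ achieving the latter.

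The main obstacle — really the only subtle point — is justifying that relocating communications into a freshly numbered challenge is transparent to $\Pi$ and does not alter the distribution of observations; this rests precisely on the design decisions that $\Psi$ is challenger-internal and that one coin flip governs the whole game, which must be cited carefully. One should also note, as in the proofs of Theorem \ref{the:impl}, that the construction touches neither corrupt queries, protocol queries, nor session identifiers, so the equivalence carries over between notions sharing the same corruption and session options; and that the hypothesis ``number of challenges not restricted'' is genuinely needed, since with a bounded challenge cardinality $n$ one could not always afford $\lceil c/c'\rceil$ extra challenges.
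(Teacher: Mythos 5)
Your proposal is correct and follows essentially the same route as the paper: part (1) by observing that any valid adversary against $X_{CR_{c'}}$ is verbatim a valid adversary against $X$, and part (2) by re-tagging the challenge rows of an attack on $X$ into fresh challenges of at most $c'$ rows each, which is exactly the paper's Attack Construction, justified as you do by the fact that the challenge number is challenger-internal and invisible to $\Pi$. Your closing remarks on why unbounded challenge cardinality is needed and on corruption/session options match the paper's remarks as well.
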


\begin{proof}
1. Trivial: Given an attack valid on $X$ restricted regarding the challenge rows per challenge. This attack is also valid against $X$ without challenge row restriction. 

2. We need to construct a new attack:
\begin{translation}\label{BohliToAnoA}
Given an attack  $\mathcal{A}_2$, we construct attack $\mathcal{A}_1$. 
Let $\bar{n}$ be the number of previous challenges used in $\mathcal{A}_1$ so far. 
For every batch query $bq$ with $n''$ challenge rows replace the challenge tag of the $1$st $,\dots, c'$st challenge row with $\bar{n}+1 $; continue with the next $c'$ challenge rows and the increased challenge tag until no challenge rows are left.
Use all other queries as $\mathcal{A}_2$ does, give the answers to $\mathcal{A}_2$ and output whatever $\mathcal{A}_2$ outputs.
\end{translation}

Given  attack $\mathcal{A}_2$ on $X$. We construct an attack $\mathcal{A}_1$ with the same success against $X_{CR_1}$ by using Attack Construction \ref{BohliToAnoA}.
We show that $\mathcal{A}_1$ is valid against $X_{CR_1}$:
Attack Construction \ref{BohliToAnoA} assures, that at most $c'$ challenge rows are used in every challenge ($CR_{c'}$), all other aspects of $X$ are fullfilled in $\mathcal{A}_2$, too.
Since $\mathcal{A}_1$ perfectly simulates the given attack $\mathcal{A}_2$, it has the same success.

Now, the proof by contradiction is done analogous to the proof of Claim \ref{claimGreen}.

\end{proof}

%
So far we have proven that implications between notions exist. Further, we assure that the hierarchy is complete, i.e. that there exist no more implications between the notions of the hierarchy:
\else
\begin{proofsketch}
We prove every implication $X_1 \Rightarrow X_2$ by an indirect proof of the following outline: Given an attack on $X_2$, we can construct an attack on $X_1$ with the same success. Assume a protocol has $X_1$, but not $X_2$. Because it does not achieve $X_2$, there exists a successful attack on $X_2$. However, this implies that there exists a successful attack on $X_1$ (we even know how to construct it). This contradicts that the protocol has $X_1$.\footnote{In AnoA, Bohli's and Hevia's framework some of these implications are proved for their notions in the same way.} Due to this construction of the proof, the implications are transitive.

We use different arrow styles in Figure \ref{fig:hierarchyColored} to partition the implications into those with analogous proofs.

 \begin{tikzpicture}[>=triangle 45,font=\sffamily]
    \node (A) at (0,0) {};
    \node (B)[right = 1cm of A] {};
    \draw [semithick,line width=0.5mm,->,blue] (A) -- (B);
\end{tikzpicture}
 follow from the definition of the notions.

 \begin{tikzpicture}[>=triangle 45,font=\sffamily]
    \node (A) at (0,0) {};
    \node (B)[right = 1cm of A] {};
    \draw [semithick,line width=0.5mm,->,green, dotted] (A) -- (B);
\end{tikzpicture}
hold, because of the following and analogous arguments:
every attack against $ \bohliSSAc$  is valid against $ \bohliRWA$: Because of $\EveryButSender$ the receiver-message pairs of the communications input to the protocol are the same in both scenarios. Hence, every receiver receives the same messages, i.e.  $Q'$ and $P'$ are  fulfilled. 



 \begin{tikzpicture}[>=triangle 45,font=\sffamily]
    \node (A) at (0,0) {};
    \node (B)[right = 1cm of A] {};
    \draw [semithick,line width=0.5mm,->,cyan, dashed] (A) -- (B);
\end{tikzpicture}
$X_1 \Rightarrow  X_2$ hold, because of the following and analogous arguments:\footnote{For $\bohliSSAc \Rightarrow\anoaREL $ (or $\bohliSSAc  \Rightarrow \newSAUO$) pick challenge rows differently; for $b=0: a=a'$ and for $b=1:a=1-a'$ to ensure that receivers (or messages) are equal.\\
For \mbox{$ \bohliSWUc \Rightarrow \newSA$}, (or \mbox{$\bohliRWUc \Rightarrow \newRA$}, \mbox{$\bohliSWAc \Rightarrow \loopixSRTPU$}) replace the challenge row with the corresponding two rows.}
given attack $\mathcal{A}_2$ on $\anoaREL$. We construct two attacks $\mathcal{A}_1'$ and $\mathcal{A}_1''$ against $\bohliRSAc$ and show that one of those has at least the desired success.

We construct attacks  $\mathcal{A}_1'$ and $\mathcal{A}_1''$ by picking $a'=0$ and $a''=1$.  These shall replace $a$, which would be picked randomly by the challenger in $\anoaREL$ to determine the instance.
In $\mathcal{A}_1'$ we use the communications of $\mathcal{A}_2$ corresponding to $a'=0$ (for $b=0$ and $b=1$) as the challenge row, whenever a batch in $\mathcal{A}_2$ includes a challenge row. In $\mathcal{A}_1''$ we analogously use the communications corresponding to $a''=1$.

$\mathcal{A}_1'$ and $\mathcal{A}_1''$ are valid against $\bohliRSAc$:
Because of the fixed $a=a'$ or $a=a''$,  the senders of challenge rows are the same in both scenarios. Since messages are also  equal  in  $\anoaREL$, the sender-message pairs are fixed ($\EveryButRec$).
Since $\mathcal{A}_2$ is an successful attack on $\anoaREL$ and $\mathcal{A}_1'$ and $\mathcal{A}_1''$ against $\bohliRSAc$ only fix the otherwise randomly-picked $a$, one of the two newly-constructed attacks successfully breaks $\bohliRSAc$. For the case of multiple challenges we refer the reader to the extended version.
\end{proofsketch}

Further, our hierarchy is complete in the sense that no implications are missing: 

\iflong
\todo[inline]{If this is ever used, here are n's that should not be here}
\subsection{For CR Option}
\label{CRProofSketch}
\vspace{-1em}
\begin{theorem}
For all $X$: 
\begin{enumerate}
\item $(c,n,\epsilon, \delta)-X \Rightarrow (c,n,\epsilon,\delta)X_{CR_{\#cr}}$. 
\item $(c,n,\epsilon, \delta)-X \Leftarrow (c,n',\epsilon,\delta)X_{CR_{\#cr}}$ with $n \leq n'$ 
\end{enumerate}
\end{theorem}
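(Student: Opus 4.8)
The plan is to run the paper's standard indirect-proof template once in each direction: to show that achieving $Y_1$ implies achieving $Y_2$ it is enough to turn every adversary that is \emph{valid} against $Y_2$ into an adversary that is valid against $Y_1$ and induces the same output distribution (hence the same $(\epsilon,\delta)$-advantage), so that a protocol defeating all $Y_1$-adversaries defeats all $Y_2$-adversaries; the closing contradiction step is then verbatim as in the proof of Claim \ref{claimGreen} and of Theorem \ref{CRTheorem}. Part 1, with $Y_1=(c,n,\epsilon,\delta)$-$X$ and $Y_2=(c,n,\epsilon,\delta)$-$\challengeRows{X}$, is immediate: $\challengeRows{X}$ is obtained from $X$ only by \emph{adding} the check $\mathsf{CR}_{\#cr}$ of Definition \ref{def:challengeComplexity} while keeping $c,n,\epsilon,\delta$ fixed, so the $\challengeRows{X}$-challenger accepts a subset of the query sequences accepted by the $X$-challenger; an adversary valid against $\challengeRows{X}$ is therefore, without any change, an adversary valid against $X$ with the same view of $\Pi$, and all parameters carry over.

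For Part 2 I would take $Y_1=(c,n',\epsilon,\delta)$-$\challengeRows{X}$ and $Y_2=(c,n,\epsilon,\delta)$-$X$ and transform a successful attack $\mathcal{A}_2$ on $Y_2$ into a successful attack $\mathcal{A}_1$ on $Y_1$ using Attack Construction \ref{BohliToAnoA}: leave every batch, protocol, and corrupt query — in particular every communication that reaches $\Pi$ and every batch boundary — untouched, and only rewrite the challenge tags, cutting the challenge rows of each batch query into consecutive blocks of at most $\#cr$ rows, each block receiving a fresh challenge number. Then $\mathcal{A}_1$ (i) never puts more than $\#cr$ challenge rows into one challenge, so $\mathsf{CR}_{\#cr}$ holds; (ii) uses exactly the challenge rows of $\mathcal{A}_2$, so the total is still $\leq c$, and every remaining check of $X$ passed by $\mathcal{A}_2$ is still passed — the simple properties of Section \ref{aspects} depend only on the communications and batch structure, and the per-row constraints of the complex properties are anchored to reference values (the sender/receiver pair of the first challenge row of a challenge, a stage) that a valid $X$-attack already keeps constant across the whole challenge, so they survive the re-grouping; (iii) produces exactly the same view at $\Pi$ as $\mathcal{A}_2$, since $\Pi$ never sees the challenge tags, so its output distribution and hence its $(\epsilon,\delta)$-advantage equal those of $\mathcal{A}_2$. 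Thus $\mathcal{A}_1$ breaks $\challengeRows{X}$ with the same $(\epsilon,\delta)$, provided it also respects the challenge budget $n'$; the closing contradiction is as before, and since only challenge tags are touched the implication likewise holds between the corresponding notions carrying equal session and corruption options.

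The only genuinely delicate point — and the place where the stated side condition is not quite enough — is that last proviso. Splitting a challenge carrying $c_i$ challenge rows produces $\lceil c_i/\#cr\rceil$ challenges, so in the worst case $\#cr=1$ the transformed attack needs one challenge per challenge row of $\mathcal{A}_2$, i.e. up to $c$ of them; the hypothesis on $n'$ must therefore be strong enough to guarantee $\sum_i \lceil c_i/\#cr\rceil \leq n'$, which (since $\sum_i c_i\leq c$) is ensured by $n'\geq c$ or by leaving the challenge count unrestricted — exactly the form used in the companion Theorem \ref{CRTheorem}. The bare inequality $n\leq n'$ suffices only when $\#cr$ is already large enough that $\mathcal{A}_2$ never exceeds $\#cr$ challenge rows in a single challenge; I would therefore replace the side condition by $n'\geq c$ (equivalently, unbounded challenge cardinality), after which every step above goes through unchanged and Part 1 is unaffected.
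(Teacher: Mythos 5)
Your argument is the paper's own: part 1 follows because the $\mathsf{CR}_{\#cr}$-challenger only adds a check, and part 2 uses exactly the paper's Attack Construction \ref{BohliToAnoA} of re-tagging consecutive blocks of at most $\#cr$ challenge rows with fresh challenge numbers while leaving everything that reaches $\Pi$ untouched. Your closing caveat is also justified: the paper's proof sketch never verifies that the re-tagged attack respects the challenge budget, the splitting produces $\sum_i\lceil c_i/\#cr\rceil$ challenges (as many as $c$ when $\#cr=1$), so the bare hypothesis $n\le n'$ is indeed too weak, and the authors themselves sidestep this in the companion Theorem \ref{CRTheorem} by requiring the number of challenges to be unrestricted --- consistent with your proposed replacement $n'\ge c$.
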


\begin{proofsketch}
1.) follows from definition.

2.) Given an attack  $\mathcal{A}_2$, we construct attack $\mathcal{A}_1$. 
Let $\bar{n}$ be the number of previous challenges used in $\mathcal{A}_1$ so far. 
For every batch query $bq$ with $n''$ challenge rows replace the challenge number of the $1$st $,\dots, {\#cr}$'st challenge row with $\bar{n}+1$ in $\mathcal{A}_1$. Continue for the next $\#cr$ challenge rows with the next higher challenge number. Repeat till all challenge rows have a new number.
Use all other queries as $\mathcal{A}_2$ does, give the answers to $\mathcal{A}_2$ and output whatever $\mathcal{A}_2$ outputs.

We show that $\mathcal{A}_1$ is valid against $X_{CR_{\#cr}}$:
The attack construction assures, that at most $\#cr$ challenge row are used in every challenge ($CR_{\#cr}$), all other aspects of $X$ are fullfilled in $\mathcal{A}_2$, too.
Since $\mathcal{A}_1$ perfectly simulates the given attack $\mathcal{A}_2$, it has the same success.
\end{proofsketch}
\fi

\fi

%
%
%
\begin{theorem}
\label{the:noImpl}
For all notions $X_1$ and $X_2$ of our hierarchy, where  $X_1 \implies X_2$ is not proven or implied by transitivity, there exists an \ac{ACN} protocol achieving $X_1$, but not $X_2$.
\end{theorem}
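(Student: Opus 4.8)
The plan is to establish completeness constructively: for every notion I will build a single ``maximally leaky'' protocol that achieves it but reveals everything the notion permits, and then reduce the theorem to a finite check over the non-edges of Figure~\ref{fig:hierarchyColored}. Write $X_1\Rightarrow X_2$ for the transitive closure of the implications drawn in Figure~\ref{fig:hierarchyColored}, together with the trivial $\sgame{X},\rgame{X}\Rightarrow X$. The key observation is that it suffices to exhibit, for each notion $Y$, one protocol model $\Pi_Y$ with (i) $\Pi_Y$ achieves $Y$, and (ii) $\Pi_Y$ does not achieve any notion $Z$ with $Y\centernot\Rightarrow Z$. Then, given $X_1,X_2$ as in the statement (so $X_1\centernot\Rightarrow X_2$), the protocol $\Pi_{X_1}$ is the required witness; moreover, since a single $\Pi_{X_1}$ simultaneously separates $X_1$ from every notion not above it, only linearly many protocols must be built rather than one per pair.

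First I would define, for each notion $Y$ of Table~\ref{NotionsDefinition}, its \emph{permitted-leakage function} $f_Y$: the coarsest function of a batch (together with the instance and stage bits, for the complex notions) that is constant on every $Y$-compliant challenge. Concretely $f_Y$ returns the batch with exactly the $Y$-protected components erased or canonicalised -- for $\bohliSSAc$ it keeps the $(u',m,aux)$-parts and the pattern of real versus empty communications while deleting all sender information; for $\bohliSWUc$ it additionally keeps the sender-frequency set $Q$; for $\loopixSRTPU$ it keeps the active-sender multiset, the active-receiver multiset, the messages and $aux$, but \emph{not} the sender--receiver pairing; for $\heviaUO$ it is the empty string; and so on. I take $\Pi_Y$ to be the deterministic polynomial-time protocol model that outputs $f_Y$ of its input and returns nothing on protocol queries; this is a legitimate \ac{ACN} protocol model. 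Property (i) is then immediate: the two scenarios of any valid $Y$-challenge have the same $f_Y$-value, so $\Pi_Y$ produces identical transcripts and the adversary's advantage is $0$.

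For (ii) I would, for each pair $Y\centernot\Rightarrow Z$, produce an explicit winning attack of $\Pi_Y$ against $Z$. Since $Y\centernot\Rightarrow Z$, there is some property that $Z$ forces to agree across scenarios but that $f_Y$ reveals; walking through the cluster structure of Figure~\ref{fig:hierarchyColored} one checks that it always suffices to realise a short, fixed list of ``generating'' separations -- one per simple property ($U$, $Q$, $|U|$, $H$, $P$ and their receiver analogues, $|M|$, \something) and one per complex property ($R_{SR}, M_{SR}, R_{SM}, M_{SM}, R_{RM}, M_{RM}, T_S, T_R$). For each, the attack is a template instantiation: $\mathcal{A}$ submits two (four, for the complex notions) $Z$-compliant batches agreeing on everything $Z$ requires but differing exactly in that one property -- e.g. to show $\Pi_{\bohliSWUc}\centernot\models\bohliSSAc$, keep $(u',m,aux)$ fixed and let the active-sender set change, so the two $f_{\bohliSWUc}$-outputs (which contain $Q$) differ; to show $\Pi_{\loopixSRTPU}$ fails every notion protecting $U$ or $U'$, vary the active-sender set while keeping the $M_{SR}$ instance structure valid -- and $\mathcal{A}$ outputs whichever scenario matches the observed transcript, winning with advantage $1$. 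Composing these generators along the product structure of the hierarchy (sender/receiver/impartial clusters $\times$ leakage type $\times$ the $|U|,H,P$ add-ons) covers every missing edge, and the resulting attack constructions serve as the templates promised for adding future notions.

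The conceptual core here is short; the effort lies in the bookkeeping, and I expect the main obstacle to be twofold. First, one must verify that ``$Y\centernot\Rightarrow Z$ in the drawn hierarchy'' genuinely always exposes a $Z$-protected, $Y$-leaked property -- that is, that the finite list of generating separations is exhaustive -- which has to be argued cluster by cluster against Figure~\ref{fig:hierarchyColored}. Second, the complex notions are delicate, because there $f_Y$ must be defined relative to the instance/stage bits and the separating challenges have to be assembled as valid four-batch or multi-stage queries; in particular one must show carefully that $\Pi_{\loopixSRTPU}$ truly hides the sender--receiver pair while exposing both active sets, that $\Pi_{\anoaREL}$ hides co-activity yet becomes distinguishable once some active set changes, and that $\Pi_{\anoaUL}$ reveals per-batch sender frequencies yet hides whether two challenge messages originate at the same sender -- exactly the cases where the naive ``differ in a single property'' intuition needs the instance/stage machinery to be turned into a legal attack.
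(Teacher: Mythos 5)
Your overall architecture differs from the paper's: you build, once per notion, an absolute ``maximally leaky'' protocol $\Pi_Y$ that outputs a canonical quotient $f_Y$ of the batch, whereas the paper argues relatively -- it takes an arbitrary protocol $\Pi$ achieving $X_1$ (or something stronger, e.g.\ $\heviaUO$), bolts on the leak of one specific observable $I$, and shows the result still achieves $X_1$ but falls to a concrete attack on $X_2$. Your per-notion organisation would indeed cut the number of constructions from quadratic to linear and avoids conditioning on the existence of a protocol for $X_1$, so as a strategy it is attractive. However, there are two places where it breaks down.

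First, ``the coarsest function constant on every $Y$-compliant challenge'' is not well defined for the complex notions, and this is not mere bookkeeping. For $\anoaREL$ (property $R_{SR}$) the four batches of a challenge are such that leaking the active sender alone is safe, leaking the active receiver alone is safe, but leaking both breaks the notion (the joint distribution over (sender, receiver) differs between $b=0$ and $b=1$ once you average over the instance bit) -- that asymmetry is exactly why $M_{SR}$ had to be introduced. So there is no unique coarsest $f_{\anoaREL}$; you must make a choice, and different choices witness different non-implications, which reintroduces the per-pair case analysis you hoped to avoid. Second, and more seriously, your claim that every non-edge is witnessed by ``a property that $Z$ forces to agree but $f_Y$ reveals'' fails for the separations \emph{between} complex notions, e.g.\ $\newSAUO \centernot\Rightarrow \newSA$ and $\anoaREL \centernot\Rightarrow \loopixSRTPU$. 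Both notions in each pair protect the same linking; they differ only in the admissible challenge structure (whether both candidate senders/receivers must be simultaneously active). No quotient-of-the-batch observable separates them, which is why the paper resorts to a bespoke construction there: a protocol that, on batches of exactly two communications over fixed $(u_0,u_1,m_0,m_1)$, emits a bit that equals the challenge bit for $M_{SM}$-shaped challenges but whose advantage is \emph{compensated} to zero over the batch pairs an $R_{SM}$-adversary can legally submit (including the duplicated-row and mixed-row strategies). Your template of ``submit two $Z$-compliant batches differing in one property and match the transcript'' cannot produce this witness, so the finite list of ``generating separations, one per property'' is not exhaustive and the reduction at the heart of your step (ii) does not go through for these edges.
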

\iflong
\begin{proof}

\inlineheading{Overview}
We construct the protocol in the following way: Given a protocol $\Pi$ that achieves $X'_1$ ($X_1$ itself or a notion that implies $X_1$), let protocol $\Pi'$ run $\Pi$ and additionally output some information $I$. We argue that learning $I$ does not lead to any advantage distinguishing the scenarios for $X_1$. Hence, $\Pi'$ achieves $X_1$. We give an attack against $X_2$ where learning $I$ allows to distinguish the scenarios. Hence, $\Pi'$ does not achieve $X_2$.  
Further, we use the knowledge that $\implies$ is transitive\footnote{If $X_1 \implies X_2$ and $X_1 \centernot \implies X_3$, it follows that $X_2 \centernot \implies X_3$.} and give the systematic overview over all combinations from sender or impartial notions to all other notions and which proof applies for which relation in Table~\ref{tab:newRel}. Since receiver notions are completely analogous to sender notions, we spare this part of the table.

In Table~\ref{tab:newRel} ``$\Rightarrow$'' indicates that the notion of the column implies the one of the row. ``$=$'' is used when the notions are equal. 
$P_n$ indicates that the counterexample is described in a proof with number $n$. 
$P_A$ and $P_B$ are proofs covering multiple counterexamples and $P_A^*$ is a special one of those counterexamples. 
$P_n'$ means the proof analogous to $P_n$, but for the receiver notion.
$(P_n)$ means that there cannot be an implication, because otherwise it would be a contradiction with proof $P_n$ since our implications are transitive.

\begin{table*}[t!]
  \center
\resizebox{0.95\textwidth}{!}{%
  \begin{tabular}{c c c c c c c c c c c c c c c c c c c}
  &  $\heviaUO$ &$ \bohliSA$ &$\heviaUL$ &$\newCONFWOL$&$\newCONF$ &$\anoaREL$ & $\loopixSRTPU$ &  
 $ \bohliRSUP$ & $\bohliRWUP $ & $\bohliRPS$ & $\bohliRSUU $ & $\bohliRWUU $ & $\bohliRAN $ & $\bohliRWU$ & $\bohliRWA $ &
  $\newAnoaSAWOL$ &  $\newAnoaSA$ 
   \\

  \hline
  
  $\heviaUO$ &$=$&$\Rightarrow$&$\Rightarrow$&$\Rightarrow$&$\Rightarrow$&$\Rightarrow$&$\Rightarrow$&
 $\Rightarrow$ &$\Rightarrow$&$\Rightarrow$&$\Rightarrow$&$\Rightarrow$&$\Rightarrow$&$\Rightarrow$&$\Rightarrow$&
$\Rightarrow$ &$\Rightarrow$ \\

$ \bohliSA$&$P_1$&$=$&$\Rightarrow$&$\Rightarrow$&$\Rightarrow$&$\Rightarrow$&$\Rightarrow$&
$\Rightarrow$&$\Rightarrow$&$\Rightarrow$&$\Rightarrow$&$\Rightarrow$&$\Rightarrow$&$\Rightarrow$&$\Rightarrow$&
$\Rightarrow$ &$\Rightarrow$\\

$\heviaUL$&($P_1$)&($P_2$)&=&$\Rightarrow$&$\Rightarrow$&$P_2$&$\Rightarrow$&
($P_2$)&($P_2$)&($P_2$)&($P_2$)&($P_2$)&($P_2$)&($P_2$)&($P_2$)&
($P_2$)&($P_2$)\\

$\newCONFWOL$&($P_1$)&($P_3$)&($P_3$)&=&$\Rightarrow$&($P_2$)&$P_3$&
($P_3$)&($P_3$)&($P_3$)&($P_3$)&($P_3$)&($P_3$)&($P_3$)&($P_3$)&
($P_3$)&($P_3$)\\

$\newCONF$&($P_1$)&($P_3$)&($P_3$)&($P_{17}$)&=&($P_2$)&$(P_3)$&
($P_3$)&($P_3$)&($P_3$)&($P_3$)&($P_3$)&($P_3$)&($P_3$)&($P_3$)&
($P_3$)&($P_3$)\\

$\anoaREL$&($P_1$)&($P_5$)&($P_5$)&($P_6$)&($P_6$)&=&$P_{16}$&
($P_{16}$)&($P_{16}$)&($P_{16}$)&($P_{16}$)&($P_{16}$)&($P_{16}$)&($P_{16}$)&($P_{16}$)&
($P_{16}$)&($P_{16}$)\\

 $\loopixSRTPU$&($P_1$)&($P_5$)&($P_5$)&($P_6$)&($P_6$)&($P_7$)&=&
($P_{8}$)&($P_{8}$)&($P_{8}$)&($P_{8}$)&($P_{8}$)&($P_{8}$)&($P_{8}$)&($P_{8}$)&
($P_8'$)&($P_8'$)\\

 \hline

$ \bohliRSUP$&($P_1$)&($P_4$)&$\Rightarrow$&$\Rightarrow$&$\Rightarrow$&$\Rightarrow$&$\Rightarrow$&
=&$\Rightarrow$&$\Rightarrow$&$\Rightarrow$&$\Rightarrow$&$\Rightarrow$&$\Rightarrow$&$\Rightarrow$&
$\Rightarrow$&$\Rightarrow$\\

 $\bohliRWUP $&($P_1$)&($P_4$)&$\Rightarrow$&$\Rightarrow$&$\Rightarrow$&$\Rightarrow$&$\Rightarrow$&
($P_A$)&=&$\Rightarrow$&($P_A$)&$\Rightarrow$&$\Rightarrow$&$\Rightarrow$&$\Rightarrow$&
$\Rightarrow$&$\Rightarrow$\\

 $\bohliRPS$&($P_1$)&($P_6$)&($P_6$)&$(P_6)$&$P_6$&$\Rightarrow$&$\Rightarrow$&
($P_A$)&($P_A$)&=&($P_A$)&($P_A$)&$\Rightarrow$&($P_A$)&$\Rightarrow$&
($P_6$)&($P_6$)\\

 $\bohliRSUU $&($P_1$)&($P_4$)&$\Rightarrow$&$\Rightarrow$&$\Rightarrow$&$\Rightarrow$&$\Rightarrow$&
($P_A$)&($P_A$)&($P_A$)&=&$\Rightarrow$&$\Rightarrow$&$\Rightarrow$&$\Rightarrow$&
$\Rightarrow$&$\Rightarrow$\\

 $\bohliRWUU $&($P_1$)&($P_4$)&$\Rightarrow$&$\Rightarrow$&$\Rightarrow$&$\Rightarrow$&$\Rightarrow$&
($P_A$)&($P_A$)&($P_A$)&($P_A$)&=&$\Rightarrow$&$\Rightarrow$&$\Rightarrow$&
$\Rightarrow$&$\Rightarrow$\\

 $\bohliRAN $&($P_1$)&($P_6$)&($P_6$)&($P_6$)&($P_6$)&$\Rightarrow$&$\Rightarrow$&
($P_A$)&($P_A$)&($P_A$)&($P_A$)&($P_A$)&=&($P_A$)&$\Rightarrow$&
($P_6$)&($P_6$)\\

 $\bohliRWU$&($P_1$)&($P_4$)&$\Rightarrow$&$\Rightarrow$&$\Rightarrow$&$\Rightarrow$&$\Rightarrow$&
($P_A$)&($P_A$)&($P_A$)&($P_A$)&($P_A$)&($P_A$)&=&$\Rightarrow$&
$\Rightarrow$&$\Rightarrow$\\

 $\bohliRWA $&($P_1$)&($P_6$)&($P_6$)&($P_6$)&($P_6$)&$\Rightarrow$&$\Rightarrow$&
($P_A$)&($P_A$)&($P_A$)&($P_A$)&($P_A$)&($P_A$)&($P_A$)&=&
($P_6$)&($P_6$)\\

 $\newAnoaSAWOL$&($P_1$)&($P_5$)&$P_5$&$\Rightarrow$&$\Rightarrow$&$\Rightarrow$&$\Rightarrow$&
$(P_8)$&$(P_8)$&$(P_8)$&$(P_8)$&$(P_8)$&$(P_8)$&$(P_8)$&$(P_8)$&
=&$\Rightarrow$\\
 
 $\newAnoaSA$&($P_1$)&($P_5$)&$(P_5)$&$\Rightarrow$&$\Rightarrow$&$\Rightarrow$&$\Rightarrow$&
$(P_8)$&$(P_8)$&$(P_8)$&$(P_8)$&$(P_8)$&$(P_8)$&$(P_8)$&$(P_8)$&
($P_{17}$)&=\\

 $\bohliSSAc$&($P_1$)&($P_5$)&($P_5$)&($P_6$)&($P_6$)&$\Rightarrow$&$\Rightarrow$&
$(P_8)$&$(P_8)$&$(P_8)$&$(P_8)$&$(P_8)$&$(P_8)$&$(P_8)$&$(P_8)$&
($P_6$)&($P_6$)\\

 $\bohliSSUPc $&($P_1$)&($P_5$)&($P_5$)&($P_6$)&($P_6$)&$P_7$&$\Rightarrow$&
$(P_8)$&$(P_8)$&$(P_8)$&$(P_8)$&$(P_8)$&$(P_8)$&$(P_8)$&$(P_8)$&
($P_B$)&($P_B$)\\

 $\bohliSWUPc$&($P_1$)&($P_5$)&($P_5$)&($P_6$)&($P_6$)&($P_7$)&$\Rightarrow$&
$(P_8)$&$(P_8)$&$(P_8)$&$(P_8)$&$(P_8)$&$(P_8)$&$(P_8)$&$(P_8)$&
($P_B$)&($P_B$)\\

 $\bohliSPSc $&($P_1$)&($P_5$)&($P_5$)&($P_6$)&($P_6$)&($P_7$)&$\Rightarrow$&
$(P_8)$&$(P_8)$&$(P_8)$&$(P_8)$&$(P_8)$&$(P_8)$&$(P_8)$&$(P_8)$&
($P_B$)&($P_B$)\\

 $\bohliSSUUc $&($P_1$)&($P_5$)&($P_5$)&($P_6$)&($P_6$)&($P_7$)&$\Rightarrow$&
$(P_8)$&$(P_8)$&$(P_8)$&$(P_8)$&$(P_8)$&$(P_8)$&$(P_8)$&$(P_8)$&
($P_B$)&($P_B$)\\

 $\bohliSWUUc$&($P_1$)&($P_5$)&($P_5$)&($P_6$)&($P_6$)&($P_7$)&$\Rightarrow$&
$(P_8)$&$(P_8)$&$(P_8)$&$(P_8)$&$(P_8)$&$(P_8)$&$(P_8)$&$(P_8)$&
($P_B$)&($P_B$)\\

 $\bohliSANc $&($P_1$)&($P_5$)&($P_5$)&($P_6$)&($P_6$)&($P_7$)&$\Rightarrow$&
$(P_8)$&$(P_8)$&$(P_8)$&$(P_8)$&$(P_8)$&$(P_8)$&$(P_8)$&$(P_8)$&
($P_B$)&($P_B$)\\

 $\bohliSWUc$&($P_1$)&($P_5$)&($P_5$)&($P_6$)&($P_6$)&($P_7$)&$\Rightarrow$&
$(P_8)$&$(P_8)$&$(P_8)$&$(P_8)$&$(P_8)$&$(P_8)$&$(P_8)$&$(P_8)$&
($P_B$)&($P_B$)\\

 $\bohliSWAc$&($P_1$)&($P_5$)&($P_5$)&($P_6$)&($P_6$)&($P_7$)&$\Rightarrow$&
$(P_8)$&$(P_8)$&$(P_8)$&$(P_8)$&$(P_8)$&$(P_8)$&$(P_8)$&$(P_8)$&
($P_B$)&($P_B$)\\

$\anoaUL$&($P_1$)&($P_5$)&($P_5$)&($P_6$)&($P_6$)&$P_{10}$&$P_{11}$&
($P_{11}$)&($P_{11}$)&($P_{11}$)&($P_{11}$)&($P_{11}$)&($P_{11}$)&($P_{11}$)&($P_{11}$)&
($P_{11}$)&($P_{11}$)\\

$\newSAUO $&($P_1$)&($P_5$)&($P_5$)&($P_6$)&($P_6$)&$P_{12}$&$P_{13}$&
($P_{13}$)&($P_{13}$)&($P_{13}$)&($P_{13}$)&($P_{13}$)&($P_{13}$)&($P_{13}$)&($P_{13}$)&
($P_{13}$)&($P_{13}$)\\

$\newSA$&($P_1$)&($P_5$)&($P_5$)&($P_6$)&($P_6$)&$P_{14}$&$P_{15}$&
($P_{15}$)&($P_{15}$)&($P_{15}$)&($P_{15}$)&($P_{15}$)&($P_{15}$)&($P_{15}$)&($P_{15}$)&
($P_{15}$)&($P_{15}$)\\

  \hline
    \\
  \\
  
   \hline

&$\bohliSSAc$& $\bohliSSUPc $ & $\bohliSWUPc$ & $\bohliSPSc $ & $\bohliSSUUc $ & $\bohliSWUUc$ & $\bohliSANc $ & $\bohliSWUc$ & $\bohliSWAc$ & 
$\anoaUL$ &$\newSAUO $ &$\newSA$ & 
  $\bohliSSUP$ & $\bohliSWUP$ & $\bohliSPS$ & $\bohliSSUU$ & $\bohliSWUU$ \\
  \hline
  
  $\heviaUO$&$\Rightarrow$&$\Rightarrow$&$\Rightarrow$&$\Rightarrow$&$\Rightarrow$&$\Rightarrow$&$\Rightarrow$&$\Rightarrow$&$\Rightarrow$&
$\Rightarrow$  &$\Rightarrow$&$\Rightarrow$&
 $\Rightarrow$ &$\Rightarrow$&$\Rightarrow$&$\Rightarrow$&$\Rightarrow$\\
  
$ \bohliSA$&$\Rightarrow$&$\Rightarrow$&$\Rightarrow$&$\Rightarrow$&$\Rightarrow$&$\Rightarrow$&$\Rightarrow$&$\Rightarrow$&$\Rightarrow$&
$\Rightarrow$  &$\Rightarrow$&$\Rightarrow$&
$\Rightarrow$  &$\Rightarrow$&$\Rightarrow$&$\Rightarrow$&$\Rightarrow$\\
  
$\heviaUL$&($P_2$)&($P_A'^*$)&($P_A'^*$)&($P_A'^*$)&($P_A'^*$)&($P_A'^*$)&($P_A'^*$)&$\Rightarrow$&$\Rightarrow$&
  ($P_4'$) &$\Rightarrow$&$\Rightarrow$&
  ($P_2$)&($P_2$)&($P_2$)&($P_2$)&($P_2$)\\
  
$\newCONFWOL$&($P_3$)&($P_3$)&($P_3$)&($P_3$)&($P_3$)&($P_3$)&($P_3$)&($P_3$)&($P_3$)&
  ($P_4'$)&$\Rightarrow$&$\Rightarrow$&
  ($P_3$)&($P_3$)&($P_3$)&($P_3$)&($P_3$)\\  
  
  $\newCONF$&($P_3$)&($P_3$)&($P_3$)&($P_3$)&($P_3$)&($P_3$)&($P_3$)&($P_3$)&($P_3$)&
  ($P_4'$)&$P_{17}'$&$P_{18}'$&
  ($P_3$)&($P_3$)&($P_3$)&($P_3$)&($P_3$)\\  

$\anoaREL$&($P_{16}$)&($P_{16}$)&($P_{16}$)&($P_{16}$)&($P_{16}$)&($P_{16}$)&($P_{16}$)&($P_{16}$)&($P_{16}$)&
   ($P_4'$)&($P_{17}'$)&$P_{18}'$)&
  ($P_{16}$)&($P_{16}$)&($P_{16}$)&($P_{16}$)&($P_{16}$)\\

 $\loopixSRTPU$&($P_8'$)&($P_8'$)&($P_8'$)&($P_8'$)&($P_8'$)&($P_8'$)&($P_8'$)&($P_8'$)&($P_8'$)&
   ($P_4'$)&($P_{19}$)&($P_{20}$)&
  ($P_{8}$)&($P_{8}$)&($P_{8}$)&($P_{8}$)&($P_{8}$)\\

 \hline
 
$ \bohliRSUP$&$\Rightarrow$&$\Rightarrow$&$\Rightarrow$&$\Rightarrow$&$\Rightarrow$&$\Rightarrow$&$\Rightarrow$&$\Rightarrow$&$\Rightarrow$&
  $\Rightarrow$&$\Rightarrow$&$\Rightarrow$&
  ($P_A$)&($P_A$)&($P_A$)&($P_A$)&($P_A$)\\

 $\bohliRWUP $&$\Rightarrow$&$\Rightarrow$&$\Rightarrow$&$\Rightarrow$&$\Rightarrow$&$\Rightarrow$&$\Rightarrow$&$\Rightarrow$&$\Rightarrow$&
$\Rightarrow$  &$\Rightarrow$&$\Rightarrow$&
  ($P_A$)&($P_A$)&($P_A$)&($P_A$)&($P_A$)\\

 $\bohliRPS$&$\Rightarrow$&$\Rightarrow$&$\Rightarrow$&$\Rightarrow$&$\Rightarrow$&$\Rightarrow$&$\Rightarrow$&$\Rightarrow$&$\Rightarrow$&
$\Rightarrow$  &$\Rightarrow$&$\Rightarrow$&
  ($P_A$)&($P_A$)&($P_A$)&($P_A$)&($P_A$)\\
  
 $\bohliRSUU $&$\Rightarrow$&$\Rightarrow$&$\Rightarrow$&$\Rightarrow$&$\Rightarrow$&$\Rightarrow$&$\Rightarrow$&$\Rightarrow$&$\Rightarrow$&
$\Rightarrow$&$\Rightarrow$&$\Rightarrow$&
  ($P_A$)&($P_A$)&($P_A$)&($P_A$)&($P_A$)\\
  
 $\bohliRWUU $&$\Rightarrow$&$\Rightarrow$&$\Rightarrow$&$\Rightarrow$&$\Rightarrow$&$\Rightarrow$&$\Rightarrow$&$\Rightarrow$&$\Rightarrow$&
  $\Rightarrow$&$\Rightarrow$&$\Rightarrow$&
  ($P_A$)&($P_A$)&($P_A$)&($P_A$)&($P_A$)\\
  
 $\bohliRAN $&$\Rightarrow$&$\Rightarrow$&$\Rightarrow$&$\Rightarrow$&$\Rightarrow$&$\Rightarrow$&$\Rightarrow$&$\Rightarrow$&$\Rightarrow$&
  $\Rightarrow$&$\Rightarrow$&$\Rightarrow$&
  ($P_A$)&($P_A$)&($P_A$)&($P_A$)&($P_A$)\\
  
 $\bohliRWU$&$\Rightarrow$&$\Rightarrow$&$\Rightarrow$&$\Rightarrow$&$\Rightarrow$&$\Rightarrow$&$\Rightarrow$&$\Rightarrow$&$\Rightarrow$&
  $\Rightarrow$&$\Rightarrow$&$\Rightarrow$&
  ($P_A$)&($P_A$)&($P_A$)&($P_A$)&($P_A$)\\
  
 $\bohliRWA $&$\Rightarrow$&$\Rightarrow$&$\Rightarrow$&$\Rightarrow$&$\Rightarrow$&$\Rightarrow$&$\Rightarrow$&$\Rightarrow$&$\Rightarrow$&
  $\Rightarrow$&$\Rightarrow$&$\Rightarrow$&
  ($P_A$)&($P_A$)&($P_A$)&($P_A$)&($P_A$)\\
  
  $\newAnoaSAWOL$&$\Rightarrow$&$\Rightarrow$&$\Rightarrow$&$\Rightarrow$&$\Rightarrow$&$\Rightarrow$&$\Rightarrow$&$\Rightarrow$&$\Rightarrow$&
  $\Rightarrow$&$\Rightarrow$&$\Rightarrow$&
  $(P_8)$&$(P_8)$&$(P_8)$&$(P_8)$&$(P_8)$\\
 
 $\newAnoaSA$&$\Rightarrow$&$\Rightarrow$&$\Rightarrow$&$\Rightarrow$&$\Rightarrow$&$\Rightarrow$&$\Rightarrow$&$\Rightarrow$&$\Rightarrow$&
  $\Rightarrow$&$\Rightarrow$&$\Rightarrow$&
  $(P_8)$&$(P_8)$&$(P_8)$&$(P_8)$&$(P_8)$\\

 $\bohliSSAc$&=&$\Rightarrow$&$\Rightarrow$&$\Rightarrow$&$\Rightarrow$&$\Rightarrow$&$\Rightarrow$&$\Rightarrow$&$\Rightarrow$&
  $\Rightarrow$&$\Rightarrow$&$\Rightarrow$&
  $(P_8)$&$(P_8)$&$(P_8)$&$(P_8)$&$(P_8)$\\
  
 $\bohliSSUPc $&$P_B$&=&$\Rightarrow$&$\Rightarrow$&$\Rightarrow$&$\Rightarrow$&$\Rightarrow$&$\Rightarrow$&$\Rightarrow$&
   ($P_4'$)&$P_{19}$&$\Rightarrow$&
  $(P_8)$&$(P_8)$&$(P_8)$&$(P_8)$&$(P_8)$\\
  
 $\bohliSWUPc$&($P_B$)&$P_B$&=&$\Rightarrow$&&$\Rightarrow$&$\Rightarrow$&$\Rightarrow$&$\Rightarrow$&
   ($P_4'$)&($P_{19}$)&$\Rightarrow$&
  $(P_8)$&$(P_8)$&$(P_8)$&$(P_8)$&$(P_8)$\\
  
 $\bohliSPSc $&($P_B$)&$P_B$&$P_B$&=&$P_B$&$P_B$&$\Rightarrow$&$P_B$&$\Rightarrow$&
   ($P_4'$)&($P_{19}$)&$P_{20}$&
  $(P_8)$&$(P_8)$&$(P_8)$&$(P_8)$&$(P_8)$\\
  
 $\bohliSSUUc $&($P_B$)&$P_B$&$P_B$&$P_B$&=&$\Rightarrow$&$\Rightarrow$&$\Rightarrow$&$\Rightarrow$&
  ($P_4'$) &($P_{19}$)&$\Rightarrow$&
  $(P_8)$&$(P_8)$&$(P_8)$&$(P_8)$&$(P_8)$\\
  
 $\bohliSWUUc$&($P_B$)&$P_B$&$P_B$&$P_B$&$P_B$&=&$\Rightarrow$&$\Rightarrow$&$\Rightarrow$&
   ($P_4'$)&($P_{19}$)&$\Rightarrow$&
  $(P_8)$&$(P_8)$&$(P_8)$&$(P_8)$&$(P_8)$\\
  
 $\bohliSANc $&($P_B$)&$P_B$&$P_B$&$P_B$&$P_B$&$P_B$&=&$P_B$&$\Rightarrow$&
  ($P_4'$) &($P_{19}$)&($P_{20}$)&
  $(P_8)$&$(P_8)$&$(P_8)$&$(P_8)$&$(P_8)$\\
  
 $\bohliSWUc$&($P_B$)&$P_B$&$P_B$&$P_B$&$P_B$&$P_B$&$P_B$&=&$\Rightarrow$&
   ($P_4'$)&($P_{19}$)&$\Rightarrow$&
$(P_8)$&$(P_8)$&$(P_8)$&$(P_8)$&$(P_8)$\\
  
 $\bohliSWAc$&($P_B$)&$P_B$&$P_B$&$P_B$&$P_B$&$P_B$&$P_B$&$P_B$&=&
  ($P_4'$) &($P_{19}$)&($P_{20}$)&
  $(P_8)$&$(P_8)$&$(P_8)$&$(P_8)$&$(P_8)$\\

$\anoaUL$&($P_{11}$)&($P_{11}$)&($P_{11}$)&($P_{11}$)&($P_{11}$)&($P_{11}$)&($P_{11}$)&($P_{11}$)&($P_{11}$)&
  =&$P_{21}$&$P_{22}$&
  ($P_{11}$)&($P_{11}$)&($P_{11}$)&($P_{11}$)&($P_{11}$)\\
  
$\newSAUO $&($P_{13}$)&($P_{13}$)&($P_{13}$)&($P_{13}$)&($P_{13}$)&($P_{13}$)&($P_{13}$)&($P_{13}$)&($P_{13}$)&
   ($P_4'$)&=&$P_{23}$&
 ($P_{13}$) &($P_{13}$)&($P_{13}$)&($P_{13}$)&($P_{13}$)\\
  
$\newSA$&($P_{15}$)&($P_{15}$)&($P_{15}$)&($P_{15}$)&($P_{15}$)&($P_{15}$)&($P_{15}$)&($P_{15}$)&($P_{15}$)&
  ($P_4'$)&($P_{19}$)&=&
  ($P_{15}$)&($P_{15}$)&($P_{15}$)&($P_{15}$)&($P_{15}$)\\

  \hline
    \\
  \\
  
   \hline

  &   $\bohliSAN$ & $\bohliSWU $ & $\bohliSWA$ & 
 $\anoaRAWOL $& $\anoaRA $ & 
 $\bohliRSAc$ & $\bohliRSUPc$ & $\bohliRWUPc$ & $\bohliRPSc$ & $\bohliRSUUc$ & $\bohliRWUUc$ & $\bohliRANc $ & $\bohliRWUc$ & $\bohliRWAc$ & 
 
 $\newAnoaUL$ & $\newRAUO $ & $\newRA $ \\ \hline
  
  $\heviaUO$ &$\Rightarrow$&$\Rightarrow$&$\Rightarrow$&$\Rightarrow$&
  $\Rightarrow$&
  $\Rightarrow$&$\Rightarrow$&$\Rightarrow$&$\Rightarrow$&$\Rightarrow$&$\Rightarrow$&$\Rightarrow$&$\Rightarrow$&$\Rightarrow$&
$\Rightarrow$  &$\Rightarrow$&$\Rightarrow$\\
  
$ \bohliSA$ &$\Rightarrow$&$\Rightarrow$&$\Rightarrow$&$\Rightarrow$&
 $\Rightarrow$ &
 $\Rightarrow$ &$\Rightarrow$&$\Rightarrow$&$\Rightarrow$&$\Rightarrow$&$\Rightarrow$&$\Rightarrow$&$\Rightarrow$&$\Rightarrow$&
$\Rightarrow$  &$\Rightarrow$&$\Rightarrow$\\
  
$\heviaUL$ &($P_2$)&($P_2$)&($P_2$)&
 ($P_2$) & ($P_2$) &
  ($P_2$)&($P_A^*$)&($P_A^*$)&($P_A^*$)&($P_A^*$)&($P_A^*$)&($P_A^*$)&$\Rightarrow$&$\Rightarrow$&
  ($P_4$)&$\Rightarrow$&$\Rightarrow$\\
  
$\newCONFWOL$ &($P_3$)&($P_3$)&($P_3$)&
  ($P_3$)& ($P_3$)&
  ($P_3$)&($P_3$)&($P_3$)&($P_3$)&($P_3$)&($P_3$)&($P_3$)&($P_3$)&($P_3$)&
  ($P_4$)&$\Rightarrow$&$\Rightarrow$\\
  
  $\newCONF$ &($P_3$)&($P_3$)&($P_3$)&
  ($P_3$)& ($P_3$)&
  ($P_3$)&($P_3$)&($P_3$)&($P_3$)&($P_3$)&($P_3$)&($P_3$)&($P_3$)&($P_3$)&
  ($P_4$)&($P_{17}$)&($P_{18}$)\\
  
$\anoaREL$ &($P_{16}$)&($P_{16}$)&($P_{16}$)&
  ($P_{16}$)&($P_{16}$)&
  ($P_{16}$)&($P_{16}$)&($P_{16}$)&($P_{16}$)&($P_{16}$)&($P_{16}$)&($P_{16}$)&($P_{16}$)&($P_{16}$)&
  ($P_4$)&($P_{17}$)&($P_{18}$)\\
  
 $\loopixSRTPU$ &($P_{8}$)&($P_{8}$)&($P_{8}$)&
  ($P_{8}$)&($P_{8}$)&
  ($P_{8}$)&($P_{8}$)&($P_{8}$)&($P_{8}$)&($P_{8}$)&($P_{8}$)&($P_{8}$)&($P_{8}$)&($P_{8}$)&
  ($P_4$)&($P_{17}$)&($P_{18}$)\\
  
   \hline

$ \bohliRSUP$ &($P_A$)&($P_A$)&($P_A$)&
  ($P_A$)&($P_A$)&
  $P_A$&$\Rightarrow$&$\Rightarrow$&$\Rightarrow$&$\Rightarrow$&$\Rightarrow$&$\Rightarrow$&$\Rightarrow$&$\Rightarrow$&
  $P_4$&$\Rightarrow$&$\Rightarrow$\\
  
 $\bohliRWUP $ &($P_A$)&($P_A$)&($P_A$)&
  ($P_A$)&($P_A$)&
  $P_A$&$P_A$&$\Rightarrow$&$\Rightarrow$&$P_A$&$\Rightarrow$&$\Rightarrow$&$\Rightarrow$&$\Rightarrow$&
  ($P_4$)&$\Rightarrow$&$\Rightarrow$\\
  
 $\bohliRPS$ &($P_A$)&($P_A$)&($P_A$)&
 ($P_A$) &($P_A$)&
  $P_A$&$P_A$&$P_A$&$\Rightarrow$&$P_A$&$P_A$&$\Rightarrow$&$P_A$&$\Rightarrow$&
  ($P_4$)&$P_{24}$&$P_{9}$\\
  
 $\bohliRSUU $ &($P_A$)&($P_A$)&($P_A$)&
  ($P_A$)&($P_A$)&
 $P_A$ &$P_A$&$P_A$&$P_A$&$\Rightarrow$&$\Rightarrow$&$\Rightarrow$&$\Rightarrow$&$\Rightarrow$&
  ($P_4$)&$\Rightarrow$&$\Rightarrow$\\
  
 $\bohliRWUU $ &($P_A$)&($P_A$)&($P_A$)&
  ($P_A$)&($P_A$)&
  $P_A$&$P_A$&$P_A$&$P_A$&$P_A$&$\Rightarrow$&$\Rightarrow$&$\Rightarrow$&$\Rightarrow$&
  ($P_4$)&$\Rightarrow$&$\Rightarrow$\\
  
 $\bohliRAN $ &($P_A$)&($P_A$)&($P_A$)&
  ($P_A$)&($P_A$)&
  $P_A$&$P_A$&$P_A$&$P_A$&$P_A$&$P_A$&$\Rightarrow$&$P_A$&$\Rightarrow$&
  ($P_4$)&($P_{24}$)&($P_{9}$)\\
  
 $\bohliRWU$ &($P_A$)&($P_A$)&($P_A$)&
 ($P_A$) &($P_A$)&
 $P_A$ &$P_A$&$P_A$&$P_A$&$P_A$&$P_A$&$P_A^*$&$\Rightarrow$&$\Rightarrow$&
  ($P_4$)&$\Rightarrow$&$\Rightarrow$\\
  
 $\bohliRWA $ &($P_A$)&($P_A$)&($P_A$)&
  ($P_A$)&($P_A$)&
 $P_A$ &$P_A$&$P_A$&$P_A$&$P_A$&$P_A$&$P_A$&$P_A$&$\Rightarrow$&
  ($P_4$)&($P_{24}$)&($P_{9}$)\\
  
  $\newAnoaSAWOL$ &$(P_8)$&$(P_8)$&$(P_8)$&
  $(P_8)$&  $(P_8)$&
  $(P_8)$&$(P_8)$&$(P_8)$&$(P_8)$&$(P_8)$&$(P_8)$&$(P_8)$&$(P_8)$&$P_8$&
  ($P_4$)&$\Rightarrow$&$\Rightarrow$\\
 
 $\newAnoaSA$ &$(P_8)$&$(P_8)$&$(P_8)$&
  $(P_8)$&  $(P_8)$&
  $(P_8)$&$(P_8)$&$(P_8)$&$(P_8)$&$(P_8)$&$(P_8)$&$(P_8)$&$(P_8)$&$(P_8)$&
  ($P_4$)&$P_{17}$&$P_{18}$\\

 $\bohliSSAc$ &$(P_8)$&$(P_8)$&$(P_8)$&
  $(P_8)$&  $(P_8)$&
  $(P_8)$&$(P_8)$&$(P_8)$&$(P_8)$&$(P_8)$&$(P_8)$&$(P_8)$&$(P_8)$&$(P_8)$&
  ($P_4$)&($P_{17}$)&($P_{18}$)\\
  
 $\bohliSSUPc $ &$(P_8)$&$(P_8)$&$(P_8)$&
  $(P_8)$&  $(P_8)$&
  $(P_8)$&$(P_8)$&$(P_8)$&$(P_8)$&$(P_8)$&$(P_8)$&$(P_8)$&$(P_8)$&$(P_8)$&
  ($P_4$)&($P_{17}$)&($P_{18}$)\\
  
 $\bohliSWUPc$ &$(P_8)$&$(P_8)$&$(P_8)$&
  $(P_8)$&  $(P_8)$&
 $(P_8)$ &$(P_8)$&$(P_8)$&$(P_8)$&$(P_8)$&$(P_8)$&$(P_8)$&$(P_8)$&$(P_8)$&
  ($P_4$)&($P_{17}$)&($P_{18}$)\\
  
 $\bohliSPSc $ &$(P_8)$&$(P_8)$&$(P_8)$&
  $(P_8)$&  $(P_8)$&
  $(P_8)$&$(P_8)$&$(P_8)$&$(P_8)$&$(P_8)$&$(P_8)$&$(P_8)$&$(P_8)$&$(P_8)$&
  ($P_4$)&($P_{17}$)&($P_{18}$)\\
  
 $\bohliSSUUc $ &$(P_8)$&$(P_8)$&$(P_8)$&
 $(P_8)$ &  $(P_8)$&
 $(P_8)$ &$(P_8)$&$(P_8)$&$(P_8)$&$(P_8)$&$(P_8)$&$(P_8)$&$(P_8)$&$(P_8)$&
  ($P_4$)&($P_{17}$)&($P_{18}$)\\
  
 $\bohliSWUUc$ &$(P_8)$&$(P_8)$&$(P_8)$&
  $(P_8)$&  $(P_8)$&
  $(P_8)$&$(P_8)$&$(P_8)$&$(P_8)$&$(P_8)$&$(P_8)$&$(P_8)$&$(P_8)$&$(P_8)$&
  ($P_4$)&($P_{17}$)&($P_{18}$)\\
  
 $\bohliSANc $ &$(P_8)$&$(P_8)$&$(P_8)$&
  $(P_8)$&  $(P_8)$&
  $(P_8)$&$(P_8)$&$(P_8)$&$(P_8)$&$(P_8)$&$(P_8)$&$(P_8)$&$(P_8)$&$(P_8)$&
  ($P_4$)&($P_{17}$)&($P_{18}$)\\
  
 $\bohliSWUc$ &$(P_8)$&$(P_8)$&$(P_8)$&
  $(P_8)$&  $(P_8)$&
  $(P_8)$&$(P_8)$&$(P_8)$&$(P_8)$&$(P_8)$&$(P_8)$&$(P_8)$&$(P_8)$&$(P_8)$&
  ($P_4$)&($P_{17}$)&($P_{18}$)\\
  
 $\bohliSWAc$ &$(P_8)$&$(P_8)$&$(P_8)$&
  $(P_8)$&  $(P_8)$&
  $(P_8)$&$(P_8)$&$(P_8)$&$(P_8)$&$(P_8)$&$(P_8)$&$(P_8)$&$(P_8)$&$(P_8)$&
  ($P_4$)&($P_{17}$)&($P_{18}$)\\

$\anoaUL$ &($P_{11}$)&($P_{11}$)&($P_{11}$)&
  ($P_{11}$)&($P_{11}$)&
  ($P_{11}$)&($P_{11}$)&($P_{11}$)&($P_{11}$)&($P_{11}$)&($P_{11}$)&($P_{11}$)&($P_{11}$)&($P_{11}$)&
  ($P_4$)&($P_{17}$)&($P_{18}$)\\
  
$\newSAUO $ &($P_{13}$)&($P_{13}$)&($P_{13}$)&
  ($P_{13}$)&  ($P_{13}$)&
  ($P_{13}$)&($P_{13}$)&($P_{13}$)&($P_{13}$)&($P_{13}$)&($P_{13}$)&($P_{13}$)&($P_{13}$)&($P_{13}$)&
  ($P_4$)&($P_{17}$)&($P_{18}$)\\
  
$\newSA$ &($P_{15}$)&($P_{15}$)&($P_{15}$)&
($P_{15}$)  &($P_{15}$)  &
 ($P_{15}$) &($P_{15}$)&($P_{15}$)&($P_{15}$)&($P_{15}$)&($P_{15}$)&($P_{15}$)&($P_{15}$)&($P_{15}$)&
  ($P_4$)&($P_{17}$)&($P_{18}$)\\

\end{tabular}}
 \caption{Completeness; proofs for all relations between the notions}
  \label{tab:newRel}
\end{table*}

\inlineheading{Numbered Proofs} The proofs identified in Table \ref{tab:newRel} follow. 
\begin{lemma}{$P_4$.} \label{lemmaTable1}
$(c,\epsilon,\delta)-\bohliRSUP \centernot \implies(c^*,\epsilon^*,\delta^*) - \newAnoaUL$ for any $\epsilon^* \geq 0, \delta^* < 1,  c^*\geq 2$  and any $\epsilon \geq 0, \delta < 1 , c\geq 1$.
\end{lemma}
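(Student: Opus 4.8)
The plan is to instantiate the construction scheme of Theorem~\ref{the:noImpl}. I would start from a protocol $\Pi$ whose observable behaviour is a fixed constant independent of its input; such a $\Pi$ achieves $\heviaUO$ and hence every notion of the hierarchy --- in particular $(c,0,0)-\bohliRSUP$ for every $c\ge 1$. I then define the counterexample protocol $\Pi'$ to run $\Pi$ and, additionally, to reveal the single number $I$ equal to the number of distinct receivers occurring in the communications it has processed, i.e.\ the value $|U'|$. Since, by definition, $\bohliRSUP = \sgame{\bohliRSUPc}$ only constrains a valid challenge by requiring that all communications be real and that both scenarios share the same number of distinct active receivers, the quantity $I$ is precisely the information that $\bohliRSUP$ tolerates leaking.

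The first step of the proof is to check that $\Pi'$ still achieves $(c,\epsilon,\delta)-\bohliRSUP$ for every $c\ge 1$, $\epsilon\ge 0$ and $\delta<1$. In any valid $\bohliRSUP$-challenge the value $|U'|$ is equal in the two scenarios, so the extra output $I$ is identical there, and the remaining output of $\Pi'$ is the constant output of $\Pi$; hence the adversary's view is independent of the challenge bit $b$, $\Pi'$ has advantage $0$, and so it is $(c,0,0)-\bohliRSUP$ and therefore $(c,\epsilon,\delta)-\bohliRSUP$ for all $\epsilon\ge 0$, $\delta\ge 0$.

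The second step is to exhibit a perfect distinguisher against $\newAnoaUL$, which by definition requires only that all communications be real, that the two scenarios differ solely in receivers ($\EveryButRec$), and that they follow the stage pattern $T_R$. The adversary fixes a sender $u$, a message $m$ and two distinct receivers $v_0\ne v_1$, and plays a single challenge: a stage-$1$ batch with one challenge row $(u,v_a,m,aux)$ (identical in both scenarios, for instance bit $a$), and a stage-$2$ batch with one challenge row equal to $(u,v_a,m,aux)$ when $b=0$ and to $(u,v_{1-a},m,aux)$ when $b=1$. This matches $T_R$, differs only in receivers, uses no empty communication, and spends exactly two challenge rows, so it is admissible whenever $c^*\ge 2$; crucially, $\newAnoaUL$ imposes no equality requirement on $|U'|$. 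The total number of distinct receivers is then $1$ for $b=0$ and $2$ for $b=1$, regardless of the hidden $a$, and the adversary outputs $g=0$ iff $I=1$. This yields $\Pr[0\mid b{=}0]=1$ and $\Pr[0\mid b{=}1]=0$, so the defining inequality $1\le e^{\epsilon^*}\cdot 0+\delta^*$ would have to hold, which fails for every $\epsilon^*\ge 0$ and every $\delta^*<1$; hence $\Pi'$ does not achieve $(c^*,\epsilon^*,\delta^*)-\newAnoaUL$.

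The step that I expect to require the most care is the validity argument for the distinguishing attack: one must confirm that $\newAnoaUL$ genuinely carries no constraint equalising the number of active receivers, so that the two worlds of the attack may legitimately differ in $|U'|$ --- this is exactly the gap, since $\bohliRSUP$ would forbid that difference while still permitting the leak of $|U'|$. A further small point is to note that the instance mechanism of $T_R$ does not rescue $\newAnoaUL$ here: the observed value of $|U'|$ is $1$ for $b=0$ and $2$ for $b=1$ independently of which instance $a$ the challenger drew, so the distinguisher's success does not depend on $a$.
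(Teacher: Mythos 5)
Your overall strategy --- augment a protocol that already achieves the stronger notion with an explicit leak of exactly the information that notion tolerates, then exploit that leak to distinguish a valid challenge of the weaker notion --- is precisely the paper's strategy for $P_4$, and starting from the constant protocol (which achieves $\heviaUO$ and hence $\bohliRSUP$) is a legitimate, even more concrete, instantiation. However, there is a genuine gap in how you define the leaked quantity $I$, and it breaks one of your two steps no matter how the ambiguity is resolved. In the paper the property $|U'|$ is defined \emph{per batch}: $U'_b$ is the set of receivers of the \emph{current} batch only, so a valid $\bohliRSUP$ challenge guarantees $|U'_0|=|U'_1|$ batch by batch and nothing more. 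Your $I$ is the number of distinct receivers ``occurring in the communications it has processed'', i.e.\ a cumulative count across batches. That quantity is \emph{not} equalised by $\bohliRSUP$: submit batch~1 with receiver $v_0$ in both scenarios and batch~2 with receiver $v_0$ in scenario~0 but $v_1$ in scenario~1; each batch has exactly one distinct receiver in each scenario, so the challenge is valid for $\bohliRSUP$, yet the cumulative counts are $1$ versus $2$ and your $\Pi'$ is perfectly distinguishable --- step one fails. If you instead take $I$ to be the per-batch count (which \emph{is} safe for $\bohliRSUP$), then step two fails: your attack places the two challenge rows in separate stage-$1$ and stage-$2$ batches, each containing a single communication, so the per-batch count equals $1$ after both batches regardless of $b$ and nothing is distinguished.

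The repair is exactly what the paper does: let $\Pi'$ output the receiver count of each batch, and put both challenge rows of the $\newAnoaUL$ attack into a \emph{single} batch with the stage switch embedded between them. Then the one number emitted for that batch is $1$ when $b=0$ and $2$ when $b=1$ (independently of the instance bit $a$, as you correctly observe), while remaining guaranteed equal across the two scenarios of every valid $\bohliRSUP$ challenge. The rest of your argument --- the validity of the $T_R$/$\EveryButRec$ challenge, the need for $c^*\geq 2$, and the failure of the defining inequality for all $\epsilon^*\geq 0$, $\delta^*<1$ --- is sound once this adjustment is made.
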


\begin{proof} 
Given a protocol $\Pi$, that achieves $(c,\epsilon,\delta)-\bohliRSUP$. Let protocol $\Pi'$ be the protocol, that behaves like $\Pi$ and additionally publishes the current number of receivers $|U'|$ after every batch. Since in $\bohliRSUP$ the number of receivers always needs to  be identical in both scenarios, outputting it will not lead to new information for the adversary. So, $\Pi'$ achieves $(c,\epsilon,\delta)-\bohliRSUP$. 

Fix $\epsilon^* \geq 0, \delta^* < 1, c^*\geq 2$ arbitrarily. Let $u'_0,u'_1$ be valid receivers and $m_0$ a valid message. $(c^*,\epsilon^*,\delta^*)- \newAnoaUL$ of $\Pi'$ can be broken by the following attack: An adversary $\mathcal{A}$ inputs a batch query with two valid challenge rows with differing receivers\footnote{For a simplified representation we only present the parts of the communication that differ in both scenarios and spare the senders of the communications.}: $((u'_0,m_0),\text{SwitchStageQuery},(u'_0, m_0))$ as instance 0 of the first scenario, $((u'_1,m_0),\text{SwitchStageQuery},(u'_1, m_0))$ as instance 1 of the first scenario  and $((u'_0,m_0),\text{SwitchStageQuery},(u'_1, m_0))$ as instance 0 and  $((u'_1,m_0),\text{SwitchStageQuery},(u'_0, m_0))$ as instance 1 of the second scenario. If $\Pi'$ outputs the number of receivers as being 1, both messages have been received by the same user and $\mathcal{A}$ outputs 0. Otherwise the number of receivers is 2 and the messages have been received by different users. It outputs 1 in this case and wins the game with certainty. Hence,  $\text{Pr}[0= \langle \mathcal{A} \bigm| Ch(\Pi', \newAnoaUL,c^*,0)\rangle ] =1$ and $\text{Pr}[0= \langle \mathcal{A} \bigm| Ch(\Pi', \newAnoaUL, c^*,1)\rangle]=0$ and thus 
$\text{Pr}[0= \langle \mathcal{A} \bigm| Ch(\Pi', \newAnoaUL,c^*,0)\rangle ] >e^{\epsilon^*} \cdot \text{Pr}[0= \langle \mathcal{A} \bigm| Ch(\Pi', \newAnoaUL, c^*,1)\rangle]+ \delta^*$ (since $1 > e^{\epsilon^*} \cdot 0+ \delta^*$).\footnote{Notice, that any protocol would achieve $(c^*,\epsilon^*,\delta^*) - \newAnoaUL$ for $c^*= 1$ since no complete challenge is possible there.}
\end{proof}

Tables \ref{Construction Idea}  summarizes the ideas of proofs that work analogously to Lemma \ref{lemmaTable1}.  $I=(S,m)$ means the sender(S)-message(m) pairs are published, i.e. it is leaked who sent which message. 1.(S,m) means that the first sender-message pair is revealed. |m| without brackets means the set of all message lengths is published; |U|  the number of senders. The other abbreviations are used analogously. 
The attack is shortened to the format $\langle$(communications of instance 0 scenario 0),(communications of instance 1 of scenario 0)$\rangle$,$\langle$(communications of instance 0 scenario 1),(communications of instance 1 of scenario 1)$\rangle$ (if both instances of the scenario are equal, we shorten to:$\langle$(communications of instance 0 scenario 0)$\rangle$,$\langle$(communications of instance 0 scenario 1)$\rangle$ ) and all not mentioned elements  are equal in both scenarios. 
$m_0,m_1, m_2, m_3$ are messages with $|m_0|<|m_1|$, $|m_2|=|m_3|$ and $m_0 \neq m_1 \neq m_2\neq m_3$; $u_0,u_1,u_2$ senders and $u'_0,u'_1,u'_2$ receivers.

\begin{table} [thb]
  \center
\resizebox{0.48\textwidth}{!}{
  \begin{tabular}{c p{2.6cm} p{2.6cm} p{1cm} p{5.7cm} }

$P_n$ & $X_1$&$X_2$&$I$ & attack\\ \hline
$P_4$&$\bohliRSUP$&$ \newAnoaUL$& $|U'|$& $\langle((u'_0,m_0), \text{switchStage},(u'_0,m_0)),$ $((u'_1,m_0), \text{switchStage},(u'_1,m_0))\rangle,$ $\langle((u'_0,m_0), \text{switchStage},(u'_1,m_0)),$ $((u'_1,m_0), \text{switchStage},(u'_0,m_0))\rangle$\\
\hline
$P_6$&$\bohliRPS$&$\newCONF$& $m$&$((m_2)),$ $((m_3))$\\ 
$P_{24}$&$\bohliRPS$ &$ \newRAUO$&$|U'|, m$& $\langle((u'_0,m_0), (u'_0,m_2)),$ $((u'_0,m_0),(u'_1,m_3))\rangle,$ $\langle((u'_0,m_0), (u'_0,m_3)),$ $((u'_0,m_0),(u'_1,m_2))\rangle$\\ 
$P_{9}$&$\bohliRPS$&$\newRA$&$P'$&$\langle((u'_0,m_2), (u'_0,m_0),(u'_1,m_1)),$ $((u'_0,m_2), (u'_1,m_1),(u'_0,m_0))\rangle,$ $\langle ((u'_0,m_2),(u'_0,m_1),(u'_1,m_0)),$ $ ((u'_0,m_2),(u'_1,m_0),(u'_0,m_1))\rangle$\\ 
\hline
$P_5$&$\newAnoaSAWOL$&$ \heviaUL $&$1. R$&$((u'_0), (u'_1)),$ $ ((u'_1),(u'_0))$\\ 
$P_8$&$\newAnoaSAWOL$&\mbox{$ \bohliRWAc$}& $1. R$& $((u'_0), (u'_1)),$ $ ((u'_1),(u'_0))$ \\ 
$P_{17}$&$\newAnoaSA$&$ \newRAUO$&$|U'|, |m|$& $\langle((u'_0,m_2), (u'_0,m_0)),$ $ ((u'_0,m_2), (u'_1,m_1))\rangle$ $\langle ((u'_0,m_2),(u'_0,m_1)),$ $ ((u'_0,m_2),(u'_1,m_0))\rangle$\\
$P_{18}$&$\newAnoaSA$&$ \newRA$&$(R, |m|)$& $\langle((u'_0,m_0),(u'_1,m_1)),$ $((u'_1,m_1),(u'_0,m_0))\rangle$ $\langle ((u'_0,m_1),(u'_1,m_0)),$ $((u'_1,m_0),(u'_0,m_1))\rangle$\\
\hline
$P_7$& $\bohliSSUPc$& $\anoaREL$& $|U'|, |U|$&$\langle((u_0,u'_0), (u_0,u'_0)),$ $((u_0,u'_0),(u_1,u'_1))\rangle$ $\langle ((u_0,u'_0),(u_0,u'_1))$ $((u_0,u'_0),(u_1,u'_0))\rangle$ \\ 
$P_{19}$&  $ \bohliSSUPc$&$ \newSAUO$&$|U|, |m|$& $\langle((u_0,m_2), (u_0,m_0)),$ $((u_0,m_2),(u_1,m_1))\rangle$ $ \langle((u_0,m_2),(u_0,m_1)),$ $((u_0,m_2),(u_1,m_0))\rangle$\\
  \hline
$P_{20}$&  $\bohliSPSc$&$\newSA$& P& $\langle((u_0,m_2),(u_0,m_0),(u_1,m_1)),$ $((u_0,m_2),(u_1,m_1),(u_0,m_0))\rangle$, $\langle((u_0,m_2),(u_0,m_1),(u_1,m_0)),$ $((u_0,m_2),(u_1,m_0),(u_0,m_1))\rangle$\\
 \hline
$P_n'$&Receiver \mbox{notions}& analogous& analogous &\\ \hline
$P_1$&$\bohliSA$&$\heviaUO$&\nothing &$((u_0)),$ $(\Diamond)$\\ 
\hline
$P_2$&$\heviaUL$&$\anoaREL$& $Q,Q'$& $\langle((u_0,u'_0)), $ $( (u_1,u'_1))\rangle$ $\langle((u_0,u'_1)), $ $((u_1,u'_0))\rangle$\\
\hline
$P_3$&$\newCONFWOL$&$\loopixSRTPU$&(S,R)&$\langle((u_0,u'_0), (u_1,u'_1)), $ $( (u_1,u'_1),(u_0,u'_0))\rangle$ $\langle((u_0,u'_1), (u_1,u'_0)), $ $((u_1,u'_0), (u_0,u'_1))\rangle$\\
\hline
$P_n'$&$\heviaUL$/ $\newCONF$&Receiver notions& analogous& \\
\end{tabular}}
 \caption{Counter example construction idea with $X'_1=X_1$}
  \label{Construction Idea}
\end{table}

\begin{lemma}{$P_{10}$.} \label{lemmaTable2}
$(c,\epsilon, \delta)-\anoaUL \centernot \implies (c^*,\epsilon^*,\delta^*)-\loopixSRTPU$  for any $\epsilon^* \geq 0, \delta^* < 1, c^*\geq 2$ and  for any $\epsilon \geq 0, \delta < 1, c \geq 2$.
\end{lemma}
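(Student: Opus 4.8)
Following the template of the proof of Theorem~\ref{the:noImpl} (and in close analogy to Lemma~\ref{lemmaTable1}, the proof of $P_4$), the plan is to exhibit a single protocol model that achieves $\anoaUL$ but fails $\loopixSRTPU$. Let $\Pi'$ be the (degenerate, but perfectly valid) protocol model that, on any sequence of batches it is fed, outputs the sender and the receiver of the \emph{first} communication it is ever given, and nothing else. The idea is that this single leaked pair is provably useless against $\anoaUL$ but decisive against $\loopixSRTPU$.

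$\Pi'$ achieves $\anoaUL$. Recall $\anoaUL=$ \something $\land\EveryButSender\land T_S$ and that, for a $T_S$-notion, the game begins in stage~1. By $\EveryButSender$ every communication has the same receiver, message and auxiliary data in both scenarios, and the stage-2 clause of $T_S$ only becomes applicable after a stage-1 challenge row has been submitted (it refers to the sender $u^a_0$ of that row). Hence the very first communication ever input to the protocol is either a non-challenge communication or a stage-1 challenge row, and in both cases $T_S$ (resp.\ the definition of a non-challenge row) forces both its sender and its receiver to coincide in the two scenarios for each fixed instance $a$. Consequently, for fixed $a$ the entire view of any adversary playing against $\Pi'$ for $\anoaUL$ is a deterministic value that does not depend on the challenge bit $b$; since the instance bit $a$ is drawn uniformly and independently of $b$, we get $\Pr[0=\langle\mathcal A\mid Ch(\Pi',\anoaUL,c,0)\rangle]=\Pr[0=\langle\mathcal A\mid Ch(\Pi',\anoaUL,c,1)\rangle]$ for every PPT $\mathcal A$, so $\Pi'$ is $(c,0,0)-\anoaUL$ and therefore $(c,\epsilon,\delta)-\anoaUL$ for all $\epsilon\ge 0$, $0\le\delta<1$, $c\ge 2$.

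$\Pi'$ does not achieve $\loopixSRTPU$. Fix distinct senders $u_0\neq u_1$, distinct receivers $u'_0\neq u'_1$ and a message $m$, and let the adversary submit as its first batch query the four batches of the $M_{SR}$-pattern on the first two positions: for $b=0$ the first two communications send $u_0\to u'_0$ then $u_1\to u'_1$ in one instance and the same two swapped in the other, and for $b=1$ they send the crossed pairs $u_0\to u'_1$ then $u_1\to u'_0$ in one instance and swapped in the other (message $m$ throughout). This is a valid $\loopixSRTPU$-query, since $c^*\ge 2$, \something and $\EveryButSenderRec$ hold, and the four batches match $M_{SR}$ with $\mathsf{CR}=\{1,2\}$. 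Whatever the instance bit $a$, the pair leaked by $\Pi'$ for the first communication lies in $\{(u_0,u'_0),(u_1,u'_1)\}$ when $b=0$ and in $\{(u_0,u'_1),(u_1,u'_0)\}$ when $b=1$; the adversary, which knows $u_0,u_1,u'_0,u'_1$, outputs $0$ in the first case and $1$ in the second and thus guesses $b$ with probability $1$. Hence $\Pr[0=\langle\mathcal A\mid Ch(\Pi',\loopixSRTPU,c^*,0)\rangle]=1>e^{\epsilon^*}\cdot 0+\delta^*=\delta^*$ for every $\epsilon^*\ge 0$ and $\delta^*<1$, so $\Pi'$ is not $(c^*,\epsilon^*,\delta^*)-\loopixSRTPU$, and the two parts together establish the separation.

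The delicate point, which I would spell out carefully, is the $\anoaUL$ part: one has to rule out that the single leaked pair is ever that of a stage-2 challenge row, where the sender genuinely differs between the scenarios. This is exactly the role of the two modelling conventions from Section~\ref{sec:ComplexProperties} --- the game starts in stage~1, and the stage-2 clause of $T_S$ presupposes an earlier stage-1 challenge row --- so the first communication processed is always ``stage-1 safe''. Everything else (checking that the $M_{SR}$ query is well-formed, and the arithmetic using $\delta^*<1$) is routine, and, as for the other entries of Table~\ref{tab:newRel}, the construction leaves corrupt and session queries untouched, so the separation also holds between the corresponding notions carrying the same options.
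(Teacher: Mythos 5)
Your proposal is correct and takes essentially the same route as the paper: the leaked information is the first sender-receiver pair, and the distinguishing query against $\loopixSRTPU$ is exactly the paper's two-challenge-row $M_{SR}$ pattern, with the same $1 > e^{\epsilon^*}\cdot 0 + \delta^*$ conclusion. The only minor difference is that the paper constructs the counterexample by augmenting an arbitrary protocol achieving $\heviaUO$ with this leak and argues $\anoaUL$ by a reduction to that protocol's guarantee, whereas you instantiate the underlying protocol as the silent one and argue perfect indistinguishability of the single leaked pair directly via the stage-1 clause of $T_S$ (and the instance bit being independent of $b$), which is a self-contained but equivalent packaging.
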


\begin{proof} 
Given a protocol $\Pi$ with $(c,\epsilon, \delta)-\heviaUO$. Let $\Pi'$ behave like $\Pi$ and additionally publish the first sender-receiver-pair. 
Since $\Pi$ does not leak any information except how many messages are sent in total, $\Pi'$ does not leak any information except the first sender-receiver-pair and how many messages are sent in total. Hence, $\Pi'$ has $(c,\epsilon, \delta)-\anoaUL$, because who sends the second time is concealed (otherwise $(c,\epsilon, \delta)-\heviaUO$ of $\Pi$ could be broken based on this, which would be a contradiction to our assumption.).

Fix $\epsilon^* \geq 0, \delta^* < 1, c^*\geq 2$ and let $u_0,u_1$ be valid senders and $u'_0,u'_1$ valid receivers.
Then the following attack on  $(c^*,\epsilon^*,\delta^*)-\loopixSRTPU$ is possible:  
The adversary $\mathcal{A}$ creates a batch query containing only challenge rows: 
$((u_0,u'_0),(u_1,u'_1))$ as instance 0 of the first scenario, $ ((u_1,u'_1),(u_0,u'_0))$ as instance 1 of the first scenario and $((u_0,u'_1),(u_1,u'_0))$ as instance 0 of the second scenario, $ ((u_1,u'_0),(u_0,u'_1))$ as instance 1 of the second scenario.
If the published first sender-receiver pair is $ u_1,u'_1$ or $u_0,u'_0$ , $\mathcal{A}$ outputs 0. Otherwise it outputs 1. Obviously, the adversary wins the game with certainty with this strategy. Hence,  $\text{Pr}[0= \langle \mathcal{A} \bigm| Ch(\Pi', \loopixSRTPU,c^*,0)\rangle ] =1$ and $\text{Pr}[0= \langle \mathcal{A} \bigm| Ch(\Pi', \loopixSRTPU, c^*,1)\rangle]=0$ and thus 
$\text{Pr}[0= \langle \mathcal{A} \bigm| Ch(\Pi', \loopixSRTPU,c^*,0)\rangle ] >e^{\epsilon^*} \cdot \text{Pr}[0= \langle \mathcal{A} \bigm| Ch(\Pi', \loopixSRTPU, c^*,1)\rangle]+ \delta^*$ (since $1 > e^{\epsilon^*} \cdot 0+ \delta^*$)
\end{proof}

The proofs in Table \ref{Construction Idea2} are done analogously to Lemma \ref{lemmaTable2}. This time, analogously proved relations are added in angle brackets.

\begin{table} [thb]
  \center
\resizebox{0.48\textwidth}{!}{%
  \begin{tabular}{ c p{2.8cm} p{2.5cm} p{1.5cm} p{4cm} }
$P$ &$X_1$&$X_2$&$I$ & attack\\ \hline
$P_{10}$-$P_{15}$&$\anoaUL$ $ \langle \newSAUO$, $\newSA \rangle$&$\anoaREL$ $\langle \loopixSRTPU \rangle$&1. $(S,R)$&$\langle((u_0,u'_0)),((u_1,u'_1))\rangle$, $\langle((u_0,u'_1)),((u_1,u'_0))\rangle $ \\
$P_{21},P_{22}$&$\anoaUL$ &$\newSAUO$ $\langle \newSA \rangle$  &$ 1. (S,|m|)$&$\langle((u_0,m_0)), ((u_1,m_1))\rangle$, $\langle((u_0,m_1)), ((u_1,m_0))\rangle$\\
 \hline  
$P_{n'}$&Receiver notions& analogous\\ 
\end{tabular}}
 \caption{Counter example construction idea with $X'_1=\heviaUO$}
  \label{Construction Idea2}
\end{table}

\begin{lemma}{$P_B$.}\label{BohliLemma}
For $X_1,X_2$ $\in$ $\{\bohliSSAc, \bohliSSUPc$, $\bohliSWUPc$, $ \bohliSPSc$, $ \bohliSSUUc$, $\bohliSWUUc$, $ \bohliSANc$, $\bohliSWUc$, $\bohliSWAc \}$: If not $X_1 \implies X_2$ in our hierarchy, $(c,\epsilon, \delta)-X_1 \centernot \implies (c^*,\epsilon^*, \delta^*)-X_2$  for any $\epsilon^* \geq 0, \delta^* < 1, c^*\geq 1$ and for any $\epsilon \geq 0, \delta < 1, c \geq 1$.
\end{lemma}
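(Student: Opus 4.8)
I would prove Lemma~\ref{BohliLemma} by reusing the template of Lemmas~\ref{lemmaTable1} and~\ref{lemmaTable2}: take a maximally private protocol, bolt onto it a leakage channel that $X_1$ forces to be constant across the two scenarios but $X_2$ does not, and read off a perfect distinguisher against $X_2$. The first step is to record the shape common to the nine notions: each equals $\EveryButSender\land C_i$ (the requirement that something is sent being subsumed by $\EveryButSender$), where $C_i$ is a conjunction of properties drawn from $\{|U|,U,H,P,Q\}$ --- e.g.\ $C=\emptyset$ for $\bohliSSAc$, $C=\{Q\}$ for $\bohliSWUc$, $C=\{Q,P\}$ for $\bohliSWAc$. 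Because the $\EveryButSender$ part is shared and, on its own, constrains none of $U,Q,H,P,|U|$ (the sender of every communication is free), the implications of Theorem~\ref{the:impl} restricted to this family are exactly the reverse of the implication lattice $Q\Rightarrow U\Rightarrow|U|$, $P\Rightarrow H\Rightarrow|U|$, $Q\Rightarrow H$ acting on the $C_i$: one has $X_1\Rightarrow X_2$ iff every generator of $C_1$ is implied by the conjunction $C_2$. Hence ``$X_1\not\Rightarrow X_2$ in our hierarchy'' means precisely that some property $g\in C_1$ is not implied by $C_2$; in particular $\bohliSSAc$ never occurs as $X_1$ and $\bohliSWAc$ never as $X_2$.

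Given such a pair, I would fix an idealized protocol $\Pi_0$ whose output does not depend on its input, so that $\Pi_0$ achieves $(c,\epsilon,0)-\heviaUO$ for all parameters; by the parameter-preserving chain $\heviaUO\Rightarrow\bohliSA\Rightarrow\bohliSSAc\Rightarrow X_1$ from Theorem~\ref{the:impl}, $\Pi_0$ also achieves $(c,\epsilon,0)-X_1$. Let $\Pi'$ behave like $\Pi_0$ but additionally publish, after each batch, the tuple of current values of all properties occurring in $C_1$ (active-sender set, sender-frequency set, histogram, partition --- whichever of these are in $C_1$), computed on the communications processed so far. Every $X_1$-valid scenario pair satisfies $C_1$, so this tuple is literally the same for the two scenarios and hence independent of the challenge bit; since $\Pi_0$'s own output is information-free, $\Pi'$ still achieves $(c,\epsilon,0)-X_1$, and a fortiori $(c,\epsilon,\delta)-X_1$ for the given $c\ge1$, $\epsilon\ge0$, $\delta<1$. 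The concluding contradiction --- were $\Pi'$ to achieve $X_2$, the attack below could not exist --- is then identical to the one in the proof of Claim~\ref{claimGreen}.

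For the attack on $(c^*,\epsilon^*,\delta^*)-X_2$: since $g\in C_1$ is not implied by $C_2$, there is a scenario pair of non-empty communications agreeing positionwise in receiver and message (so $\EveryButSender$ holds), satisfying all of $C_2$, yet differing in $g$. These witnesses are small and explicit, using at most two challenge rows: for $g=P$, three communications to one receiver in which $u_0$ sends $\{m_0,m_1\}$ and $u_1$ sends $\{m_2\}$ versus $u_0$ sends $\{m_0,m_2\}$ and $u_1$ sends $\{m_1\}$ keeps $U,|U|,H,Q$ fixed while changing $P$; for $g=Q$, sending the ``extra'' message by $u_0$ rather than $u_1$ keeps $U,|U|,H$ fixed while changing $Q$; for $g=U$ (or $|U|$), two communications sent by $\{u_0,u_1\}$ versus $\{u_2,u_3\}$ (or by one versus two senders) changes $U$ (resp.\ $|U|$) while keeping the coarser properties fixed; for $g=H$, a four-communication instance with frequency profile $(2,2)$ versus $(1,3)$ keeps $U,|U|$ fixed while changing $H$. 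The adversary submits the relevant witness, reads the $g$-component off $\Pi'$'s output, and outputs $0$ iff it equals the value for the first scenario; it wins with probability $1$ if $b=0$ and $0$ if $b=1$, so $\Pi'$ is not $(c^*,\epsilon^*,\delta^*)-X_2$ for any $\epsilon^*\ge0$, $\delta^*<1$ (and any $c^*$ large enough to fit the witness). A single $\Pi'$ handles every $X_2$ with $X_1\not\Rightarrow X_2$ at once, because a witness differing in the un-implied generator also differs in the whole leaked tuple.

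The only genuinely non-routine ingredient is the completeness of the implication lattice: for each property $g$ and each conjunction $C_2$ that does not imply $g$, one must actually exhibit an $X_2$-valid witness differing in $g$ --- a finite case check, each case discharged by a three- or four-communication instance as above, in the same style as Table~\ref{Construction Idea}. Staying inside a single batch also avoids the multi-batch bookkeeping in the definitions of $P$ and $H$, whereas the ``$\Pi'$ achieves $X_1$'' direction still uses those definitions verbatim, since they are precisely what makes the leaked tuple coincide in the two scenarios. The one point requiring attention is the challenge-row budget: a witness in which $P$ differs while $Q$ is held fixed needs two challenge rows, which is consistent with the fact that a notion whose constraint already contains $Q$ admits no single-challenge-row attack at all; so $c^*\ge2$ always suffices, and $c^*=1$ suffices except for the $X_2$ whose definition already precludes one-row attacks.
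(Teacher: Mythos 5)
Your proposal is correct and takes essentially the same route as the paper's proof of $P_B$: identify a property that $X_1$ forces to be equal across the two scenarios but $X_2$ does not, augment an $X_1$-achieving protocol so that it publishes that property, and read off a perfect distinguisher against $X_2$. Your additions — instantiating a concrete information-free base protocol, spelling out the witness scenarios, and noting that the $Q$-constrained targets $\bohliSWUc$ and $\bohliSWAc$ admit no one-challenge-row attack (so $c^*\geq 2$ is genuinely needed there, a point the paper only concedes in a closing remark) — are refinements of the same argument rather than a different one.
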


\begin{proof} 
If not $X_1 \implies X_2$ in our hierarchy, then in $X_1$ the adversary is more restricted, i.e. it exists a property $Prop \in\{U,|U|, Q, H, P\}$, which has to be equal in both scenarios for $X_1$, but neither has to be equal nor is implied to be equal for $X_2$ (See Figure \ref{propertiesRestricted} to see which property can be used for which choice of $X_1$ and $X_2$).  

\begin{figure}
\center
\includegraphics[width=0.35\textwidth]{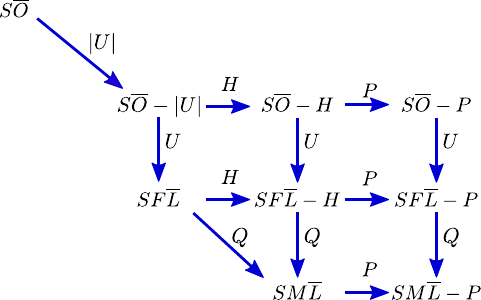}
\caption{The properties at the implication arrows are restricted for the weaker notion (and all notions those notion implies), but not for the stronger notion.} \label{propertiesRestricted}
\end{figure}

We now assume a protocol $\Pi$, that achieves $(c,\epsilon, \delta)-X_1$. Let $\Pi'$ be the protocol that additionally publishes $Prop$. $\Pi'$ still achieves $(c,\epsilon, \delta)-X_1$, since in all attacks on $X_1$ not more information than in $\Pi$ are given to the adversary (The adversary knows $Prop$ already since it is equal for both scenarios). However, since $Prop$ does not have to be equal in $X_2$, the adversary can pick the scenarios such that it can distinguish them in $\Pi'$ based on $Prop$ with certainty. Hence for an arbitrary $\epsilon^* \geq 0, \delta^* < 1, n^*=c^*\geq 1$,  $\text{Pr}[0= \langle \mathcal{A} \bigm| Ch(\Pi', X_2,c^*,0)\rangle ] =1$ and $\text{Pr}[0= \langle \mathcal{A} \bigm| Ch(\Pi', X_2, c^*,1)\rangle]=0$ and thus 
$\text{Pr}[0= \langle \mathcal{A} \bigm| Ch(\Pi', X_2,c^*,0)\rangle ] >e^{\epsilon^*} \cdot \text{Pr}[0= \langle \mathcal{A} \bigm| Ch(\Pi', X_2, c^*,1)\rangle]+ \delta^*$ (since $1 > e^{\epsilon^*} \cdot 0+ \delta^*$).
\end{proof}

\begin{lemma}{$P_A$.}
For $X_2$ $\in$ $\{\bohliRSAc, \bohliRSUPc$, $\bohliRWUPc$, $ \bohliRPSc$, $ \bohliRSUUc$, $\bohliRWUUc$, $ \bohliRANc$, $\bohliRWUc$, $\bohliRWAc \}$, with $X_1\centernot \implies X_2$:  $(c,\epsilon, \delta)-\bohliPrefixS{X_1} \centernot \implies (c^*,\epsilon^*, \delta^*)-X_2$  for any $\epsilon^* \geq 0, \delta^* < 1, c^*\geq 1$ and for any $\epsilon \geq 0, \delta < 1, c \geq 1$.
\end{lemma}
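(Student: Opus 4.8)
The plan is to run the argument of Lemma~\ref{BohliLemma} ($P_B$) almost verbatim, after noting that $\sgame{X_1}$ keeps every receiver-side restriction of $X_1$. For $X_1,X_2$ among the nine receiver notions $\{\bohliRSAc,\dots,\bohliRWAc\}$ the relevant sub-hierarchy is the exact mirror of the one treated in $P_B$: replace $\{U,|U|,Q,H,P\}$ by $\{U',|U'|,Q',H',P'\}$ and $\EveryButSender$ by $\EveryButRec$. Hence, whenever $X_1\centernot\implies X_2$, the mirror of Figure~\ref{propertiesRestricted} supplies a property $Prop'\in\{U',|U'|,Q',H',P'\}$ that the definition of $X_1$ forces to coincide in both scenarios, while $Prop'$ is neither a clause of $X_2$ nor implied to be equal by the remaining clauses of $X_2$. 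The crucial extra remark is that $\sgame{X_1}$, obtained from $X_1$ (Table~\ref{NotionsDefinition}) merely by deleting $\EveryButRec$, still contains the clause $Prop'$: $Prop'$ is a predicate on receivers (and, for $P'$, on messages), and deleting $\EveryButRec$ only frees the senders (and messages), so the receiver-side clause is untouched. Thus in every valid $\sgame{X_1}$-challenge the two scenarios agree on $Prop'$.

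Next I would build the separating protocol. Let $\Pi_0$ be a protocol achieving $\bohliSA$ (such protocols exist; they already underlie $P_1$); since $\bohliSA$ is defined by \something\ alone and $\sgame{X_1}$ only adds receiver-side clauses to \something, we have $\bohliSA\implies\sgame{X_1}$, so $\Pi_0$ achieves $(c,\epsilon,\delta)-\sgame{X_1}$ for every $c\ge1$, $\epsilon\ge0$ and every $\delta<1$ that is not below the achievable (negligible) error. Let $\Pi'$ behave like $\Pi_0$ but additionally output the current value of $Prop'$ after each batch. Because $Prop'$ is identical in both scenarios of any valid $\sgame{X_1}$-challenge, this extra output is a deterministic function of data the adversary already submitted; formally, any attack on $\sgame{X_1}$ against $\Pi'$ is simulated by one against $\Pi_0$ that computes $Prop'$ itself, so $\Pi'$ still achieves $(c,\epsilon,\delta)-\sgame{X_1}$. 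Conversely, $\Pi'$ does not achieve $(c^*,\epsilon^*,\delta^*)-X_2$ for any $\epsilon^*\ge0$, $\delta^*<1$, $c^*\ge1$: the adversary submits a batch query with a single challenge row (for $Prop'=P'$ together with one identical non-challenge communication) whose two scenarios are $X_2$-compliant yet differ in $Prop'$ --- such scenarios exist precisely because $Prop'$ is unconstrained in $X_2$ --- reads off the published $Prop'$, and outputs the challenge bit with certainty. This yields probability $1$ of a correct guess on the $b=0$ side and $0$ on the $b=1$ side, and $1>e^{\epsilon^*}\cdot0+\delta^*$ since $\delta^*<1$.

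Finally I would observe that this also settles the table cells comparing two $\sgame{}$-notions: since deleting $\EveryButRec$ only enlarges the adversary's choice set, $\sgame{X_2}\implies X_2$ always, so $\sgame{X_1}\centernot\implies X_2$ already entails $\sgame{X_1}\centernot\implies\sgame{X_2}$. I expect the only genuine effort to be bookkeeping rather than conceptual: one must go through the nine receiver notions using the mirror of Figure~\ref{propertiesRestricted} and check that the chosen $Prop'$ is really not forced equal by the other clauses present in $X_2$ (e.g.\ that $P'$ is not implied by $H'$ or $Q'$, that $|U'|$ is not implied by any clause occurring in the weaker notion, and so on), so that the claimed $X_2$-compliant separating challenge indeed differs in $Prop'$; this is exactly the verification already carried out on the sender side in $P_B$, so it carries over by the $S\leftrightarrow R$ symmetry.
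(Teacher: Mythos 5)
Your proof is correct and follows essentially the same route as the paper's: it isolates a receiver-side property $Prop'\in\{U',|U'|,Q',H',P'\}$ that $\sgame{X_1}$ still enforces (because removing $\EveryButRec$ leaves the receiver clauses of $X_1$ intact) but that $X_2$ neither contains nor implies, augments a base protocol to publish $Prop'$ (harmless for $\sgame{X_1}$ since $Prop'$ agrees in any valid challenge, fatal for $X_2$ via a compliant challenge differing in $Prop'$). The only cosmetic difference is that you instantiate the base protocol with one achieving $\bohliSA$ rather than an arbitrary protocol achieving $\sgame{X_1}$ as the paper does; both fit the paper's general counterexample template.
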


\begin{proof} 
If $X_1 \centernot \implies X_2$, then it exists a property $Prop \in\{U',|U'|, Q', H', P'\}$, which has to be equal in both scenarios for $X_1$, but neither has to equal nor is implied to be equal in $X_2$ (see Proof to Lemma \ref{BohliLemma} for details).

We now assume a protocol $\Pi$, that achieves $((c,\epsilon, \delta)-\bohliPrefixS{X_1}$. Let $\Pi'$ be the protocol that additionally outputs $Prop$. Since the properties of $X_1$ are also checked  in attacks for  $(c,\epsilon, \delta)-\bohliPrefixS{X_1}$ every valid attack on  $(c,\epsilon, \delta)-\bohliPrefixS{X_1}$ will result in the same version of $Prop$ for both scenarios. Hence, $\Pi'$ does not output new information to the adversary (compared to $\Pi$) and still achieves $(c,\epsilon, \delta)-\bohliPrefixS{X_1}$. 

However, since $Prop$ does not have to be equal in $X_2$, the adversary can pick the scenario such that it can distinguish them with certainty in $\Pi'$ based on $Prop$. Hence for an arbitrary $\epsilon^* \geq 0, \delta^* < 1, n^*=c^*\geq 1$,  $\text{Pr}[0= \langle \mathcal{A} \bigm| Ch(\Pi', X_2,c^*,0)\rangle ] =1$ and $\text{Pr}[0= \langle \mathcal{A} \bigm| Ch(\Pi', X_2, c^*,1)\rangle]=0$ and thus 
$\text{Pr}[0= \langle \mathcal{A} \bigm| Ch(\Pi', X_2,c^*,0)\rangle ] >e^{\epsilon^*} \cdot \text{Pr}[0= \langle \mathcal{A} \bigm| Ch(\Pi', X_2, c^*,1)\rangle]+ \delta^*$ (since $1 > e^{\epsilon^*} \cdot 0+ \delta^*$).
\end{proof}

 \begin{lemma}(Proofs $P_{23}$, $P_{23}'$, $P_{16}$) $\newSAUO \centernot \Rightarrow \newSA$  ($\newRAUO \centernot \Rightarrow \newRA, \anoaREL \centernot \Rightarrow \loopixSRTPU$ analogous) 
 \end{lemma}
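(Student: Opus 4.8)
The plan is to show that $\newSAUO$ does not imply $\newSA$ by exhibiting an \ac{ACN} protocol that achieves the former but not the latter, following the same template used throughout the completeness proof (Theorem \ref{the:noImpl}). Recall that $\newSAUO =$ \something $\land \EveryButSenderMsg \land R_{SM}$ uses the \emph{random} sender-message complex property, whereas $\newSA =$ \something $\land \EveryButSenderMsg \land M_{SM}$ uses the \emph{mixed} variant $M_{SM}$, which additionally forces both suspected senders to be active in each scenario. The separating information is therefore the set of active senders together with which messages they send: in $M_{SM}$ instances both senders are active, so the adversary picking a $\newSA$ challenge can be distinguished by a protocol that leaks the sender-message pairs, while such leakage is harmless against $\newSAUO$ because there the ``mixed'' scenario replaces the pair $(u_0, m_0)$ simultaneously in sender and message, so the full set of sender-message pairs occurring across the challenge rows is identical in both scenarios and instances.

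Concretely, I would start from any protocol $\Pi$ achieving $\newSAUO$ (for instance a trivially-secure idealized protocol, or more honestly: assume one exists for the statement to be non-vacuous, exactly as in Lemmas $P_A$, $P_B$) and define $\Pi'$ to run $\Pi$ and additionally publish, after each batch, the message partition per sender $P$ (equivalently the collection of sender-message pairs of the challenge rows). First I would argue $\Pi'$ still achieves $\newSAUO$: in any valid $\newSAUO$ challenge, property $R_{SM}$ forces the challenge rows to be $(u^a_0, {u'}_0, m^1_{0_{cr}}, \dots)$ for $b=0$ and $(u^a_0, \dots, m^{1-a}_{0_{cr}}, \dots)$-type swaps for $b=1$ — so the union over the challenge rows of the sender-message content is the same multiset regardless of $b$ and $a$ (the swap permutes within a fixed set of pairs). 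Hence $P$ is already determined by the non-challenge part and by $R_{SM}$'s structure, so outputting it gives the adversary nothing new, and $\Pi'$ inherits $(c,\epsilon,\delta)$-$\newSAUO$ from $\Pi$.

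Then I would give the attack against $\newSA$: the adversary submits a two-challenge-row batch query using $M_{SM}$-compliant instances, e.g. (using the abbreviated notation of Table \ref{Construction Idea}) $\langle((u_0,m_2),(u_0,m_0),(u_1,m_1)), ((u_0,m_2),(u_1,m_1),(u_0,m_0))\rangle$ for scenario $0$ and the mixed version $\langle((u_0,m_2),(u_0,m_1),(u_1,m_0)), ((u_0,m_2),(u_1,m_0),(u_0,m_1))\rangle$ for scenario $1$, with a fixed common receiver — this is legal for $\newSA$ since $\EveryButSenderMsg$ and $M_{SM}$ are met, but the set of sender-message pairs $\{(u_0,m_0),(u_1,m_1)\}$ versus $\{(u_0,m_1),(u_1,m_0)\}$ differs between the two scenarios. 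Since $\Pi'$ publishes $P$, the adversary reads off the pairs and outputs the correct bit with certainty, so $\Pr[0 \mid 0] = 1$, $\Pr[0 \mid 1] = 0$, and for any $\epsilon^* \ge 0, \delta^* < 1, c^* \ge 2$ we get $1 > e^{\epsilon^*}\cdot 0 + \delta^*$, i.e. $\Pi'$ does not achieve $(c^*,\epsilon^*,\delta^*)$-$\newSA$. The analogous $\newRAUO \centernot\Rightarrow \newRA$ follows by swapping senders and receivers, and $\anoaREL \centernot\Rightarrow \loopixSRTPU$ follows the same pattern with $R_{SR}$ versus $M_{SR}$, leaking the sender-receiver pairs instead (this is essentially proof $P_3$/$P_7$ reused). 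The main obstacle — and the only real content — is verifying carefully that leaking $P$ (resp. $P'$, resp. the sender-receiver linking) is genuinely invariant under the $R_{SM}$-swap so that the first half of the argument goes through; the attack side is immediate once the right $M_{SM}$-instances are written down.
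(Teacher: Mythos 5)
Your attack on $\newSA$ is fine (it is essentially the paper's $P_{20}$-style attack), but the construction of $\Pi'$ fails, and the failure is exactly at the claim you yourself flag as ``the only real content'': the sender-message pairs are \emph{not} invariant under the $R_{SM}$ swap. Unfolding the definition of $R_{SM}$ (analogous to $R_{SR}$), the challenge row in scenario $b=0$, instance $a$ carries the pair $(u^a_0, m^a_0)$, while in scenario $b=1$, instance $a$ it carries $(u^a_0, m^{1-a}_0)$. The challenger executes exactly one instance, so the adversary observes a single pair, and that pair determines $b$: a matched pair $(u^0_0,m^0_0)$ or $(u^1_0,m^1_0)$ means $b=0$, a mixed pair means $b=1$. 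Only the marginals (the set of active senders, and the set of messages) are preserved across scenarios; the joint sender-message pair is precisely the information $\newSAUO$ is designed to protect. Hence a protocol that publishes the sender-message pairs of the challenge rows is distinguishable with certainty under $R_{SM}$, and even publishing only the anonymous partition $P$ breaks $\newSAUO$: pad the batch with a non-challenge communication $(u^0_0, m_2)$ and the partition reveals whether $m_2$ is grouped with $m^0_0$ or with $m^1_0$, which again determines $b$ (consistent with the completeness table, where $\bohliSPSc \centernot\Rightarrow \newSAUO$). So your $\Pi'$ does not achieve $\newSAUO$, and more fundamentally no leak of sender-message linkings can separate the two notions, because any such leak breaks both.

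The paper's construction is therefore necessarily more delicate. It leaks a bit only for batches of exactly two communications, as a function of the \emph{co-occurrence} of the two suspect pairs: output $0$ on $\{(u_0,m_0),(u_1,m_1)\}$ and output $1$ on $\{(u_0,m_1),(u_1,m_0)\}$. Both suspect pairs occur together only in $M_{SM}$ challenges, never within a single $R_{SM}$ challenge (where one pair is selected at random by the instance bit), so this breaks $\newSA$ directly. To prevent an $R_{SM}$ adversary from manufacturing co-occurrence by combining two challenges or by padding with a fixed communication, the construction additionally assigns compensating outputs to batches consisting of a duplicated pair, so that averaged over the random instance bits the output distribution is identical for $b=0$ and $b=1$. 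A repair of your proof needs this gating on joint occurrence together with the cancellation argument; leaking $P$ (or the sender-message pairs) cannot do the job, and the same objection applies to your analogous treatments of $\newRAUO \centernot\Rightarrow \newRA$ and $\anoaREL \centernot\Rightarrow \loopixSRTPU$, where reusing $P_3$/$P_7$ is not possible because $\anoaREL$ forbids exactly the sender-receiver pair leak that $P_3$ relies on.
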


\begin{proof}

\begin{construction}
Given a protocol $\Pi$ that achieves $\newSAUO$ \ and does not allow to recognize duplicated sender-message pairs, we construct protocol $\Pi'$.
Let $\Pi'$ run $\Pi$ and additionally output a bit for every batch that contains only two communications. For some fixed senders $u_0 \neq u_1$ and messages $m_0\neq m_1$ the protocol will additionally output a bit according to Table~\ref{protocolComplicated}. For any other batch with two communications it will pick the output bit randomly. 
\end{construction}
 \begin{table}[thb]
\center
\resizebox{0.3\textwidth}{!}{
  \begin{tabular}{ c |c|| c| c }

$(u_0,m_0)$&$(u_1,m_0)$&$(u_0,m_0)$&$(u_1,m_0)$\\
$(u_1,m_1)$&$(u_0,m_1)$&$(u_0,m_0)$&$(u_1,m_0)$\\ \hline
$(u_1,m_1)$&$(u_0,m_1)$&$(u_1,m_1)$&$(u_0,m_1)$\\ 
$(u_0,m_0)$&$(u_1,m_0)$& $(u_1,m_1)$&$(u_0,m_1)$\\ \hline
output 0& output 1& output 1& output 0\\
\end{tabular}}
 \caption{Additional output of $\Pi'$}
  \label{protocolComplicated}
\end{table}

Protocol $\Pi'$ outputs the challenge bit $b$ for $\newSA$  and hence does not achieve $\newSA$. However, it achieves $\newSAUO$. The strategy of entering the same challenge row twice does not work  because the advantage gained in the cases of $\newSA$ (i.e.  $a_1\neq a_2$) is annihilated in the cases of duplicated communications (i.e. $a_1=a_2$). Using one equal communication in both scenarios  and a challenge row leads to another compensating distribution: The probability for a correct output is 0.25, for a wrong output 0.25 and for a random output 0.5.  Another attack strategy is not possible since the additional output is only given for batches of size 2.
\end{proof}


\inlineheading{Remark}
If a protocol does not achieve $X_2$ for $c$, then it does not achieve $X_2$ for any $c'>c$ (because every attack with only $c$ CR is also valid if more than $c$ CRs are allowed.)
Further, notice that for some notions a minimum of two CRs is required (e.g. $\anoaUL$)

Additionally, for corruption options: since the proposed attacks do not use any corruption, they are valid for any corruption option. Analogously, since the proposed attacks do not use different sessions in both scenarios, they are valid for any session option.

\end{proof}

\else
\begin{proofsketch}
We construct the protocol in the following way: Given a protocol $\Pi$ that achieves $X'_1$ ($X_1$ itself or a notion that implies $X_1$), let protocol $\Pi'$ run $\Pi$ and additionally output some information $I$. We argue that learning $I$ does not lead to any advantage in distinguishing the scenarios for $X_1$. Hence, $\Pi'$ achieves $X_1$. We give an attack against $X_2$ where learning $I$ allows the scenarios to be distinguished. Hence, $\Pi'$ does not achieve $X_2$.  
Further, we use the knowledge that $\implies$ is transitive\footnote{If $X_1 \implies X_2$ and $X_1 \centernot \implies X_3$, it follows that $X_2 \centernot \implies X_3$.}.

Some concrete cases are shown in Appendix \ref{completeSketch}. We provide the complete list of proofs in the long version of this paper \cite{longVersion}.
\end{proofsketch}
\fi

\iflong
\section{How to Use}\label{howToUse}


The framework described above offers the opportunity to thoroughly analyze \acp{ACN}. To perform such an analysis, we advice a top-down approach as follows.

\begin{enumerate}
  \item In case the \ac{ACN} under analysis can be instantiated to protect against different adversaries, fix those parameters.
  \item Extract capabilities of the adversary and general protocol properties from the \ac{ACN} description:
    Specify the allowed \emph{user corruption}. 
    Is it none, static, adaptive? See Table~\ref{tab:corruption options}.
    Are \emph{sessions} (channels) constructed that link messages from the same sender? See Section~\ref{sec:sessions}.
    Extract all other capabilities to include them in the protocol model.
  \item Simplify the \ac{ACN} protocol in a protocol model:
    Generate a simplified protocol (\emph{ideal functionality}) without cryptography by assuming secure communication.
    Show indistinguishability between this ideal functionality and the real-world protocol using a \emph{simulation based proof}.
    Previous work~\cite{backes_provably_2012} can guide the modeling step. See Section \ref{advClass} (UC-realizability) for how the result of the simplified protocol can be transferred to the real-world protocol.
  \item Extract properties based upon the input to the adversary from the ideal functionality:
    Start with \emph{simple properties}, see Table~\ref{tab:information}.
    What does the adversary learn from the protocol execution?
    Continue with \emph{complex properties}.
    See Section~\ref{sec:ComplexProperties}.
  \item After mapping all properties from the protocol and adversary model to our framework, a privacy notions must be selected.
    Either the description of the \ac{ACN} already specifies (in-)formally which privacy goal should be achieved, or the \ac{ACN} under analysis should be shown to achieve a certain notion.
    See Table~\ref{NotionsDefinition} for an overview of our defined notions.
  \item As it is easier to show that a certain notion is not fulfilled compared to show that it is fulfilled, we propose to start with the strongest notions extracted this way. 
    A notion is not fulfilled if the functionality (and thus the protocol) leaks a property to the adversary that he is not allowed to learn for the given notion.
    If it is not obvious that a notion is not fulfilled,  check if the notion can be proven for the  protocol model.
    The related work of Gelernter~\cite{gelernter13limits} and Backes~\cite{backes17anoa} serve as examples for such proofs. 
\end{enumerate}

If the proof cannot be constructed or  $\delta=1$, a weaker notion can be selected for analysis. 
It might also help to consider the case of a limited number of challenge rows (see Section~\ref{sec:challenges}) and limit the adversary by using a single-challenge reducable adversary class (see Section~\ref{advClass} Adversary Class).
In case the proof goes through and yields $\epsilon = 0$ and a negligible $\delta$, the protocol was shown to achieve the selected notion as per Definition~\ref{def:achieve}. If $\epsilon > 0$ or a non negligible $\delta < 1$, the protocol achieves the selected notion as per Definition~\ref{def:achieveEpsilon}. 

If the protocol supports different adversaries, the steps described above can be repeated. This typically leads to adjusting the ideal functionality or adding different adversary classes (see Section~\ref{advClass}) and thus fulfilling different properties of our framework. Analysis results under a variation of \ac{ACN} parameters may achieve different notions in our hierarchy (Figure~\ref{fig:hierarchyColored} and Figure~\ref{HierarchyExtended}). Based on our established relations between notions, analysis results can be compared for various parameters or parameter ranges, as well as against results of other \acp{ACN}.

\fi

\vspace{-0.3cm}
\section{Discussion} \label{discussion}
\vspace{-0.2cm}
In this section, we present the lessons learned while creating our framework.

\inlineheading{Learning about privacy goals}
The need for formal definitions is emphasized by  the mapping of Loopix's privacy goals to notions as example that  less formal alternatives leave room for interpretation. Further, a result like our hierarchy would be much harder to achieve without formal definitions.

These definitions allow us to point out the relation of privacy and confidentiality ($\newCONF$). The way we ordered the notions in the hierarchy allows easy identification  the notions implying $\newCONF$ (the middle of the upper part). Note that any privacy notion implying $\newCONF$ can be broken by distinguishing the message's content. Further, nearly all those notions also imply $\newCONFWOL$ and  hence, all such notions can be broken by learning the message length.

Our formal definitions also enabled the comparison of existing frameworks. Excluding differences in the adversarial model, quantifications and restrictions that do not apply to all \acp{ACN}, we observe that equivalent definitions are often defined independently by the authors of the analytical frameworks. For this reason, we included the notions of the other frameworks in our hierarchy in Figure \ref{fig:hierarchyold} of Appendix \ref{hierarchyFrameworks}.
$\bohliSA$, $\bohliSPSc$, $\bohliSWAc$, $\bohliSWU$ and $\bohliSWUc$ are defined (under different names) in multiple works;  $\bohliSSAc$ is even defined in all works.

Although previous work includes equivalent definitions, we realized that some notions are still missing. For example, we added weak notions like $\newSA$, $\newRA$ and $\loopixSRTPU$ because they match our understanding of anonymity. Our understanding was confirmed by the analysis of Loopix' goals. Further, we defined all analogous notions for  all communication parties involved (senders and receivers) as real-world application define which party is more vulnerable. For the concrete applications we refer the reader to Section \ref{sec:notionExamples}.

Consequently, we present a broad selection of privacy notions. We are aware that understanding them all in detail might be a challenging task, so we want to provide some intuitions and preferences, based on what we know and conjecture. We expect the lower part of the hierarchy to be more important for $\acp{ACN}$ as \cite{gelernter13limits} already includes an inefficiency result for $\bohliSSAc$ and thus for all notions implying $\bohliSSAc$ \iflong(see Figure \ref{fig:hierarchyInefficient} of the Appendix)\fi. As a first guess, we think $\bohliSSAc$, if higher overhead is manageable, $\bohliSSUUc$, $\bohliSWUc$, $\newSA$ (and receiver counterparts),  $\newCONF$ and $\loopixSRTPU$ are the most popular notions for \acp{ACN}.
Further,  we want to add some results concerning two well-known systems to ease intuition. \cite{backes17anoa}'s analysis of Tor results in a  small, but non-negligible probability  to break  $\bohliSSAc$ and thus Tor does not achieve $\bohliSSAc$ with our strict definition. Classical DC-Nets, on the other hand, do achieve at least $\bohliSPSc$ \cite{gelernter13limits}.  
\iflong
We present our selection of notions also graphically in Figure \ref{fig:hierarchyAnnotated} of the Appendix.
\fi

\inlineheading{Correcting Inconsistencies}
While the above similarities most likely stem from the influence of prior informal work on privacy goals, attempts to provide concrete mappings have led to contradictions.
The AnoA framework maps its notions to their interpretation of Pfitzmann and Hansen's terminology.
Pfitzmann and Hansen match their terminology to the notions of Hevia's framework.
This means that, notions of AnoA and Hevia's framework are indirectly mapped.
However, those notions are not equivalent.
While AnoA's sender anonymity and Hevia's sender unlinkability are both mapped to Pfitzmann and Hansen's sender anonymity, they differ: In Hevia's sender unlinkability the number of times every sender sends can leak to the adversary, but in AnoA's sender anonymity  it cannot.

We belive that  AnoA's sender anonymity  should be called sender unobservability, which is also our name for the corresponding notion.
This follows the naming proposal of Pfitzmann and Hansen and their mapping to Hevia.
It is also more suitable because AnoA's sender anonymity can be broken by learning whether a certain sender is active, i.e. sends a real message, in the system ($u \in U_b$).
In order to achieve this notion, all senders have to be unobservable.
To verify this, we looked at how the notions of AnoA have been used.
For example in \cite{chaum16cmix} the protocol model contains an environment that lets all senders send randomly.
Hence, $U_b$ is hidden by the use of this environment.
We consider that the information that is allowed to be disclosed should instead be part of the notion and not modified by an environment.
Only then are the notions able to represent what information is protected by the protocol.


Another lesson learned by comparing privacy notions is the power of names, because they introduce intuitions.
The fact that Hevia's strong sender anonymity is equivalent to Bohli's receiver weak unlinkability seems counter-intuitive, since a sender notion is translated to a receiver notion. This might also be the reason for the incorrect mapping in \cite{bohli11relations}. 
However, Bohli's receiver weak unlinkability is named this way because receivers are the ``interesting'' users, whose communication is restricted.
It does not restrict senders in any way and hence should be, in most cases, easier to break according to some information about the sender. This is why we and Hevia have classified it as a sender notion.
 An analogous argument explains why Bohli's receiver weak anonymity $R/WA$ implies the restricted case of Bohli's sender strong anonymity $S/SA^\circ$.
 
\iflong
\inlineheading{Use and Limits}
Because there is no restriction in the use of protocol queries, the only restriction to what can be analyzed is what is modeled in the protocol model.
So, if the protocol model includes e.g. insider corruption and active behavior of the insider, like delaying or modifying of messages, those functionality can be used via protocol queries. 
The same applies to timing; if the protocol model specifies that it expects protocol queries telling it, that $x$ seconds passed and the adversary gets meaningful answers after this protocol messages and only an empty answer after batch queries (because they are only processed after some time passed),
attacks on this can be analyzed. 
However, it needs to be specified in the protocol model. 
This model also defines the exact meaning of a batch query, whether messages of one batch are sent at the same time or in a sequence without interruption and specifies whether a synchronous or asynchronous communication model is used.  

Defining the protocol model with the strongest adversary imaginable and restricting it later on with adversary classes is a way to limit the work, when analyzing against different adversary models.
 We decided not to increase the complexity of the framework further by adding interfaces for dimensions of adversary models to the protocol model, i.e. adding more dedicated query types instead of the versatile protocol query. So far, our decisions for query types are driven by the related work. Differentiating between the different possible use cases of protocol queries and defining a set of  adversary classes defining typical adversaries based on this is future work. One of them should allow to limit the amount of corrupted users compared to the amount or honest users or specify n-1 attacks.
Although we presented all notions that we deemed important, there might still be use cases that are not covered. With our properties as building blocks, we conjecture that it is easy to add the needed properties and use them in combination with ours. Further, for adding new notions to the hierarchy, our proofs can be used as templates.
\else
 \inlineheading{Long Version  \cite{longVersion}} Besides giving more technical details, we focus on making our results easier to apply for practitioners 
in the long version of thus paper by  presenting an analysis framework, along with a how-to-use section. 
The extended version includes different parts of the adversary model, like user corruption and limiting the number of adversarial users,  and discusses how typical attacks, like n-1, intersection and active attacks (e.g. delaying or dropping messages), apply to our framework. 
To further simplify the proofs practitioners have to make, it allows privacy goals to be quantified by using multiple challenges or multiple challenge rows and includes results on how the limited case of challenge rows generalizes to more, such that only the limited case needs to be proven. Further, as we are aware that our strict definition of achieving a notion might not work for some practical cases, we point out the relaxed definition that allows for a non-negligible distinguishing probability.
However, none of those extensions limits or contradicts the results regarding the hierarchy of privacy notions built from observable properties that we presented here, as they work independently.  

\fi

\iflong
Another possible extension is including and extending more results of the existing frameworks.
Such results are Bohli's closed hierarchy or Gelernter's inefficiency result.
Further, in Hevia's framework techniques to achieve a stronger \ac{ACN} are included.
For instance, given an  \ac{ACN} achieving $\bohliSWUc$ adding a certain cover traffic creates an \ac{ACN} achieving $\bohliSSAc$.
Those techniques hide certain information that is allowed to leak in the weaker but not in the stronger notion.
The proof of our Theorem \ref{the:noImpl} already includes some information that is allowed to leak in the weaker but not in the stronger notion.
Hence, it is a good starting point for finding more such techniques that help understanding and constructing \acp{ACN} better.
We make the conjecture that adapting the proofs of all these results for our framework is possible.
 However, at the moment we leave proving of these results as future work.
 \fi

\section{Conclusion and Future Work}
\label{conclusion}

\iflong
We have presented a framework of privacy notions for sharper analysis of \acp{ACN} that, to the best of our knowledge, includes  more  notions and assumptions than all existing frameworks based on indistinguishability games. To achieve this, we
\else
We
\fi
  expressed privacy goals formally as privacy notions. We first presented their basic building blocks: properties.
Those properties cover the observable information of communications, which is either required to remain private or allowed to be learned by an adversary, depending on the goal.
\iflong
Furthermore, we checked the sanity of the notions by finding exemplary use cases and by providing a mapping of the privacy goals of a current ACN to them.
\else
We formally specified privacy goals from \acp{ACN} and sorted them into a proven hierarchy, according to their strength.
\fi
\iflong
Our framework allows to compare and understand the differences in privacy goals. We proved the relations between our notions. This means that, for every pair of notions, we know which one is stronger than the other or if they are separate. This way, we resolved inconsistencies between the existing frameworks and built the basis to understand the strengths and weaknesses of \acp{ACN} better, which helps building improved \acp{ACN}.
Further, it creates a unified basis for the analysis  and comparison of \acp{ACN}.
\else
This means, for every pair of notions, we know which one is the stronger; or if they do not imply each other.
As a result, we resolved inconsistencies between existing analytical frameworks and built the foundations to understand the strengths and weaknesses of \acp{ACN} better, which helps analyzing and building improved \acp{ACN}.
\fi

\inlineheading{Future Work}
\iflong
Although our framework allows to analyze all types of attacks with the versatile protocol queries, the protocol model must support those attacks without systematic guidance by interfaces of the framework.
Restrictions of such attacks thus cannot be expressed formally as part of the notion and hence are not easily represented.
In future work we want to introduce more dedicated queries to also formalize other attack dimensions and based on this adversary classes for typical attackers.
\fi
As we mentioned in the discussion, providing more intuitions and understanding the significance of notions is necessary. Therefore, analogous to the analysis of Loopix's privacy goals, more current ACNs can be analyzed to understand which parts of the hierarchy they cover. This can also identify gaps in research; privacy goals for which \acp{ACN} are currently missing. Further, a survey of goals in greater depth would be useful to identify the most important notions in the hierarchy and to provide intuitions and thus ease deciding on the correct notions for practitioners.

Additionally, such a survey helps to understand the relationships between currently-employed privacy enhancing technologies.
Finally, this understanding and the knowledge about how notions are related and differ can be used to define general techniques that strengthen \acp{ACN}.

Beyond that, an investigation of the applicability of our \iflong framework \else notions and hierarchy \fi  to other areas, like e.g. anonymous payment channels, would be interesting.

\section*{Acknowledgements}

We would like to thank the anonymous reviewers for their helpful comments and feedback. Our work was partially funded by the  German  Research Foundation (DFG) within the Research Training Group GRK 1907, the German Federal Ministry of Education and Research (BMBF) within the EXPLOIDS project grant no. 16KIS0523 and European Union's Horizon 2020 project SAINT grant no. 740829.


\bibliographystyle{abbrvnat}
\bibliography{framework}

\appendix
\vspace{-0.5cm}
\iflong 
\fi
\section{Challenger}
\label{sec:challenger}
\iflong
\vspace{-0.5cm}
This section describes the queries to the challenger $Ch(\Pi, X,c, n,b)$.
Pseudocode of our challenger is shown in Algorithm \ref{Challenger} of Appendix \ref{challengerAppendix}.
\else
This section describes the queries to the challenger $Ch(\Pi, X,b)$.
\fi

\inlineheading{Batch Query}
The batches $\underline{r}_0,\underline{r}_1$ that the adversary chooses for the two scenarios are represented  
 in batch queries.
 \iflong
When the challenger receives a batch query, it will first check if their challenge number $\Psi$ is valid, i.e. $\Psi \in \{1, \dots, n\}$.
Further, the challenger will validate the communications that would be input to $\Pi$ for $b=0$ and $b=1$  as explained below.
\else
When the challenger receives a batch query, it will validate the communications that would be input to $\Pi$ for $b=0$ and $b=1$  as explained below.
\fi
\iflong If the game is not aborted so far, the challenger \iflong will retrieve or create the current state $s$ of the challenge $\Psi$, which stores \else uses stored \fi information to calculate the \iflong aspects. \else properties. \fi \fi
\iflong
Afterwards it checks if the allowed total number of challenge rows $c$ is met.
\fi
If all criteria are met so far, it checks that the \iflong aspects \else properties \fi of the privacy notion $X$ are met by using \iflong the current state of the challenge $s$ \else stored information about the past batches and \fi \iflong,  the set of corrupted users $\hat{U}$,\fi the instances for both scenarios $\underline{r_0}^a, \underline{r^a}_1, a \in \{0,1\}$. Finally, it runs the instance belonging to the challenge bit $b$ of this game and the for this challenge randomly chosen instance bit $a$, if the \iflong aspects \else properties \fi are matched.
 Otherwise, it returns $\perp$ and aborts the experiment.
Running the scenario in the \ac{ACN} protocol will return information that is  forwarded to the adversary\iflong (or adversary class)\fi.
This information is  what an adversary is assumed to be able to observe.

\iflong
\inlineheading{Corrupt Query}
Corrupt queries represent adaptive, momentary corruption of users (senders or receivers).
If the corrupt query is valid, the challenger forwards it to the \ac{ACN} protocol.
The \ac{ACN} protocol returns the current state of the user to the challenger, who forwards it to the adversary.
Active attacks based on corruption are realized with protocol queries if the protocol model allows for them.
\fi

\inlineheading{Protocol Query}
Protocol queries allow the adversary e.g. to compromise parts of the network\iflong(not the users)\fi, set parameters of the \ac{ACN} protocol or use other functionalities modeled in the protocol model, like e.g. active attacks.
The meaning and validity of those queries is specific to the analyzed \ac{ACN} protocol.

 \inlineheading{Switch Stage Query}
If this query occurs and it is allowed, i.e. the notion contains a relevant property, the stage is changed from 1 to 2. 

\inlineheading{Validate Communications}
If the analyzed \ac{ACN} protocol specifies restrictions of senders and receiver-message pairs, their validity is checked by this function. 

\iflong
\inlineheading{Run Protocol}
\iflong
Run protocol first creates a  new random session identifier if there is not already one for this session identifier of the adversary chosen $\text{session}$ with the extension $ID$.
This is done to ensure that the \ac{ACN} protocol is not broken only because the session identifier is leaked.
Afterwards it passes the communications to the \ac{ACN} protocol formalization.
\else
Run protocol passes the chosen communications to the \ac{ACN} protocol formalization.
\fi
\fi

\inlineheading{Remark to simple properties and instances}
 In case the notion only uses simple properties, the challenger will pick $a=0$ and check the properties for
 $\underline{r_1}_j = \underline{r_1^0}_j $ and  $\underline{r_0}_j = \underline{r_0^0}_j$.
 In case the notion uses a combination of simple and complex properties, the challenger will check the simple properties for any pair $\underline{r_1}_j = \underline{r_1^a}_j $ and  $\underline{r_0}_j = \underline{r_0^{a'}}_j$ resulting by any $a, a' \in \{0,1\}$.


\iflong
\FloatBarrier

\label{challengerAppendix}

\begin{algorithm}[h!]
\scriptsize
$\hat{U}=\emptyset$\\
$\text{stage}=1$\\
 \uponMes{(Batch, $\underline{r}^0_0, \underline{r}^1_0, \underline{r}^0_1,\underline{r}^1_1, \Psi$)}{
 	\If {$\Psi \notin \{1,\dots, n\}$}
   		{output $\perp$}  
   	 \uIf{$\Psi \in T$}
      		{Retrieve $s:= s_\Psi$}
      	\Else{
      	$s:= initializeState$ \\
      	\uIf{X uses only simple properties}
      		{$a\gets 0$}
      	\Else{
      	$a\gets ^R \{0,1\}$}
      	add $\Psi$ to $T$}
      	
   	\If{$\lnot Validate(r)$}
   			{output $\perp$}       	
      	Compute $c_t=c_t+|CR(\underline{r}^0_0, \underline{r}^1_0, \underline{r}^0_1,\underline{r}^1_1)|$\\
      	\If{$c_t>c$}{output $\perp$}
 	$s'= calculateNewState( stage, s,\underline{r}^0_0,\underline{r}^1_0, \underline{r}^0_1,\underline{r}^1_1))$\\
 	\uIf {$checkFor(X, \Psi, s',\hat{U}, \underline{r}_0= (\underline{r}^0_0,\underline{r}^1_0), \underline{r}_1= (\underline{r}^0_1,\underline{r}^1_1)) $}
 		{$(\underline{r}, s_\Psi) \gets (\underline{r}^a_b,s'$)}
 	\Else{ output $\perp$}
      	Store $s_\Psi$\\
      	RunProtocol ($\underline{r}$) }
      	
\uponMes{(Protocol, x)}{ 
	\If{$x$ allowed}{Send $x$ to $\Pi$ }
	 }
	
\uponMes{(Corrupt, u)}{ 
	\If{$X$= $\noCorr{X'}$ ($X' \in$ Privacy notions)}{
	output $\perp$}
	\If{$X$= $\static{X'}$ ($X' \in$ Privacy notions) and a batch query occurred before and $u\notin \hat{U}$}
		{output $\perp$}
	$\hat{U}$= $\hat{U}\cup \{u\}$ \\
	Send internal state of $u$ to $\mathcal{A}$
}


\uponMes{(SwitchStage)}{
	\If {$\lnot X$ includes $T_S$ or $T_R$} 
	{output $\perp$}
	$\text{stage}=2$
	}
	
\Validate{($\underline{r}^0_0=(S^0_{0_i},R^0_{0_i},m^0_{0_i},aux^0_{0_i})_{i \in \{1,\dots, l\}}, \underline{r}^1_0, \underline{r}^0_1,\underline{r}^1_1$)}{
\For{$r =(S,R,m,aux) \in \{\underline{r}^{a'}_{b'} \mid a',b' \in \{0,1\}\}$}
	{\If {$\lnot Validate(S,R,m)$} 
	{output FALSE}
	}
	output TRUE
	}

\RunProtocol{ ($\underline{r}=(S_i,R_i,m_i,aux_i)_{i \in \{1,\dots, l\}}$)}{
      	\For {$r_i \in \underline{r}$}{
      		\If{$aux_i=(session_i, ID_i)$}{
      			\uIf{$\not \exists y: (\text{session},y, ID_i)\in S_i$}{ 
      				$y' \gets \{0,1\}^k$\\
      				Store ($\text{session},y',ID$) in $S_i$
      				}
			\Else{
				$y':=y$ from $(\text{session},y, ID_i)\in S_i$ \\ 
				}
			Run $\Pi$ on $r_i$ with session ID $y'$\\ Forward responses sent by $\Pi$ to $\mathcal{A}$
      			}
      		
		\Else
		{Run $\Pi$ on $r_i$\\ Forward responses sent by $\Pi$ to $\mathcal{A}$}
      	}}
      	
\caption{Challenger $Ch(\Pi, X, c, n, b)$}
\label{Challenger}
\end{algorithm}

\FloatBarrier

\section{Notions in Pseudocode}

\label{pseudocode}
CalcNewState always calculates the states for all user roles (senders and receivers). This is for improved readability. It would be sufficient to calculate the parts of the state needed for the current notion.
\begin{algorithm} 	[h!]
\footnotesize
\initializeState{}{
 	$s=(1,1,(\tilde{s},\tilde{s},\tilde{s},\tilde{s},\tilde{r},\tilde{r},\tilde{r},\tilde{r}), 0, \emptyset, \emptyset)$\\
 	return $s$}

\calcNewState{$(newStage,  s=(stage, session, users,  cr, $ $s_{sender}=(L^0_{0_i},L^1_{0_i}, L^0_{1_i},L^1_{1_i})_{i \in \{1, \dots, k-1\}}$, $s_{rec}=(L'^0_{0_i},L'^1_{0_i}, L'^0_{1_i},L'^1_{1_i})_{i \in \{1, \dots, k-1\}}),$ $\underline{r}^0_0=(S^0_{0_i}, R^0_{0_i}, m^0_{0_i}, aux^0_{0_i})_{i \in \{1,..,l\}},$ $\underline{r}^1_0=(S^1_{0_i}, R^1_{0_i}, m^1_{0_i}, aux^1_{0_i})_{i \in \{1,..,l\}},$ $ \underline{r}^0_1=(S^0_{1_i}, R^0_{1_i}, m^0_{1_i}, aux^0_{1_i})_{i \in \{1,..,l\}})$, $  \underline{r}^1_1=(S^1_{1_i}, R^1_{1_i}, m^1_{1_i}, aux^1_{1_i})_{i \in \{1,..,l\}})$}{
		\For {$a \in \{0,1\}$}{
			\For {$b \in \{0,1\}$}{
		$L^a_{b_k}=\{(u,M)\bigm|$ $M = \cup_{j:S^a_{b_j}=u} m^a_{b_i}\}$\\
		$L'^a_{b_k}=\{(u,M)\bigm|$ $ M = \cup_{j:R^a_{b_j}=u} m^a_{b_i}\}$\\}}
      			{$cr=cr+|CR(\underline{r}^0_0,\underline{r}^1_0,\underline{r}^0_1,\underline{r}^1_1)|$}\\
		\If{users=$(\tilde{s},\tilde{s},\tilde{s},\tilde{s},\tilde{r},\tilde{r},\tilde{r},\tilde{r})\land cr>0$}{
			 $((S^0_0,R^0_0,\_,\_),$ $(S^1_0,R^1_0,\_,\_),(S^0_1,R^0_,\_,\_),$ $(S^1_1,R^1_1,\_,\_), \dots)= CR(\underline{r}^0_0,\underline{r}^1_0, \underline{r}^0_1,\underline{r}^1_1)$\\
			users= ($S^0_0,S^1_0,S^0_1,S^1_1,R^0_0,R^1_0,R^0_1,R^1_1$)
		}
		\If{users= ($S^0_0,S^1_0,S^0_1,S^1_1,R^0_0,R^1_0,R^0_1,R^1_1)$ $\land \exists (r^0_0,r^1_0,r^0_1, r^1_1)\in CR(\underline{r}^0_0,\underline{r}^1_0, \underline{r}^0_1,\underline{r}^1_1): (r^0_0,r^1_0,r^0_1, r^1_1)\neq ((S^0_0,R^0_0,\_,\_,\_),(S^1_0,R^1_0,\_,\_,\_),$ $(S^0_1,R^0_1,\_,\_,\_),(S^1_1,R^1_1,\_,\_,\_))$}
			{session=$\perp$}
	$\text{stage}=\text{newStage}$\\
	output $s$
}

\caption{State Management}
\label{calcNewState}
\end{algorithm}

For the simple properties $checkFor$ uses $s^0_{b_k}$ from $s_{sender}$ resp.  $s'^0_{b_k}$ from $s_{rec}$ to calculate $U_b,Q_b,P_b$ and $H_b$ and compares them like in Definition \ref{def:properties}. For the complex properties the senders and receivers of the first challenge row are stored in the $users$-part  and the current stage in the $stage$-part of s. With this complex properties are computed as stated in Definition \ref{def:complexProperties}. Further, for the sessions-aspect the $session$-part of the state is set to $\perp$ if another sender-receiver-pair is used. With this and the $users-$ and $stage$-information the Definition \ref{def:sessions} can be checked. For the corruption it gets all the required information direct as input and can check it like defined in Definition \ref{def:corruption}.  For the challenge complexity the number of challenge rows of this challenge is counted in the $cr$-part of the state and hence, Definition \ref{def:challengeComplexity} can be calculated.

\fi

\iflong
\else
\vspace{-0.5cm}
\section{Achieving $(\epsilon, \delta)$-X}
\label{app:epsilonDef}
For some use cases, e.g. if the court of your jurisdiction requires that the sender of a critical content can be identified with a minimal probability of a certain threshold e.g. 70\%, a non-negligible $\delta$ is suitable. Hence, we allow to specify the parameter of $\delta$ and include the well-known concept of differential privacy \cite{dwork14algorithmic} as AnoA does in the following Definition:


\begin{definition}[Achieving $(\epsilon, \delta)-X$]\label{def:achieveEpsilon} 
An \ac{ACN} protocol $\Pi$ is  $(\epsilon, \delta)$ -$X$ with $\epsilon \geq 0$ and $0 \leq \delta \leq 1$, iff for all PPT algorithms $\mathcal{A}$:
\begin{align*}
 \text{Pr}[0= \langle \mathcal{A} \bigm| Ch(\Pi, X,0)\rangle ] &\leq \\
e^{\epsilon} \text{Pr}[0= \langle \mathcal{A} \bigm| Ch(\Pi, X,1)\rangle]&+ \delta\text{.}
\end{align*}
\end{definition}
Note that $\epsilon$ describes how close the probabilities of guessing right and wrong have to be. This can be interpreted as the quality of privacy for this notion. While $\delta$ describes the probability with which the $\epsilon$-quality can be violated. Hence, every \ac{ACN} protocol will achieve $(0,1)-X$ for any notion $X$, but this result does not guarantee anything, since with probability $\delta =1$ the $\epsilon$-quality is not met. 

Note  $\Pi$ is $(0,\delta)-X$ for a negligible $\delta$ is equivalent to the first definition of $\Pi$ achieves $X$.
\section{Remaining Examples}
\label{app:examples}

\inlineheading{Impartial Notions: Both-Side Message Unlinkability}
Notions of this group are broken if the sender-message or receiver-message relation is revealed.

\example{The activists know  that their sending and receiving frequencies are similar to regime supporters' and that using an \ac{ACN} is in general not forbidden, but nothing else. Even if the content and length  of the message ($\newCONFWOL$) and the sender-receiver relationship ($\loopixSRTPU$)  is hidden, the regime might be able to distinguish uncritical from critical communications, e.g. whether two activists communicate  ``Today'' or  innocent users an innocent message.  In this case, the regime might learn that currently many critical communications take place and improves its measures against the activists.}

In this case, the activists want a protocol that hides the communications, i.e. relations of sender, message and receiver. However, as using the protocol is not forbidden and their sending frequencies are ordinary, the adversary can learn which users are active senders or receivers and how often they sent and receive. Modeling this, the users need to have the same sending and receiving frequencies in both scenarios $Q,Q'$, since it can be learned. However, everything else needs to be protected and hence, can be chosen by the adversary. This corresponds to the notion  \emph{\heviaULLong \ ($\heviaUL$)}.

\inlineheading{Sender Privacy Notions: Receiver Observability}
In notions of this group the receiver of each communication can be learned.
Hence, such notions include the property that the scenarios are equal except for the senders and messages ($\EveryButSenderMsg$) to ensure that they are equal in both scenarios.

\example{Consider not only sending real messages is persecuted, but also the message content or any combination of senders and message contents is exploited by the regime. If the regime e.g. can distinguish activist Alice sending ``today'' from regime supporter Charlie sending ``see u'', it might have learned an information the activists would rather keep from the regime.  Further, either (1) the activists know that many messages of a certain length are sent or (2) they are not sure that  many messages of a certain length are sent.}

In case (1), Alice needs a \ac{ACN}, that hides the sender activity, the message content and their combination. However, the adversary can especially learn the message length. Modeling this, beyond the above described $\EveryButSenderMsg$, the message lengths have to be equal $|M|$. This results in the notion  \emph{\newAnoaSALong}  ($\newAnoaSA$).
Note that in $\newAnoaSA$ the properties of $\newCONF$ are included and further the senders are allowed to differ in the two scenarios.
The second case (2) requires a protocol that additionally hides the message length. Hence, in modeling it we remove the property that the message lengths are equal $|M|$ from the above notion. This results in \emph{\newAnoaSAWOLLong}  ($\newAnoaSAWOL$).

\example{ Alice's demonstration is only at risk if the regime can link a message with a certain content to her as a sender with a non negligible probability.} 
Then at least \emph{\newSALong\  ($\newSA$)}, which is defined analogous to $\loopixSRTPU$ is needed. 

\exampleCont{However, $\newSA$ only allows Alice to claim that not she, but Charlie sent a critical message $m_a$ and the regime cannot know or guess better. Now assume that Dave is also communicating, then the regime might be able to distinguish Alice sending $m_a$, Charlie $m_c$ and Dave $m_d$ from Alice sending $m_d$, Charlie $m_a$ and Dave $m_c$. In this case, it might not even matter that Alice can claim that Charlie possibly sent her message. The fact that when comparing all three communications that possibly happened, Alice is more likely to have sent the critical message $m_a$ means a risk for her.}

 To circumvent this problem Alice needs a protocol that not only hides the difference between single pairs of users, but any number of users. Modeling this, instead of the complex property $M_{SM}$, we need to restrict that the active senders' sending frequencies are equal, i.e. $\bohliSWUc$.

\example{In another situation our activists already are prosecuted for being a sender while a message with critical content is sent. }

In this case at least  \emph{\newSAUOLong\  ($\newSAUO$)}, which is defined analogous to $\anoaREL$ is needed.

Analogous notions are defined for receivers.

\inlineheading{Sender Privacy Notions: Both-Side Message Unlinkability}
As explained with the example before in the case that Alice does not want any information about senders, receivers and messages or their combination to leak, she would use $\overline{O}$. However, the privacy in this example can be tuned down, if she assumes that the regime does not have certain external knowledge or that the users are accordingly careful. As explained for the Sender Notions with Receiver-Message Linkability before, in this case we might decide to allow $U', |U'|,Q',H',P'$ to leak.

If a notion $X \in \{\bohliRSAc, \bohliRSUPc, \bohliRWUPc, \bohliRPSc,\bohliRSUUc,\bohliRWUUc,\bohliRANc,\bohliRWUc,\bohliRWAc\}$ is extended to \emph{Sender Unobservability by X} \emph{($ \sgame{X}$)}, the leaking of the sender-message relation is removed.
This is done by removing $\EveryButRec$.
 Since the attacker now has a greater degree of freedom in choosing the senders and is (if at all) only restricted in how she chooses the receivers and messages, this is a special strong kind of Sender Unobservability.
 Analogous notions are defined for receivers.\footnote{Note that $\sgame{\bohliRSAc}=\rgame{\bohliSSAc}=\bohliSA$.}

\fi

\iflong 
\fi
\vspace{-2em}
\section{Additional Tables and Lists}
\vspace{-2.7em}
\iflong
\else

\label{app:summaryNamingScheme}

 \begin{table} [thb]
\center
\resizebox{0.48\textwidth}{!}{%
  \begin{tabular}{ c p{6cm}  }

Usage&Explanation\\ \hline
$D \in \{S,R,M\}$& Dimension $\in \{$Sender, Receiver, Message$\}$\\
Dimension $D$ not mentioned& Dimension can leak \\ 
Dimension $D$ mentioned &Protection focused on this dimension exists\\ \hline
$D \overline{O}$& not even the  active participating items regarding D leak,(e.g. $S\overline{O}$: not even $U$ leaks)\\
$DF \overline{L}$& active participating items regarding D can leak, but not which exists how often (e.g. $SF\overline{L}$: $U$leaks, but not $Q$)\\
$DM \overline{L}$& active participating items regarding D and how often they exist can leak ( e.g. $SM\overline{L}$: $U,Q$ leaks)\\ \hline
$X -Prop, $& like X but additionally Prop can leak\\
$Prop \in \{|U|,H,P,|U'|, H',P', |M| \}$&\\  \hline
$(D_1 D_2)\overline{O}$& uses $R_{D_1 D_2}$; active participating items regarding $D_1,D_2$ do not leak, (e.g. $(SR)\overline{O}$: $R_{SR}$)\\ 
$(D_1 D_2)\overline{L}$& uses $M_{D_1 D_2}$; active participating items regarding $D_1,D_2$ can leak, (e.g. $(SR)\overline{L}$: $M_{SR}$)\\ 
$(2D)\overline{L}$& uses $T_{D}$; one active participating item regarding $D$ has to be identified twice,  (e.g. $\anoaUL$: $T_{S}$)\\ \hline
$\overline{O}$&short for $S \overline{O} R\overline{O} M\overline{O} $\\
$\heviaUL$& short for $ M\overline{O}(SM\overline{L}, RM\overline{L})$\\
$S\overline{O}\{X\}$& short for $S\overline{O} M\overline{O} X$\\
$D_1 X_1[ D_2 X_2]$& $D_1$ is dominating dimension, usually $D_1$ has more freedom, i.e. $X_2$ is a weaker restriction than $X_1$ \\ \hline
$C\overline{O}$& nothing can leak (not even the existence of any communication)\\
\end{tabular}}
 \caption{Naming Scheme}
  \label{Tab:NamingScheme}
\end{table}

\label{app:NotionMapping}

\iflong
 \begin{table} [b!]
\center
\resizebox{0.4\textwidth}{!}{%
 \begin{tabular}{ c  c c }

Framework & Notion & Equivalent to \\ \hline
AnoA& $\alpha_{SA}$&$ {\manySess{\static{\corrStandard{\bohliSSAc}}}}_{CR_1}$\\
&$\alpha_{RA}$&${\manySess{\static{\corrStandard{\anoaRA}}}}_{CR_1}$\\
&$ \alpha_{REL}$&${\manySess{\static{\corrStandard{\anoaREL}}}}_{CR_2}$\\
&$ \alpha_{UL}$&${\manySess{\static{\corrStandard{\anoaUL}}}}_{CR_2}$\\
& $\alpha_{sSA}$&$\manySess{\static{\corrStandard{\bohliSSAc}}}$\\
&$\alpha_{sRA}$&$\manySess{\static{\corrStandard{\anoaRA}}}$\\
&$ \alpha_{sREL}$\footnotemark&$\manySess{\static{\corrStandard{\anoaREL}}}$\\
&$ \alpha_{sUL}$\footnotemark&$\manySess{\static{\corrStandard{\anoaUL}}}$\\ \hline
Bohli's&$S/SA=R/SA$&$\bohliSA$\\
&$ R/SUP$& $\bohliRSUP$\\
&$ R/WUP$& $\bohliRWUP$\\
& $ R/PS$&$\bohliRPS$\\
&$ R/SUU$& $\bohliRSUU$\\
&$ R/WUU$&$\bohliRWUU$\\
&$ R/AN$&$\bohliRAN$\\
&$ R/WU$&$\bohliRWU$\\
& $ R/WA$&$\bohliRWA$\\
&$ S/SA^\circ$& $\bohliSSAc$\\
&$ S/SUP^\circ$&$\bohliSSUPc$\\
&$ S/WUP^\circ$&$\bohliSWUPc$\\
&$S/PS^\circ$&$\bohliSPSc$\\
&$S/SUU^\circ$&$\bohliSSUUc$\\
& $S/WUU^\circ$&$\bohliSWUUc$\\
&$S/AN^\circ$&$\bohliSANc$\\
&$S/WU^\circ$&$\bohliSWUc$\\
&$S/WA^\circ$&$\bohliSWAc$\\
&$S/X, R/X^\circ$&analogous\\ 
&$X^+$&$\corrStandard{\langle X\rangle}$\\
& $X^*$&$\corrStandard{\langle X^\circ\rangle}$\\ \hline
Hevia's&$UO$&$\noCorr{\heviaUO}$, $k=1$\\
&$SRA$&$\noCorr{\bohliSA}$, $k=1$\\
&$SA^*$&$\noCorr{\bohliRWU}$, $k=1$\\
&$SA$&$\noCorr{\bohliSSAc}$, $k=1$\\
&$UL$&$\noCorr{\heviaUL}$, $k=1$\\
&$SUL$&$\noCorr{\bohliSWUc}$, $k=1$\\
&$RA^*, RUL, RA$&analogous\\ \hline
Gelernter's&$R^{H, \tau}_{SA}$&$\noCorr{\gelernterSA} \iff \noCorr{\bohliSPSc}$, $k=1$\\
&$R^{H, \tau}_{SUL}$&$\noCorr{\gelernterSUL} \iff \noCorr{\bohliSWAc}$, $k=1$\\
&$R_X$& analogous Hevia: $\langle X\rangle$ \\
&$R^H_X$& analogous Hevia: $\corrNoComm{\langle X\rangle}$\\
&$\hat{R}^H_X$&  analogous Hevia $\corrOnlyPartnerSender{\langle X\rangle}$\\
\end{tabular}}
 \caption{Equivalences, $\langle X \rangle$ equivalence of $X$ used}
  \label{mapping}
\end{table}

\addtocounter{footnote}{-1}
\footnotetext{Under the assumption that in all cases $m_0$ is communicated like in $\alpha_{REL}$ of   \cite{backes17anoa} and in $\alpha_{SREL}$ of one older AnoA version \cite{backes14anoa}.}
\stepcounter{footnote}
\footnotetext{Under the assumption that the receiver in stage 2 can be another than in stage 1 like in $\alpha_{UL}$ of   \cite{backes17anoa}.}

\else
\vspace{-1cm}
 \begin{table} [h!]
\center
\resizebox{0.35\textwidth}{!}{%
 \begin{tabular}{ c  c c }

Framework & Notion & Equivalent to \\ \hline
AnoA& $\alpha_{SA}$&$ {\bohliSSAc}$\\
&$\alpha_{RA}$&${\anoaRA}$\\
&$ \alpha_{REL}$&${\anoaREL}$\\
&$ \alpha_{UL}$&${\anoaUL}$\\ \hline
Bohli's&$S/SA=R/SA$&$\bohliSA$\\
&$ R/SUP$& $\bohliRSUP$\\
&$ R/WUP$& $\bohliRWUP$\\
& $ R/PS$&$\bohliRPS$\\
&$ R/SUU$& $\bohliRSUU$\\
&$ R/WUU$&$\bohliRWUU$\\
&$ R/AN$&$\bohliRAN$\\
&$ R/WU$&$\bohliRWU$\\
& $ R/WA$&$\bohliRWA$\\
&$ S/SA^\circ$& $\bohliSSAc$\\
&$ S/SUP^\circ$&$\bohliSSUPc$\\
&$ S/WUP^\circ$&$\bohliSWUPc$\\
&$S/PS^\circ$&$\bohliSPSc$\\
&$S/SUU^\circ$&$\bohliSSUUc$\\
& $S/WUU^\circ$&$\bohliSWUUc$\\
&$S/AN^\circ$&$\bohliSANc$\\
&$S/WU^\circ$&$\bohliSWUc$\\
&$S/WA^\circ$&$\bohliSWAc$\\
&$S/X, R/X^\circ$&analogous\\  \hline
Hevia's&$UO$&${\heviaUO}$\\
&$SRA$&${\bohliSA}$\\
&$SA^*$&${\bohliRWU}$\\
&$SA$&${\bohliSSAc}$\\
&$UL$&${\heviaUL}$\\
&$SUL$&${\bohliSWUc}$\\
&$RA^*, RUL, RA$&analogous\\ \hline
Gelernter's&$R^{H, \tau}_{SA}$&${\gelernterSA} \iff {\bohliSPSc}$\\
&$R^{H, \tau}_{SUL}$&${\gelernterSUL} \iff {\bohliSWAc}$\\
&$R_X$& analogous Hevia: $\langle X\rangle$ \\
\end{tabular}}
 \caption{Equivalences, $\langle X \rangle$ equivalence of $X$ used}
  \label{mapping}
\end{table}
\fi

 \fi

\label{sec:allSymbols}
\begin{table} [h!]
\center
\resizebox{0.48\textwidth}{!}{%
  \begin{tabular}{ p{3cm} p{7.5cm} }

Symbol &Description\\ \hline
$U/U'$& Who sends/receives is equal for both scenarios.\\
$Q/Q'$&Which sender/receiver sends/receives how often  is equal for both scenarios.\\
$H/H'$& How many senders/receivers send/receive how often is equal for both scenarios.\\
$P/P'$&  Which messages are sent/received from the same sender/receiver  is equal for both scenarios.\\
$|U|/|U'|$& How many senders/receivers communicate is equal for both scenarios.\\
$|M|$& Messages in the two scenarios always have the same length.\\
$\EveryButSender$& Everything but the senders is identical in both scenarios.\\
$\EveryButRec, \EveryButMsg$& analogous\\
$\EveryButSenderMsg$&Everything but the senders and messages is identical in both scenarios.\\
$\EveryButReceiverMsg, \EveryButSenderRec$& analogous\\
\nothing&nothing will be checked; always true\\
$\EqualOrNothing$&If something is sent in both scenarios, the communication is the same.\\
\something & In every communication something must be sent.\\ \hline
$R_{SR}$& Adversary picks two sender-receiver-pairs. One of the senders and one of the receivers is chosen randomly.
For b=0 one of the adversary chosen sender-receiver pairs is drawn.
For b=1 the sender is paired with the receiver of the other pair.\\ 
$R_{SM}, R_{RM}$& analogous\\ \hline
$T_S$&Adversary picks two senders. The other sender might send the second time (stage 2).
For b=0 the same sender sends in both stages, for b=1 each sender sends in one of the stages.\\ 
$T_R$& analogous \\ \hline 
$M_{SR}$&Adversary picks two sender-receiver-pairs. Sender-receiver-pairs might be mixed.
For b=0 both adversary chosen sender-receiver-pairs communicate.
For b=1 both mixed sender-receiver-pairs communicate.\\ 
$M_{SM},M_{RM}$&analogous\\
\end{tabular}}
 \caption{Properties}
  \label{tab:allNotions}
\end{table}

\iflong
\vspace{-1cm}
\fi

\begin{table} [h!]
\center
\resizebox{0.48\textwidth}{!}{%
  \begin{tabular}{ p{5cm} p{5cm} }

Symbol &Description\\ \hline
$\mathcal{A}$&Adversary\\
$Ch$&Challenger\\
$\Pi$&\ac{ACN} protocol model\\
$b\in\{0,1\}$&Challenge bit\\
$g\in\{0,1\}$&Adversary's guess\\
$\underline{r}_0=(r_{0_1}, r_{0_2}, \dots, r_{0_l})$&Batch of communications\\ 
$r_{b_i}\in\{ \Diamond, (u,u',m,aux)\}$&Communication\\ 
$\Diamond$&Nothing is communicated\\
$(u,u',m,aux)$& $m$ is sent from $u $ to $u'$ with auxiliary information $aux$\\
$(\underline{r}_{0_1}, \dots,\underline{r}_{0_k})$& (First) Scenario\\
$\perp$&Abort game\\
\iflong
$CR(\underline{r}_0,\underline{r}_1)$ &Challenge rows of batches $\underline{r}_0,\underline{r}_1$\\
\fi
\iflong
$\Psi$&Challenge Number\\
$\#cr$&Number of challenge rows allowed in challenge\\
$n$&Number of challenges allowed\\
$c$&Number of challenge rows allowed in game\\
$\hat{U}$&Set of corrupted users\\
\fi
$\mathcal{U}$&Set of possible senders\\
$\mathcal{U'}$&Set of possible receivers\\
\end{tabular}}
 \caption{Symbols used in the Game}
  \label{tab:otherSymbols}
\end{table}

\begin{table}[h!]
\center
\resizebox{0.48\textwidth}{!}{%
  \begin{tabular}{ p{2.5cm} p{8cm} }

Symbol &Description\\ \hline
$\bohliRSUP $& \bohliRSUPLong \\
$\bohliRWUP$ & \bohliRWUPLong \\
$\bohliRPS $& \bohliRPSLong\\
$\bohliRSUU $&\bohliRSUULong\\
$\bohliRWUU$ &\bohliRWUULong\\
$\bohliRAN $&\bohliRANLong\\
$\bohliRWU $&\bohliRWULong\\
\mbox{$\bohliRWA $}&\bohliRWALong\\
$\bohliSSAc$& \bohliSSAcLong \\
$\bohliSSUPc $& \bohliSSUPcLong \\
$\bohliSWUPc$ & \bohliSWUPcLong \\
$\bohliSPSc $& \bohliSPScLong \\
$\bohliSSUUc $&\bohliSSUUcLong \\
$\bohliSWUUc$ &\bohliSWUUcLong \\
$\bohliSANc $&\bohliSANcLong \\
$\bohliSWUc $&\bohliSWUcLong \\
$\bohliSWAc $&\bohliSWAcLong \\
$\newAnoaSA $&\newAnoaSALong \\
$\anoaUL$ &\anoaULLong\\
$\newSAUO$ &\newSAUOLong\\
$\newSA$& \newSALong\\
$\loopixSUONe$& \loopixSUONeLong \\
Receiver notions&analogous\\
$\heviaUO$&\heviaUOLong\\
$\bohliSA$& \bohliSALong \\
$\anoaREL$ &\anoaRELLong\\
$\heviaUL$& \heviaULLong \\
$\newCONF$ &\newCONFLong\\
$ \loopixSRTPU$& \loopixSRTPULong \\
\iflong \hline
$X$&notion, standard assumptions\\ 
$X$&adaptive corruption\\ 
$\noCorr{X}$ & No corruption of users is allowed.\\
$\static{X}$ & Only static corruption of users is allowed.\\ 
$\corrNoComm{X}$&Corrupted users are not allowed to be chosen as senders or receivers.\\
$\corrOnlyPartnerSender{X}$&Corrupted users are not  allowed to be senders.\\ 
$\corrOnlyPartnerReceiver{X}$&Corrupted users are not  allowed to be receivers.\\
$\corrStandard{X}$&Corrupted nodes send/receive identical messages in both scenarios.\\
$X$&Communication of corrupted users not restricted.\\
$\manySess{X}$ &Sender and  receiver of challenge rows stay the same for this challenge and stage.\\
$X$&Sessions are not restricted.\\ 
$\challengeRows{X}$& $c$ communications in the two scenarios are allowed to differ.\\ 
\fi
\end{tabular}}
 \caption{Notions \iflong and Restriction Options\fi}
  \label{tab:allSymbols}
\end{table}

\iflong 
 \fi

\iflong
\else
\vspace{-0.5cm}
\section{Proof Sketches}
Here we include the proof sketches mentioned before.
\subsection{For Implication Completeness}
\vspace{-1em}
\label{completeSketch}
\begin{proofsketch}[Proof Sketch (continued)]

Tables \ref{Construction Idea}  gives the idea of some proofs.  $|U'|$ means  the number of receivers is leaked. The other abbreviations are used analogously. 
The attack is shortened to the format $\langle$(communications of instance 0 scenario 0),(communications of instance 1 of scenario 0)$\rangle$,$\langle$(communications of instance 0 scenario 1),(communications of instance 1 of scenario 1)$\rangle$ (if both instances of the scenario are equal, we shorten to:$\langle$(communications of instance 0 scenario 0)$\rangle$,$\langle$(communications of instance 0 scenario 1)$\rangle$ ) and all not mentioned elements  are equal in both scenarios. 
$m_0,m_1, m_2, m_3$ are messages with $|m_0|<|m_1|$, $|m_2|=|m_3|$ and $m_0 \neq m_1 \neq m_2\neq m_3$; $u_0,u_1,u_2$ senders and $u'_0,u'_1,u'_2$ receivers. 

\begin{table} [thb]
  \center
\resizebox{0.48\textwidth}{!}{
  \begin{tabular}{p{2.6cm} p{2.6cm} p{1cm} p{5.7cm} }

 $X_1$&$X_2$&$I$ & attack\\ \hline
$\bohliRSUP$&$ \newAnoaUL$& $|U'|$& $\langle((u'_0,m_0),switchStage,(u'_0,m_0)),$ $((u'_1,m_0),switchStage,(u'_1,m_0))\rangle,$ $\langle((u'_0,m_0),switchStage,(u'_1,m_0)),$ $((u'_1,m_0),switchStage,(u'_0,m_0))\rangle$\\
\hline
$\bohliRPS$&$\newCONF$& $m$&$((m_2)),$ $((m_3))$\\ 
$\bohliRPS$ &$ \newRAUO$&$|U'|, m$& $\langle((u'_0,m_0), (u'_0,m_2)),$ $((u'_0,m_0),(u'_1,m_3))\rangle,$ $\langle((u'_0,m_0), (u'_0,m_3)),$ $((u'_0,m_0),(u'_1,m_2))\rangle$\\ 
$\bohliRPS$&$\newRA$&$P'$&$\langle((u'_0,m_2), (u'_0,m_0),(u'_1,m_1)),$ $((u'_0,m_2), (u'_1,m_1),(u'_0,m_0))\rangle,$ $\langle ((u'_0,m_2),(u'_0,m_1),(u'_1,m_0)),$ $ ((u'_0,m_2),(u'_1,m_0),(u'_0,m_1))\rangle$\\
\end{tabular}}
 \caption{Some counter example ideas with $X'_1=X_1$}
  \label{Construction Idea}
\end{table}
\vspace{-1em}
\end{proofsketch}

\vspace{-1em}
\subsection{For Notions of other Frameworks}
\vspace{-1em}
\label{OtherFrameworksProof}

\begin{figure*}[h!] 
  \center
  \includegraphics[width=0.95\textwidth]{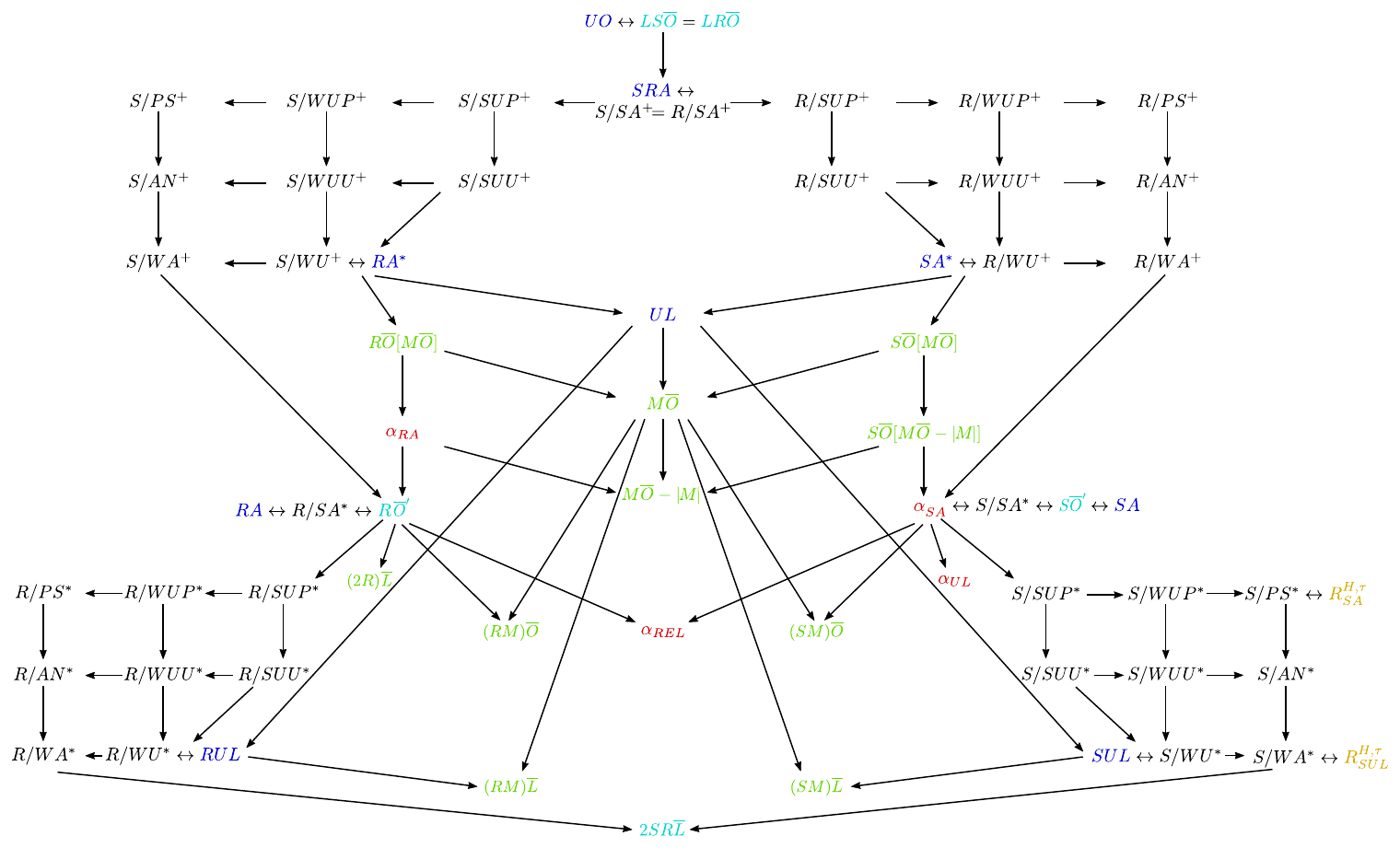}

\caption{\mbox{Our hierarchy with the mapping of the other works (Bohli's, \textcolor{red}{AnoA}, \textcolor{blue}{Hevia's}, \textcolor{yellow}{Gelernter's framework}, \textcolor{cyan}{Loopix's \ac{ACN}} and  \textcolor{green}{new notions})}}

 \label{fig:hierarchyold}
\end{figure*}

%

We define new notions  as $\gelernterSA$=\nothing $\land G$ and  $\gelernterSUL$= \nothing $\land Q \land G$ that are equivalent to some of the already introduced notions to make the mapping to the Gelernter's notions obvious.
 They use a new property $G$, in which scenarios are only allowed to differ in the sender names.
 \begin{definition}[Property $G$]
 Let $\mathcal{U}$ be the set of all possible senders,  $L_{b_i}$  the sender-message linking for scenario $b\in \{0,1\}$.
We say that $G$ is met, iff a permutation $perm$ on $\mathcal{U}$ exists such that for all $ (u,M)\in L_{0_k}: (\text{perm}(u),M) \in L_{1_k}$.
 \end{definition}

\begin{theorem}It holds that
\begin{align*}
\iflong(\epsilon, \delta)-\fi\gelernterSA &\iff \iflong(\epsilon, \delta)-\fi\bohliSPSc,\\
\iflong(\epsilon, \delta)-\fi\gelernterSUL &\iff \iflong(\epsilon, \delta)-\fi\bohliSWAc\text{.}
\end{align*}
\end{theorem}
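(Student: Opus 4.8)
The plan is to prove both equivalences in the same way, following the indirect-proof template used for Theorem~\ref{loopixUO} and Theorem~\ref{the:impl}: to show $X_1 \iff X_2$, it suffices to show that every valid attack against $X_1$ is a valid attack against $X_2$ and vice versa; the contradiction argument (a protocol achieving one but not the other yields a successful attack against the first, contradicting the assumption) then goes through verbatim, and the $(\epsilon,\delta)$-parameters carry over unchanged because the attack itself is not modified, only re-interpreted. So the entire content reduces to a purely combinatorial statement about the scenario-validity predicates.

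For $\gelernterSA \iff \bohliSPSc$: both notions include \something\ by definition, so the only thing to compare is $G$ versus $\EveryButSender \land P$. First I would argue the direction $\bohliSPSc \Rightarrow \gelernterSA$, i.e. every attack valid for $\gelernterSA$ is valid for $\bohliSPSc$ — wait, the direction matters: a valid attack for the weaker notion must be shown valid for the stronger one. Concretely, I would show: (i) if the scenarios satisfy $\EveryButSender \land P$ (as required in a $\bohliSPSc$-attack), then a sender permutation $\mathrm{perm}$ with $(u,M)\in L_{0_k} \Rightarrow (\mathrm{perm}(u),M)\in L_{1_k}$ exists — indeed, since $P$ forces $P_0=P_1$, the multiset of message-sets $\{M : (u,M)\in L_{0_k}\}$ equals $\{M : (u,M)\in L_{1_k}\}$, so one can match senders of scenario~0 to senders of scenario~1 sending the identical message-set, and extend arbitrarily to a full permutation of $\mathcal{U}$; hence $G$ holds. (ii) Conversely, if $G$ holds then such a permutation exists, which means the partition of messages by sender is the same sorted collection of sets in both scenarios, so $P_0 = P_1$, i.e.\ $P$ holds; and since $G$ only permutes sender names while keeping the message-sets attached, receivers, messages, and $aux$ of each communication are untouched, giving $\EveryButSender$. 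One subtlety to handle carefully: $\EveryButSender$ in Definition~\ref{def:properties} is stated per-index (${r_1}_j = (\mathbf{u_{1_j}}, u'_{0_j}, m_{0_j}, aux_{0_j})$), whereas $G$ is stated only at the level of sender-message linkings $L_{b_k}$; so I need to check that Gelernter's relaxed notion is genuinely "differ only in sender names" at the communication level, which the excerpt's prose ("scenarios are only allowed to differ in the sender names") supports — I would make this identification explicit and note it is the natural reading.

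For $\gelernterSUL \iff \bohliSWAc$: here $\gelernterSUL = \text{\something} \land Q \land G$ and $\bohliSWAc = \text{\something} \land \EveryButSender \land Q \land P$. The property $Q$ (sender frequencies equal) appears verbatim in both, so it is inherited in either direction with no work. The remaining comparison is $G$ versus $\EveryButSender \land P$, which is exactly the argument just given for the previous equivalence; I would simply invoke it. Then the contradiction wrapper (analogous to the proof of Claim~\ref{claimGreen}) finishes both statements.

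The main obstacle — really the only non-routine point — is pinning down precisely how Gelernter's permutation property $G$, which is phrased at the level of sender-message linkings, translates to the per-communication form of $\EveryButSender$, and making sure the translation is faithful in both directions (in particular that $G$ does not secretly allow, say, reordering communications within a batch or changing which receiver gets which message). Once that identification is fixed, everything else is bookkeeping: $P_0 = P_1 \iff$ "a sender-renaming permutation exists", and the $(\epsilon,\delta)$-transfer is automatic since no attack is altered. I would therefore state the equivalence $G \iff (\EveryButSender \land P)$ as a small lemma, prove it by the matching-argument above, and then derive both displayed equivalences as immediate corollaries via the standard indirect-proof scheme.
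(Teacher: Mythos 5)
Your proposal matches the paper's proof: both reduce the theorem to the combinatorial equivalence between $P$ (equal sender-message partitions) and $G$ (existence of a sender permutation preserving each sender's message set), inherit $Q$ unchanged for the second equivalence, and wrap everything in the standard attack-transfer/indirect-proof argument with the $(\epsilon,\delta)$-parameters carried over because the attack itself is unmodified. The one point where you go beyond the paper is in explicitly flagging that $G$, being phrased only at the level of sender-message linkings, must also be checked to yield the per-index condition $\EveryButSender$ (receivers, messages, $aux$ unchanged); the paper's sketch argues only that $P$ is fulfilled and leaves this identification implicit, so your extra care is warranted rather than a divergence in approach.
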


\begin{proofsketch}
Analogous to Theorem \ref{the:impl}:
\iflong
$\gelernterSA \Rightarrow \bohliSPSc$: Every attack on $\bohliSPSc$ is valid against $\gelernterSA$: Since $P$ is fulfilled, for every sender  $u_0$ in the first scenario, there exists a sender $u_0'$ in the second scenario sending the same messages.
Hence, the permutation between senders of the first and second scenario exists.

	$\gelernterSA \Leftarrow \bohliSPSc$: Every attack on $\gelernterSA$ is valid against $\bohliSPSc$: Since there exists a permutation between the senders of the first and second scenario sending the same messages, the partitions of messages sent by the same sender are equal in both scenarios, i.e.
$P$ is fulfilled.

$\gelernterSUL \iff \bohliSWAc:$
	 $Q$ is required in both notions by definition.
Arguing that $P$ resp.
$G$ is fulfilled given the other attack is analogous to $\gelernterSA \iff \bohliSPSc$.
\else
See long version for details.
\fi
\end{proofsketch}

\vspace{-1.5em}
\subsection{For Loopix's Notions 1}
\vspace{-1em}
\label{Loopix1Sketch}
 \begin{table} 
\center
\resizebox{0.35\textwidth}{!}{%
  \begin{tabular}{ c c c }

Notion&Name&\iflong Aspects\else Properties \fi \\ \hline
$\loopixSUO$ &Loopix's Sender Unobservavility& $ \EqualOrNothing$\\ 
$\loopixRUO$ & Loopix's Receiver Unobservability&$\EqualOrNothing$\\
$\loopixSUONe$ &Restricted Sender Unobservability& $\centernot \rightarrow \land \EveryButSender$  \\
$\loopixRUONe$ &Restricted Receiver Unobservability& $\centernot \rightarrow' \land \EveryButRec$ \\
\end{tabular}}
 \caption{Definition of the Loopix notions}
  \label{NotionsLoopix}
\end{table}

We define $\loopixSUO$ and $\loopixRUO$ according to Table \ref{NotionsLoopix}. Therefore, we need the property that if something is sent in both scenarios, it is the same.

 \begin{definition}[$\EqualOrNothing$]
 Let the checked batches be $\underline{r_0},\underline{r_1}$, including communications ${r_0}_j, {r_1}_j, j \in \{1, \dots l\} $. We say $\EqualOrNothing$  is met, iff for all $j \in \{1, \dots l\}$:
\[    \EqualOrNothing: {r_0}_j ={r_1}_j \lor {r_0}_j =\Diamond \lor {r_1}_j = \Diamond \]
 \end{definition} 

\begin{theorem}
\label{loopixUO}
It holds that
$\iflong(\epsilon, \delta)-\fi\heviaUO \iff \iflong(\epsilon, \delta)-\fi \loopixSUO\text{.}$
\end{theorem}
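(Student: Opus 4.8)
\textbf{Proof plan for Theorem~\ref{loopixUO} ($\heviaUO \iff \loopixSUO$).}
The plan is to prove the two directions separately, in each case by the standard ``attack-transformation'' argument already used for Theorems~\ref{loopixUO} (the earlier one), \ref{loopixUO2}, and throughout Section~\ref{sec:hierarchy}: to show $X_1 \Rightarrow X_2$, take a purported successful adversary against $X_2$, transform it into a successful adversary against $X_1$, and conclude by contradiction. Here both notions use only simple properties, so no instance bits are involved and the batch queries are just pairs $(\underline{r}_0,\underline{r}_1)$.

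First I would handle $\heviaUO \Rightarrow \loopixSUO$. Observe that $\loopixSUO$ is defined by the property $\EqualOrNothing$, which only restricts the adversary's batch queries \emph{more} than $\heviaUO$ (which is defined by \nothing\ and checks nothing). Hence every query valid for the $\loopixSUO$-challenger is also valid for the $\heviaUO$-challenger, so any adversary attacking $\loopixSUO$ is already a legal adversary attacking $\heviaUO$ with exactly the same distinguishing probability. Therefore if $\Pi$ achieves $\heviaUO$ it achieves $\loopixSUO$. This direction is essentially immediate from the definitions.

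Next I would handle the converse $\loopixSUO \Rightarrow \heviaUO$, which is the substantive direction and the one mirroring the proof sketch of the earlier Theorem~\ref{loopixUO}. Given an adversary $\mathcal{A}$ against $\heviaUO$ that submits batch queries $(\underline{r}_0,\underline{r}_1)$ with arbitrary differing communications, I would build an adversary $\mathcal{B}$ against $\loopixSUO$ as follows: for each batch query, process it one challenge row at a time, and for every index $j$ where $r_{0_j}\neq r_{1_j}$, replace that single row by the pair $(r_{0_j},\Diamond)$ in one new batch query and $(\Diamond, r_{1_j})$ in a following one (using the multiple-batches extension and, since there is no challenge-number bound in $\loopixSUO$, fresh challenges as needed); rows where $r_{0_j}=r_{1_j}$ are passed through unchanged and are trivially $\EqualOrNothing$-valid. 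Each constructed query satisfies $\EqualOrNothing$ by construction, so $\mathcal{B}$ is a legal $\loopixSUO$-adversary; and the sequence of protocol inputs $\mathcal{B}$ induces for challenge bit $b$ is exactly the sequence $\mathcal{A}$ would have induced for bit $b$, so $\mathcal{B}$ sees the same observations and wins with the same probability. Thus a successful attack on $\heviaUO$ yields a successful attack on $\loopixSUO$, contradicting that $\Pi$ achieves $\loopixSUO$. Assembling the two directions gives the equivalence.

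The main obstacle, and the only place needing care, is bookkeeping in the second direction: one $\heviaUO$ challenge row becomes two $\loopixSUO$ batch queries (one per side), so the challenge-complexity and challenge-cardinality accounting changes — this is why the earlier analogous theorem carried a factor like $2c$ on the challenge rows and a doubling of $\delta$. I would state the theorem (as the excerpt does) in the plain ``$\Pi$ achieves $X$'' form where these constant factors are absorbed into negligibility, or alternatively track them explicitly as $(c,\epsilon,\delta)$-parameters exactly as in Theorems~\ref{loopixUO} and~\ref{loopixUO2}; either way the argument that the induced protocol-input distributions coincide is the crux, and it is straightforward because only simple properties are in play and $\Diamond$ rows contribute nothing to either scenario's other side.
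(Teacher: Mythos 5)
Your proposal matches the paper's proof: the forward direction is immediate because $\EqualOrNothing$ only restricts queries more than \nothing, and the converse uses exactly the paper's construction of splitting each differing challenge row $(r_0,r_1)$ into the two $\EqualOrNothing$-valid batches $(r_0,\Diamond)$ and $(\Diamond,r_1)$, with the same contradiction template and the same parameter-doubling observation. No substantive difference from the paper's argument.
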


\begin{proofsketch}
Implications are proven analogously to the ones in Theorem \ref{the:impl}.
$\heviaUO \implies \loopixSUO$  by definition.
$\loopixSUO \implies \heviaUO$ because for every challenge row $(r_0,r_1)$ in the attack on $\heviaUO$, we can create two batches $(r_0, \Diamond  )$ and $( \Diamond , r_1)$.
\end{proofsketch}

\vspace{-1.5em}
\subsection{For Loopix's Notions 2}
\vspace{-1em}
\label{Loopix2Sketch}

We define $\loopixSUONe$ and $\loopixRUONe$ according to Table \ref{NotionsLoopix}.
To formulate these notions we need a new property that some sender/receiver is not participating in any communication in the second scenario:

\begin{definition}[Property $\centernot\rightarrow$]
Let $u$ be the sender of the first scenario in the first challenge row of this challenge.
We say that $\centernot \rightarrow$ is fulfilled iff  for all $j:$ $  u_{1_j}\neq u$.
(Property $\centernot \rightarrow'$ is defined analogously for receivers.) 
\end{definition}

\begin{theorem}
\label{loopixUO2}
It holds that $\bohliSSAc \iff \loopixSUONe$.
 
\end{theorem}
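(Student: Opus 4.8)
The plan is to prove $\bohliSSAc \iff \loopixSUONe$ in the same indirect style used throughout the paper (cf. the proof of Theorem~\ref{the:impl} and of Theorem~\ref{loopixUO}): to show $X_1 \implies X_2$, take any valid attack on $X_2$ and argue it is already a valid attack on $X_1$ (or can be transformed into one of equal success), so that a protocol achieving $X_1$ cannot fail $X_2$. Recall the definitions: $\bohliSSAc = {}$\something$\, \land \EveryButSender$, and $\loopixSUONe = \centernot\rightarrow \,\land\, \EveryButSender$, where $\centernot\rightarrow$ requires that the sender $u$ of the first challenge row of the first scenario does not appear as a sender anywhere in the second scenario.

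First I would prove the easy direction $\loopixSUONe \implies \bohliSSAc$: this does \emph{not} hold in full generality as stated, and the honest version requires a change in parameters exactly as in the (omitted) body referenced at ``Appendix~\ref{Loopix2Sketch}''. The subtlety is that $\bohliSSAc$ permits \emph{any} choice of differing senders (in particular both scenarios may use all users), whereas $\loopixSUONe$ forbids the distinguished sender from being active in scenario~1. So given an attack $\mathcal{A}$ on $\bohliSSAc$ whose scenarios both use all users (otherwise $\mathcal{A}$ is already a valid $\loopixSUONe$-attack), I would build a modified attack by replacing, in the second batch, every occurrence of the two senders of the first challenge row with an arbitrary other sender not used there, obtaining a ``hybrid'' scenario. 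Writing $P(0\mid 0), P(0\mid 1), P(0\mid 2)$ for the probabilities that $\mathcal{A}$ outputs $0$ on the original first, original second, and hybrid scenarios respectively, the worst case for the attacker is $P(0\mid 2) = \tfrac{P(0\mid 0)+P(0\mid 1)}{2}$ (else replace whichever original scenario $b$ minimizes $|P(0\mid 2)-P(0\mid b)|$ with the hybrid and get strictly better parameters). Then from $P(0\mid 0) > P(0\mid 1) + 2\delta$ one derives $P(0\mid 0) > P(0\mid 2) + \delta$, so the modified attack breaks $(c,0,\delta)\text{-}\loopixSUONe$ — i.e.\ one gets $(c,0,2\delta)\text{-}\bohliSSAc \Leftarrow (c,0,\delta)\text{-}\loopixSUONe$, with an analogous statement $(c,\epsilon-\ln 0.5,\delta)\text{-}\bohliSSAc \Leftarrow (c,\epsilon,\delta)\text{-}\loopixSUONe$ for the multiplicative form.

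For the other direction $\bohliSSAc \implies \loopixSUONe$ the argument is immediate: every valid attack on $\loopixSUONe$ satisfies $\centernot\rightarrow \land \EveryButSender$; since $\EveryButSender$ already forbids the empty communication in the challenge rows and $\centernot\rightarrow$ imposes no constraint that would conflict with \something, any $\loopixSUONe$-attack whose batches contain no $\Diamond$ is a valid $\bohliSSAc$-attack, and one may assume without loss that no $\Diamond$ appears in the second scenario of a $\loopixSUONe$-attack because a communication by a user other than $u$ can always be used in its place. Hence a protocol with $\bohliSSAc$ has $\loopixSUONe$. Combining the two directions gives the claimed equivalence (up to the stated parameter shift). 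I would finish by noting, as the paper does, that $\loopixSUONe$ (equivalently $\bohliSSAc$) coincides with AnoA's sender anonymity $\alpha_{SA}$, and the receiver analogue $\loopixRUONe \iff \bohliRSAc$, which is weaker than AnoA's receiver anonymity.

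The main obstacle is the parameter bookkeeping in the $\Leftarrow$ direction: the naive claim ``$\bohliSSAc \iff \loopixSUONe$'' is only literally true when one is cavalier about the exact $(\epsilon,\delta)$ constants, and making it rigorous requires the hybrid-scenario averaging argument above to absorb the factor-of-two loss in $\delta$ (or the $-\ln 0.5$ shift in $\epsilon$). Everything else — the forward implication and the reduction that a $\loopixSUONe$-attack using $\Diamond$ can be rewritten without it — is routine and follows the templates already established in the paper.
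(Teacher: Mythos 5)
Your proposal is correct and follows essentially the same route as the paper's proof: the $\bohliSSAc \Rightarrow \loopixSUONe$ direction is immediate because every valid $\loopixSUONe$-attack is by definition a valid $\bohliSSAc$-attack, and the converse uses exactly the paper's hybrid-scenario construction with the averaging/worst-case argument $P(0\mid 2)=\tfrac{P(0\mid 0)+P(0\mid 1)}{2}$ and the resulting factor-of-two loss in $\delta$. The only quibble is cosmetic: you initially label $\loopixSUONe \implies \bohliSSAc$ as the ``easy direction'' before correctly treating it as the one that needs the hybrid argument and the parameter shift.
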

\begin{proofsketch}
Analogously to Theorem \ref{the:impl}.
\iflong
\todo[inline]{Here are n's that do not belong here}
$\Rightarrow$: Every attack on $\loopixSUONe$  is by definition a valid attack on $\bohliSSAc$.

$\Leftarrow$: Given an attack $\mathcal{A}$ on $(c,n , 0, 2\delta)-\bohliSSAc$ where both scenarios of a challenge use all users (otherwise it would be a valid attack on $\loopixSUONe$).
Let $(r_{2_1}, \dots r_{2_l})$ be the same batch as the second of  $\mathcal{A}$ except that the senders of the first challenge row are always replaced with an arbitrary other sender that was not used in the first challenge row.
Let 
$P(0|2)$ be the probability that $\mathcal{A}$ outputs 0, when the new batches are run, $P(0|0)$ when the first scenario of $\mathcal{A}$ is run and $P(0|1)$ when the second is run.
In the worst case for the attacker $P(0|2)=\frac{P(0|0)+P(0|1)}{2}$ (otherwise we would replace the scenario $b$ where $|P(0|2)-P(0|b)|$ is minimal with the new one and get better parameters in the following calculation).
Since $\mathcal{A}$ is an attack on $(c,n , 0, 2\delta)-\bohliSSAc$, $P(0|0)>  P(0|1) + n 2\delta$.
Transposing and inserting the worst case for $P(0|2)$ leads to: $ P(0|0) > 2P(0|2)- P(0|0)+ 2n \delta \iff  P(0|0)> P(0|2) + n \delta $.
Hence, using $\mathcal{A}$ with the adapted scenario is an attack on  $(c,n, 0, \delta)- \loopixSUONe$\footnote{An analogous argumentation works for $(c, \epsilon- \ln(0.5), \delta)-\bohliSSAc \Leftarrow (c, \epsilon, \delta)- \loopixSUONe$.}.
\else
See long version for details.
\fi
\end{proofsketch}

\todo[inline]{Replace long version placeholder!}
\fi

\iflong
\section{Hierarchy And Tables}
\label{sec:HierarchyAndTables}

On the next page the hierarchy can be found combined with the symbol tables (Fig. \ref{fig:hierarchyColored} and Tables of \ref{tab:combinationOfTables}). Further, Fig. \ref{fig:hierarchyInefficient} and  Fig. \ref{fig:hierarchyAnnotated}  highlight special parts of the hierarchy and Fig. \ref{fig:hierarchyold} presents the mapping of the notions of the other frameworks to ours.
\begin{figure*}[h!]
\center
\includegraphics[width=0.8\textwidth]{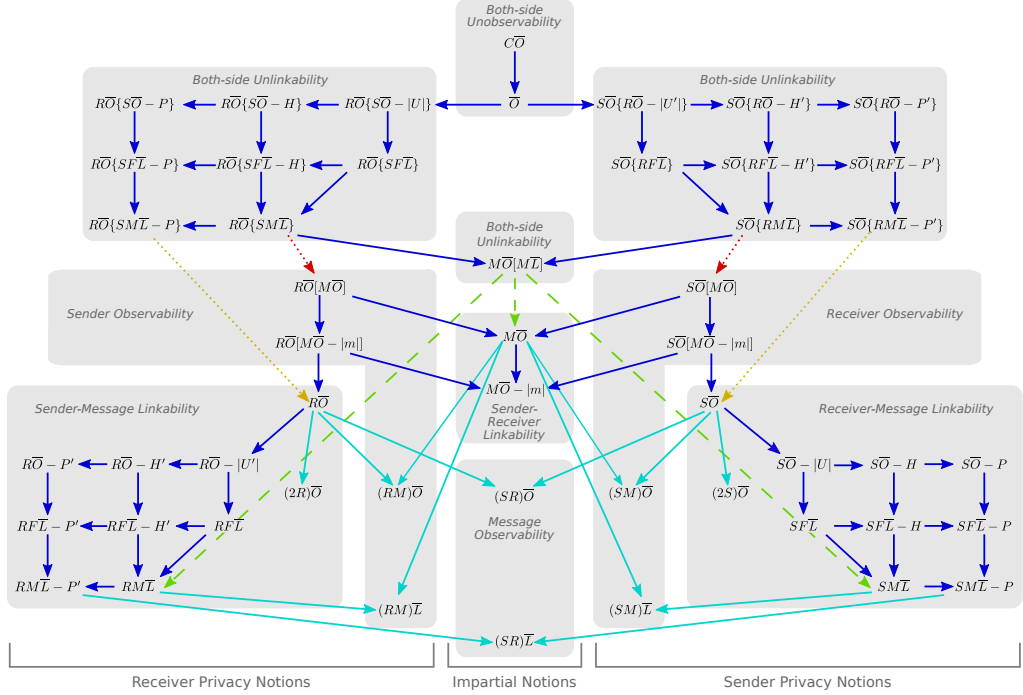}
\caption{Our new hierarchy of privacy notions divided into sender, receiver and impartial notions and clustered by leakage type.}\label{fig:hierarchyColored}
\end{figure*}

\begin{table*}
\centering
\begin{subtable}[t]{0.3\linewidth}
\resizebox{\linewidth}{!}{%
\begin{tabular}{ p{2.5cm} p{5.32cm} }
Usage&Explanation\\
$D \in \{S,R,M\}$& Dimension $\in \{$Sender, Receiver, Message$\}$\\
Dimension $D$\linebreak \emph{not} mentioned& Dimension can leak \\
Dimension $D$\linebreak mentioned &Protection focused on this dimension exists\\ \hline
$D \overline{O}$& not even the participating items regarding D leak,(e.g. $S\overline{O}$: not even $U$ leaks)\\
$DF \overline{L}$& participating items regarding D can leak, but not which exists how often (e.g. $SF\overline{L}$: $U$leaks, but not $Q$)\\
$DM \overline{L}$& participating items regarding D and how often they exist can leak ( e.g. $SM\overline{L}$: $U,Q$ leaks)\\ \hline
$X -Prop, $& like X but additionally Prop can leak\\
$Prop \in \{|U|,H,\allowbreak P,|U'|, H',P', |M| \}$&\\  \hline
$(D_1 D_2)\overline{O}$& uses $R_{D_1 D_2}$; participating items regarding $D_1,D_2$ do not leak, (e.g. $(SR)\overline{O}$: $R_{SR}$)\\ 
$(D_1 D_2)\overline{L}$& uses $M_{D_1 D_2}$; participating items regarding $D_1,D_2$ can leak, (e.g. $(SR)\overline{L}$: $M_{SR}$)\\ 
$(2D)\overline{L}$& uses $T_{D}$; it can leak whether two participating item regarding $D$ are the same,  (e.g. $\anoaUL$: $T_{S}$)\\ \hline
$\overline{O}$&short for $S \overline{O} R\overline{O} M\overline{O} $\\
$\heviaUL$& short for $ M\overline{O}(SM\overline{L}, RM\overline{L})$\\
$S\overline{O}\{X\}$& short for $S\overline{O} M\overline{O} X$\\
$D_1 X_1[ D_2 X_2]$& $D_1$ is dominating dimension, usually $D_1$ has more freedom, i.e. $X_2$ is a weaker restriction than $X_1$ \\ \hline
$C\overline{O}$& nothing can leak (not even the existence of any communication)\\
\end{tabular}}
\subcaption{Naming Scheme}
\label{Tab:NamingScheme}
\end{subtable}
\hfill
\begin{subtable}[t]{0.2295\linewidth}
\resizebox{\linewidth}{!}{%
  \begin{tabular}{ p{2cm} p{3.33cm} }
Notion&Properties \\
$\loopixSRTPU$ &\something $ \land \EveryButSenderRec \land M_{SR}$  \\
$\anoaREL$ &\something$\land \EveryButSenderRec \land R_{SR}$ \\
$\newCONFWOL$ &\something $ \land \EveryButMsg$ \\
$\newCONF$ &\something $ \land \EveryButMsg \land |M| $ \\
$\heviaUL$&  \something$\land Q \land Q'$\\
$\bohliSA$& \something \\
$\heviaUO$& \nothing\\ \hline
$\bohliSSAc$& \something $\land \EveryButSender$\\
$\bohliSSUPc $& \something$\land \EveryButSender \land |U|$\\
$\bohliSWUPc$ &\something$\land \EveryButSender \land H$\\
$\bohliSPSc $&\something $\land \EveryButSender \land P$\\
$\bohliSSUUc$ &\something$\land \EveryButSender \land U$ \\
$\bohliSWUUc$ &\something $\land \EveryButSender \land U \land H$\\
$\bohliSANc $&\something $\land \EveryButSender \land U \land P$\\
$\bohliSWUc$ &\something $\land \EveryButSender \land Q$\\
$\bohliSWAc$ &\something $\land \EveryButSender \land Q \land P$\\
$\anoaUL$ &\something$\land \EveryButSender \land T_S $ \\
$\bohliRSAc$ \ etc.&analogous\\ \hline
\mbox{$\newAnoaSAWOL$}&\something $ \land \EveryButSenderMsg $ \\
\mbox{$\newAnoaSA$}&\something $ \land \EveryButSenderMsg \land |M| $ \\
$\newSAUO$ &\something $  \land \EveryButSenderMsg \land R_{SM}$ \\
$\newSA$&\something $  \land \EveryButSenderMsg \land M_{SM} $ \\
\mbox{$\anoaRA$} \ etc.&analogous \\ \hline
$\sgame{X'}$& Properties of $X'$, remove $\EveryButRec$ \\
\multicolumn{2}{p{4.5cm}}{for $X' \in \{\bohliRSAc,$ $ \bohliRSUPc,$ $\bohliRWUPc,$ $\bohliRPSc,$ $\bohliRSUUc,\bohliRWUUc,$ $\bohliRANc,$ $\bohliRWUc,$ $\bohliRWAc\}$}\\
$\rgame{X'}$ &analogous, remove $\EveryButSender$ \\ 
\end{tabular}}
\subcaption{Definition of the notions\iflong ~for all corruption options as defined in Table~\ref{tab:corruption options}\fi. A description of simple properties was given in Table~\ref{tab:information}. }
\label{NotionsDefinition2}
\end{subtable}
\hfill
\begin{subtable}[t]{0.4495\linewidth}
\resizebox{\linewidth}{!}{%
\begin{tabular}{ p{1.5cm} p{9cm}}
Symbol &Description\\ \hline
$U/U'$& Who sends/receives is equal for both scenarios.\\
$Q/Q'$&Which sender/receiver sends/receives how often  is equal for both scenarios.\\
$H/H'$& How many senders/receivers send/receive how often is equal for both scenarios.\\
$P/P'$&  Which messages are sent/received from the same sender/receiver  is equal for both scenarios.\\
$|U|/|U'|$& How many senders/receivers communicate is equal for both scenarios.\\
$|M|$& Messages in the two scenarios always have the same length.\\
$\EveryButSender$& Everything but the senders is identical in both scenarios.\\
$\EveryButRec, \EveryButMsg$& analogous\\
$\EveryButSenderMsg$&Everything but the senders and messages is identical in both scenarios.\\
$\EveryButReceiverMsg, \EveryButSenderRec$& analogous\\
\nothing&nothing will be checked; always true\\
$\EqualOrNothing$&If something is sent in both scenarios, the communication is the same.\\
\something & In every communication something must be sent.\\ \hline
$R_{SR}$& Adversary picks two sender-receiver-pairs. One of the senders and one of the receivers is chosen randomly.
For b=0 one of the adversary chosen sender-receiver pairs is drawn.
For b=1 the sender is paired with the receiver of the other pair.\\ 
$R_{SM}, R_{RM}$& analogous\\ \hline
$T_S$&Adversary picks two senders. The other sender might send the second time (stage 2).
For b=0 the same sender sends in both stages, for b=1 each sender sends in one of the stages.\\ 
$T_R$& analogous \\ \hline 
$M_{SR}$&Adversary picks two sender-receiver-pairs. Sender-receiver-pairs might be mixed.
For b=0 both adversary chosen sender-receiver-pairs communicate.
For b=1 both mixed sender-receiver-pairs communicate.\\ 
$M_{SM},M_{RM}$&analogous\\
\end{tabular}}
\subcaption{Properties}
\label{tab:allNotions2}
\end{subtable}
\caption{Tables for our naming scheme~(\subref{Tab:NamingScheme}), notion definitions~(\subref{NotionsDefinition2}) and property definitions~(\subref{tab:allNotions2})}
\label{tab:combinationOfTables}
\end{table*}

\begin{figure*}
\center
\includegraphics[width=0.8\textwidth]{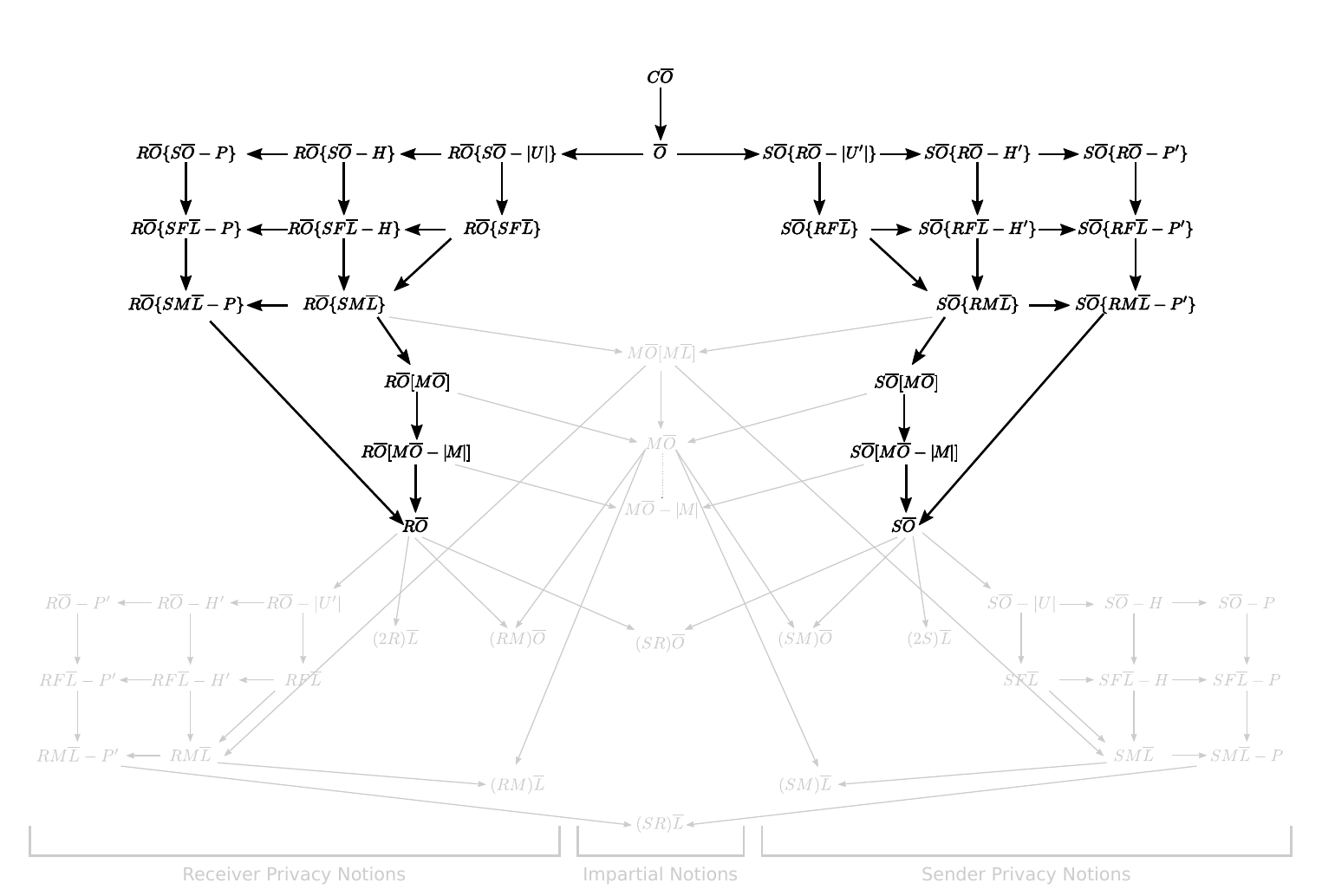}
\caption{Protocols for notions highlighted are inefficient due to a result by Gelernter~\cite{gelernter13limits}.}\label{fig:hierarchyInefficient}
\end{figure*}

\begin{figure*}
\center
\includegraphics[width=0.8\textwidth]{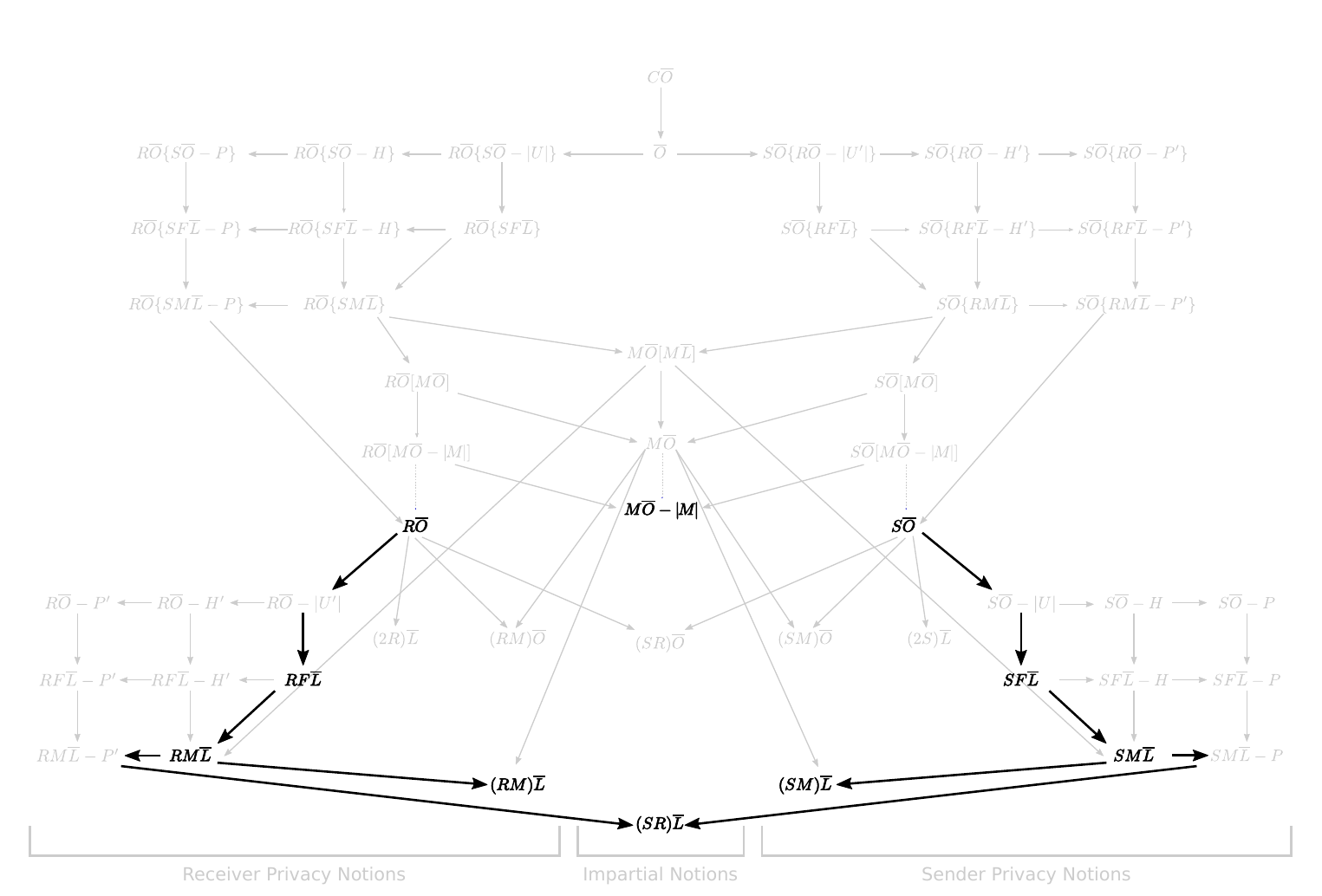}
\caption{Depicted notions are a first guess on which notions might be important based on informal and formal usage in the related work.}\label{fig:hierarchyAnnotated} 
\end{figure*}

\fi


\iflong
\begin{figure*}[t!] 
  \center
  \includegraphics[width=0.95\textwidth]{images/hierarchy_author_notations}

\caption{\mbox{Our hierarchy with the mapping to other works (Bohli's, \textcolor{red}{AnoA}, \textcolor{blue}{Hevia's}, \textcolor{yellow}{Gelernter's framework}, \textcolor{cyan}{Loopix's \ac{ACN}} and  \textcolor{green}{new notions})}}

 \label{fig:hierarchyold}
\end{figure*}
\fi
%
\label{hierarchyFrameworks}

\end{document}